\newcommand{\mathsym}[1]{{}}
\newcommand{\unicode}[1]{{}}
\newcommand{\R}{{\mathbb{R}}}
\newcommand{\B}{{\mathbb{B}}}
\newcommand{\DDD}{{\mathcal{D}}}
\newtheorem{theorem}{Theorem}
\newtheorem{corollary}[theorem]{Corollary}
\newtheorem{definition}{Definition}
\newtheorem{example}{Example}
\newtheorem{lemma}[theorem]{Lemma}
\newtheorem{proposition}[theorem]{Proposition}
\newcommand{\be}{\begin{equation}}
\newcommand{\bel}[1]{\begin{equation}\label{#1}}
\newcommand{\qe}{\end{equation}}
\newcommand{\ee}{\end{equation}}
\newcommand{\eeq}{\end{equation}}
\newcommand{\ba}{\begin{eqnarray}}
\newcommand{\ea}{\end{eqnarray}}
\begin{document}

\title{Constraints on physical reality arising from a formalization of knowledge}

\author{David H. Wolpert \\
 Santa Fe Institute, 1399 Hyde Park Road, Santa Fe, NM, 87501\\
 \texttt{david.h.wolpert@gmail.com}
 }


\begin{abstract}
There are (at least) four ways that an agent can acquire information concerning the state of the universe: via
observation,  control,  prediction, and via retrodiction, i.e., memory. Each of these four ways of acquiring information seems
to rely on a different kind of physical device (resp., an observation device, a control device, etc.).
However it turns out that certain mathematical structure is common to those four types of 
device. Any device that possesses a certain subset of that structure is known as an ``inference device'' (ID). 

Here I review some of the properties of IDs, including their relation with Turing
machines, and (more loosely) quantum mechanics. I also review the bounds
of the joint abilities of any set of IDs to know facts about the physical
universe that contains them. These bounds constrain the possible 
properties of any universe that contains agents who can acquire information
concerning that universe.

I then extend this previous work on IDs, by adding to the definition of IDs some of the other mathematical structure
that is common to the four ways of acquiring information about the universe but is not captured in the 
(minimal) definition of IDs. I discuss these extensions of IDs in the context of 
epistemic logic (especially possible worlds formalisms like Kripke structures and Aumann
structures). In
particular, I show that these extensions of IDs are not subject to the problem of logical omniscience that plagues 
many previously studied forms of epistemic logic.

\end{abstract}

\maketitle

\section{Introduction}

Ever since Wheeler discussed ``it from bit"~\cite{whee90}, there has been great
interest in what constraints on the properties of the universe can be derived using some appropriate
mathematical formulation of information. Some of this work relies on Shannon information theory~\cite{caticha2011entropic,goyal2010information,goyal2012information,hartle2005physics},
and some of it on Fisher information theory~\cite{frieden2004science}. There has also been 
work on this topic that focuses on the \emph{processing} of information, i.e., that views the universe through
the lens of Turing machine (TM) theory~\cite{lloyd1990valuable,chaitin2004algorithmic,
zure89a,zure89b,zurek1990complexity,zenil2012computable,lloyd2006programming,schmidhuber2000algorithmic,zuse1969rechnender}.

Here I adopt a different approach. I focus on the fact that information concerning the state of the
universe typically \emph{is held by some agent embedded in that universe}. 
For example, we cannot speak of Shannon information without
specifying probability distributions --- which reflect the uncertainty of some specific agent concerning
the state of the universe that contains them
(e.g., uncertainty of a scientist making a prediction). 

This leads us to analyze how an agent can acquire information
concerning the state of the world. That is the topic of this paper,
as described in the remainder of this introduction.

\subsection{Inference devices}


There are (at least) four ways
an agent can acquire some information concerning the universe in which it is 
embedded: via an observation
device, via a control device, via a prediction device, and via a memory device, i.e., a ``retrodiction'' device.
It turns out that there is some mathematical structure shared by 
all such information-acquiring devices. Devices with that structure are called
``Inference Devices" (IDs)~\cite{wolp01,wolp08b,wolp10,bind08}. 

In the first section of this paper I present two examples of how an agent can acquire
information about the universe that contains them, illustrating that in both examples
the agent has the mathematical structure of an ID.
I then present some of the more elementary impossibility results concerning IDs.
These results place strong constraints on what information about a universe can be
jointly held by different IDs embedded in that universe. Importantly, these
constraints arise only from the definition of IDs, without any assumptions about the laws
of the universe containing the IDs; they hold in \emph{any} universe that allows
agents that have information about that universe. 
In particular, they would hold even in a classical, finite universe, with
no chaotic processes. They would also hold even in a universe with agents
who have super-Turing computation abilities, can transmit information at super-luminal rates, etc.
It is worth noting as well that these impossibility theorems hold even though there is
no sense in which IDs have the ability of self-reference.

After these preliminaries I present some of the connections between the theory of IDs and the 
theory of Turing Machines. In particular I analyze some of the properties
of an ID version of universal Turing machines and of an ID version of Kolmogorov 
complexity~\cite{lloyd1990valuable,chaitin2004algorithmic,
zure89a,zure89b,zurek1990complexity,zenil2012computable,lloyd2006programming,schmidhuber2000algorithmic,zuse1969rechnender}.
I show that the ID versions of those quantities obey many of the familiar results of Turing machine theory (e.g., 
the invariance theorem of TM theory).

I then consider one way to extend the theory of IDs
to the case where there is a probability distribution over the states of the universe,
so that no information is ever 100\% guaranteed to be true. In particular,
I present a result concerning the products of probabilities of error of two separate IDs, a result
which is formally similar to the Heisenberg uncertainty principle. 

These results all concern subsets of an entire universe, e.g., one or two IDs embedded in 
a larger universe. However we can expand the scope to an entire
universe. The idea is to \emph{define} a ``universe'', with whatever associated laws of physics, to be a set of 
physical systems and IDs (e.g., a set of scientists), where the IDs can have
information concerning those physical systems and / or one another. 
Adopting this approach, I use the theory of IDs to derive impossibility results concerning 
the nature of the entire universe.

\subsection{Inference devices and epistemic logic}

Most of the results presented to this point in the paper have appeared before,
albeit in a more complicated, less transparent formalism than the one used here~\cite{wolp01,wolp08b,wolp10,bind08}.
In the last sections of the paper I present new results. These
all involve an extension of IDs, one that
includes some of the features that are shared by the four ways for an agent to acquire information
(observation, control, prediction and memory) but that are not in the original definition of IDs. 

I show that this strengthened version of IDs has a close relation to the various ways of
formalizing ``knowledge'' that are considered in epistemic logic~\cite{fagin2004reasoning,aaronson2013philosophers,parikh1987knowledge,zalta2003stanford}.
However the ID-based theory of knowledge is not subject to what is perhaps the major problem
of these earlier ways of formalizing knowledge, the problem of \emph{logical omniscience}.

To explain that problem, it is easiest to work with the \emph{event-based}
formulation of epistemic logic pioneered by Aumann~\cite{aumann1976agreeing,auma99,aubr95,futi91,bibr88}.
In this formulation we start with a space $U$ of possible states of the entire universe
across all time. (In the literature this is typically called a set of ``possible worlds''.)
An \emph{event} is defined as a subset of $U$. For example, let $U$ be a set of all
possible histories of the universe across all time and space. Then the event  
\{there are no clouds in the sky in London during January 1, 2000\} is all 
$u \in U$ such that \{there are no clouds in the sky in London during January 1, 2000\}. Belief
by an agent concerning the universe is formalized in terms of a  partition
of $U$, which is called an \emph{Aumann structure} and
an associated \emph{belief operator} $B : U \rightarrow \{R_i\}$.
This is supposed to represent the intuition that
in any universe $u$, the agent \emph{believes} that an event $E \subset U$ holds
iff $B(u) \subseteq E$. \emph{Knowledge} is then defined as true belief, i.e., a belief operator $K$ with
the property that $u \in K(u)$ for all $u \in U$~\cite{fagin2004reasoning}.
Similarly, in the event-based framework we say 
that ``an agent\textit{knows event $E$} in world $u \in U$'' if $E$ holds 
for all worlds that the agent believes are possible when the actual world is $u$. 
So an agent knows $E$ at $u$ if for their knowledge operator, $K(u) \subseteq E$.

Now suppose that some event $E$ implies some event $E'$, i.e., $E' \supseteq E$. 
This means that under the event-based definition of knowledge, if agent $i$ knows 
event $E$ in world $u$, $E'$ is also true in world $u$ --- and agent $i$ knows event $E'$ in world $u$.
Generalizing, the agent cannot know a set of facts without knowing 
all logical implications of those facts.
This is known as the property of \textbf{logical omniscience}. As an example of this
property, suppose that someone
multiplies two huge prime numbers and then (honestly) tells that product to the agent --- so that the agent
knows that product. Then since that product logically implies its unique
prime factorization, under the event-based framework 
the agent must ``know'' the two prime numbers, independent
of any considerations of whether they have a computer to help them do calculations.
This of course is absurd.
%

This problem of logical omniscience plagues possible-worlds models
of epistemic logic like those based on Aumann structures. Some extensions to possible-worlds
models have been proposed to address this problem, e.g., bounds 
on the computational powers of the agent~\cite{aaronson2013philosophers}, assuming
that the agent reasons illogically~\cite{fagin2004reasoning}, and a
set of related ``impossible possible worlds'' restrictions on the nature of the agent~\cite{fagin2004reasoning,aaronson2013philosophers,parikh1987knowledge,zalta2003stanford}. 
However none of these has proven broadly convincing.

There are other difficulties with the event-based formalization of knowledge.
In that framework, by simply defining the ``knowledge operator'' of a rock on the moon appropriately,
we would say that the rock ``knows'' whether it is in sunlight or not (for example by having its knowledge
operator pick sets of states of the universe based on the temperature of that rock). This pathology is due to the
fact that the definition of knowledge operators in the event-based framework
does not reflect the fact that knowledge is \emph{held by
a sentient agent}. Specifically, any sentient agent that knows something about the universe
is able to correctly answer arbitrary questions
about what they know, either implicitly or explicitly. (Note that a lunar rock cannot answer such questions.) However there
is nothing in the formal structure of Aumann structures, Kripke structures, 
or the like that involves the ability of agents to correctly answer questions.

The ID framework is concerned precisely with
such ability of an agent to answer questions about the information they have. As a result, in
the extension of the ID framework into a full-fledged theory of knowledge,
we cannot say that a rock on the moon ``knows'' whether it is in sunlight.
Moreover, the ID-based theory of knowledge avoids the problem
of logical omniscience. Specifically, in the ID-based formalization of knowledge, if an agent knows some fact $A$, and knows
that $A$ implies $B$, then $B$ is true --- but the agent may not know that. 

None of the results below are difficult to prove; some of the proofs, especially
those of the ``Laplace demon theorems'', are almost trivial. (The interest is the 
implications of the inference device axioms for metaphysics and epistemology, not the math needed
to derive those implications.) Nonetheless, the interested reader can find
all proofs that are not given below in~\cite{wolp08b}.

\section{Inference Devices}
\label{sec:review}

In this section I review the elementary properties of inference devices, mathematical structures that
are shared by the processes of observation, prediction, recall and control~\cite{wolp01,wolp08b,wolp10,bind08}.
These results are proven by extending Epimenides' paradox to apply
to novel scenarios. Results relying on more sophisticated mathematics, 
some of them new, are presented the following section.

%
%

\subsection{Observation, prediction, recall and control of the physical world}
\label{sec:physical}

I begin with two examples that motivate the formal definition
of inference devices. The first is
an example of an agent making a correct observation
about the current state of some physical variable. 

\begin{example}
Consider an agent who claims to be able to observe $S(t_2)$, the
value of some physical variable at time $t_2$. If the agent's
claim is correct, then for any question of the form ``Does $S(t_2) =
L$?", the agent is able to consider that question at some $t_1 <
t_2$, observe $S(t_2)$, and then at some $t_3 > t_2$ provide the
answer ``yes" if $S(t_2) = L$, and the answer ``no" otherwise.  
In other words, she can correctly pose any such binary question to herself
at $t_1$, and correctly say what the answer is at
$t_3$.{\footnote{It may be that the agent has to use some
appropriate observation apparatus to do this; in that case we can just
expand the definition of the ``agent" to include that apparatus.
Similarly, it may be that the agent has to configure that
apparatus appropriately at $t_1$. In this case, just expand our
definition of the agent's ``considering the appropriate question"
to mean configuring the apparatus appropriately, in addition to the
cognitive event of her considering that question.}}

To formalize this,
let $U$ refer to a set of possible histories of an entire universe across all time, 
where each $u \in U$ has the following properties:
\begin{enumerate}[i)]
\item $u$ is consistent with the laws of physics,
\item In $u$, the agent is alive and of sound mind throughout the time interval $[t_1, t_3]$, and the system $S$ exists
at the time $t_2$,
\item In $u$, at time $t_1$ the agent considers some  $L$-indexed  question $q$
of the form ``Does $S(t_2) = L$?",
\item In $u$, the agent observes $S(t_2)$, 
\item In $u$, at time $t_3$ the agent uses that observation to
provide her (binary) answer to $q$, and believes that answer to be 
correct.{\footnote{This means in particular that the agent does not lie, does not believe she was distracted
from the question during $[t_1, t_3]$.}}
\end{enumerate}
The agent's claim is that for any question $q$ of the form ``Does
$S(t_2) = L$?", the laws of physics imply that for all $u$ in the
subset of $U$ where at $t_1$ the agent considers $q$, it
must be that the agent provides the correct answer to $q$ at
$t_3$.  Any prior knowledge concerning the history that the
agent relies on to make this claim is embodied in the set $U$.

The value $S(t_2)$ is a function of the actual history of the entire
universe, $u \in U$.  Write that function as $\Gamma(u)$, with
image $\Gamma(U)$.  Similarly, the question the agent has in her brain at
$t_1$, together with the time $t_1$ state of any observation apparatus she
will use, is a function of $u$. Write that function as $X(u)$.
Finally, the binary answer the agent provides at $t_3$ is a function of the
state of her brain at $t_3$, and therefore it too is a function of
$u$. Write that binary-valued function giving her answer as $Y(u)$. 

Note that since
$U$ embodies the laws of physics, in particular it embodies
all neurological processes in the agent (e.g., her asking and
answering questions), all physical characteristics of $S$, etc.

So as far as this observation is concerned, the agent is just a
pair of functions $(X, Y)$, both with the domain $U$ defined above,
where $Y$ has the range $ \{-1, 1\}$. A necessary condition for us to say that the agent can
``observe $S(t_2)$" is that for any
$\gamma \in \Gamma(U)$, there is some associated $X$ value $x$ such that
for all $u \in U$, so long as $X(u) = x$, it follows that  $Y(u) = 1$ iff
$\Gamma(u) = \gamma$.
\label{ex:beg_1}
\end{example}

I now present an example of an agent making a correct prediction
about the future state of some physical variable. 
\begin{example}
Now consider an agent who claims to be able to predict $S(t_3)$, the
value of some physical variable at time $t_3$. If the agent's
claim is correct, then for any question of the form ``Does $S(t_3) =
L$?", the agent is able to consider that question at some time $t_1 <
t_3$, and produce an answer at some time $t_2 \in (t_1, t_3)$, where the answer is ``yes"
if  $S(t_3) = L$ and ``no" otherwise.  So
loosely speaking, if the agent's claim is correct, then for any $L$, by
their considering the appropriate question at $t_1$, they can generate the correct
answer to any question of the form ``Does $S(t_3) = L$?" at
$t_2 < t_3$.{\footnote{\label{foot:4} It may be that the agent has to use some
appropriate prediction computer to do this; in that case we can just
expand the definition of the ``agent" to include that computer.
Similarly, it may be that the agent has to program that
computer appropriately at $t_1$. In this case, just expand our
definition of the agent's ``considering the appropriate question"
to mean programming the computer appropriately, in addition to the
cognitive event of his considering that question.}}

To formalize this, let $U$ refer to a set of possible histories of an entire universe across all time, 
where each $u \in U$ has the following properties:
\begin{enumerate}[i)]
\item $u$ is consistent with the laws of physics,
\item In $u$, the agent exists throughout the interval $[t_1, t_2]$, and the system $S$ exists
at $t_3$,
\item In $u$, at $t_1$ the agent considers some question $q$
of the form ``Does $S(t_3) = L$?",
\item In $u$, at $t_2$ the agent provides his (binary) answer to $q$ 
and believes that answer to be 
correct.{\footnote{This means in particular that the agent does not believe he was distracted
from the question during $[t_1, t_2]$.}}
\end{enumerate}
The agent's claim is that for any question $q$ of the form ``Does
$S(t_3) = L$?", the laws of physics imply that for all $u$ in
the restricted set $U$ such that
at $t_1$ the agent considers $q$, it must be
that the agent provides the correct answer to $q$ at
$t_2$.

The value $S(t_3)$ is a function of the actual history of the entire
universe, $u \in U$.  Write that function as $\Gamma(u)$, with
image $\Gamma(U)$.  Similarly, the question the agent considers at
$t_1$ is a function of the state of his brain at $t_1$, and therefore
is also a function of $u$. Write that function as $X(u)$. Finally,
the binary answer the agent provides at $t_2$ is a function of the
state of his brain at $t_2$, and therefore it too is a function of of
$u$. Write that function as $Y(u)$. 

So as far as this prediction is concerned, the agent is just a
pair of functions $(X, Y)$, both with the domain $U$ defined above,
where $Y$ has the range $ \{-1, 1\}$. The agent can
indeed predict $S(t_3)$ if for the space defined above $U$, for any
$\gamma \in \Gamma(U)$, there is some associated $X$ value $x$ such that, no
matter what precise history $u \in U$ we are in, due to the laws of
physics, if $X(u) = x$ then the associated $Y(u)$ equals $1$ iff
$\Gamma(u) = \gamma$.
\label{ex:beg_2}
\end{example}

Evidently, agents who perform observation and
those who perform prediction are described in part by a shared mathematical structure, involving functions
 $X$ and $Y$ defined over the same space $U$ of all possible histories of the universe
across all time.   As formalized below, I refer to
any such pair $(X, Y)$ as an ``inference device". 
Say that for some function $\Gamma$ defined over $U$, for any 
$\gamma \in \Gamma(U)$, there is some associated $X$ value $x$ such that, no
matter what precise history $u \in U$ we are in, due to the laws of
physics, if $X(u) = x$ then the associated $Y(u)$ equals $1$ iff
$\Gamma(u) = \gamma$. Then I will say that the device $(X, Y)$
``infers" $\Gamma$.

See~\cite{wolp08b} for a more detailed elaboration of the examples
given above
of observation and prediction in terms of inference devices. 
Arguably to fully formalize each of these
phenomena  there should be additional structure beyond that
defining inference devices.  (See App. B. of~\cite{wolp08b}.) 
Most such additional structure is left for future research. However,
one particular part of such additional structure is investigated below,
in the discussion of ``physical knowledge''.

In addition to considering observation and prediction,
it is also shown in \cite{wolp08b} that a system that remembers the past 
is an inference device that infers an appropriate function $\Gamma(u)$.\footnote{Loosely
speaking, memory is just retrodiction, i.e., it is using current data to
predict the state of non-current data. However, rather than have the
non-current data concern the future, in memory it concerns the past.}
\cite{wolp08b} also shows that a device that controls a physical variable is an
inference device that infers an appropriate function $\Gamma(u)$. All of this 
analysis holds even if what is observed / predicted / remembered / controlled
is not the answer to a question of the form, ``Does $S(t) = L$?", but
instead an answer to  question of the form, ``is $S(t)$ more property $A$ than it is property $B$?"
or of the form, ``is $S(t)$ more property $A$ than $S'(t)$ is?"
%

In the sequel I will sometimes consider situations involving multiple
inference devices, $(X_1, Y_1), (X_2, Y_2), \ldots$, with associated
domains $U_1, U_2, \ldots$. For example, I will consider scenarios
where agents try to observe one another. In such situations, when
referring to ``$U$'', I implicitly mean $\cap_i U_i$, implicitly
restrict the domain of all $X_i, Y_i$ to $U$, and implicitly assume
that the codomain of each such restricted $Y_i$ is binary.

\subsection{Notation and terminology}

To formalize the preceding considerations, I first fix some notation.
I will take the set of binary numbers $\mathbb{B}$ to equal $\{-1,
1\}$. In the canonical case where $U$ is the set of all possible histories of the entire universe
across all space and time, the value of
any specific physical variable is specified by a subset of the components of
a full vector $u \in U$. (For example, the variable of the speed of a particular
particle in a particular intertial frame at a particular time is given
by a subset of the components of $u$.) So any such variable is just 
a function over $U$. Bearing this in mind, for any
function $\Gamma$ with domain $U$, I will write the image of $U$ under
$\Gamma$ as $\Gamma(U)$, i.e., for the set of possible values of some
physical variable. I will also sometimes abuse this notation with a sort of
``set-valued function'' shorthand, and so for example write $\Gamma(V) = 1$ for some $V \subset U$
iff $\Gamma(u) = 1 \; \forall u \in V$. On the other hand, for the special case where the function over $U$
is a measure, I use conventional shorthand from measure theory. For example,
if $P$ is a probability distribution over $U$ and $V \subset U$, I write $P(V)$ as shorthand for $\sum_{u \in V} P(u)$.

For any function $\Gamma$ with domain $U$
that I will consider, I implicitly assume that the entire set $\Gamma(U)$ contains at
least two distinct elements. For any (potentially infinite) set $R$,
$|R|$ is the cardinality of $R$. 

Given a function $\Gamma$
with domain $U$, I write the partition of $U$ given by  $\Gamma^{-1}$ as $\overline{\Gamma}$,
i.e.,
\begin{eqnarray}
{\overline{\Gamma}} &\equiv& \{ \{u : \Gamma(u) = \gamma\} : \gamma \in \Gamma(U)\}
\end{eqnarray}
I say that two functions $\Gamma_1$
and $\Gamma_2$ with the same domain $U$ are \textbf{(functionally) equivalent} iff
the inverse functions $\Gamma_1^{-1}$ and $\Gamma_2^{-1}$ induce the same partitions
of $U$, i.e., iff ${\overline{\Gamma_1}} = {\overline{\Gamma_2}}$.

Recall that a partition
$A$ over a space $U$ is a {\emph{refinement}} of a partition $B$ over
$U$  iff every $a \in A$ is a subset of some $b \in B$. If $A$ is a refinement
of $B$, then for every $b \in B$ there is an $a \in A$ that is a subset of $b$. 
Some of the elementary properties of refinement will be used below, and
so I now review them. First, two partitions $A$ and $B$
are refinements of each other iff $A = B$. Say a partition $A$ is
finite and a refinement of a partition $B$. Then $|A| = |B|$ iff $A
= B$. For any two functions $A$ and $B$ with domain $U$, I will say that ``$A$ refines $B$" if ${\overline{A}}$ is a refinement
of $\overline{B}$. Similarly, for any $R \subset U$ and function $A$, I will say that ``$R$ refines
$A$" (or ``$A$ is refined by $R$") if $R$ is a subset of some element of $\overline{A}$.


I write the characteristic function of any set $R \subseteq U$ as the binary-valued function
\begin{eqnarray}
\mathcal{X}_R(u) = 1 \Leftrightarrow u \in R 
\label{eq:shorthand1}
\end{eqnarray}
As shorthand I will sometimes treat
functions as equivalent to one of the values in their image. So for example
expressions like ``$\Gamma_1 = \Gamma_2 \Rightarrow \Gamma_3 = 1$''
means `$`\forall u \in U$ such that $\Gamma_1(u) = \Gamma_2(u)$, $\Gamma_3(u) = 1$''.

To simplify terminology, rather than referring to Kronecker delta functions 
(and / or Dirac delta functions) throughout, I will refer to a {\bf{probe}} of a variable $V$,
by which I mean any function over $U$ 
parametrized by a $v \in V$ of the form
\begin{equation} 
\delta_v(v') = 
\begin{cases}  
1 & \text{ if $v = v'$} \\ 
-1 & \text{ otherwise.} 
\end{cases} 
\end{equation}
$\forall v' \in V$.
%
Given a function $\Gamma$ with domain
$U$ I sometimes write 
$\delta_\gamma(\Gamma)$ as shorthand
for the function $u \in U \rightarrow \delta_\gamma(\Gamma(u))$. 
When I don't want to specify the subscript $\gamma$ of a probe, I sometimes generically write $\delta$. I write  ${\mathcal{P}}(\Gamma)$ to
indicate the set of all probes over $\Gamma(U)$.

\subsection{Weak inference}
\label{sec:weak}

I now review some results that place severe restrictions on what a physical
agent can predict / observe / control / remember
and be guaranteed to be correct (in that prediction / observation / control / memory). 
To begin, I formalize the concept of an ``inference device" 
introduced in the previous subsection.

\begin{definition}
\label{def:id}

An {\bf{(inference) device}} over a set $U$ is a pair of functions
$(X, Y)$, both with domain $U$.  $Y$ is called the {\bf{conclusion}}
function of the device, and is surjective onto $\mathbb{B}$.  $X$ is
called the {\bf{setup}} function of the device.
\end{definition}

Given some function $\Gamma$ with domain $U$ and some $\gamma \in
\Gamma(U)$, we are interested in setting up a device so that it is
assured of correctly answering whether $\Gamma(u) = \gamma$ for the
actual universe $u$.  Motivated by the examples above, I will formalize this with
the condition that $Y(u) = 1$ iff $\Gamma(u) =
\gamma$ for all $u$ that are consistent with some associated setup
value $x$ of the device, i.e., such that $X(u) = x$ for some $x$.  If this condition
holds, then setting up the device to have setup value $x$ guarantees
that the device will make the correct conclusion concerning whether
$\Gamma(u) = \gamma$. (Hence the terms ``setup function'' and
``conclusion function'' in Def. 1.)

We can formalize this as follows:
 
\begin{definition}
 \label{def:weak_inf}
 Let $\Gamma$ be a function over $U$ such that $|\Gamma(U)| \ge 2$. 
 A device $\DDD$ {\bf{(weakly) infers}}  $\Gamma$ 
 iff $\forall \gamma \in \Gamma(U)$, $\exists x \in X(U)$ such that
 $\forall u \in U, X(u) = x \Rightarrow Y(u) = \delta_\gamma({\Gamma(u)})$.
\end{definition}
 
\noindent If $\DDD$ infers $\Gamma$, I write $\DDD > \Gamma$.  I say that a device
$\DDD$ infers a set of functions if it infers every function in that set.

The following semi-formal example illustrates a scenario in which weak inference
holds, and a related scenario in which it doesn't hold.

\begin{example}
A scenario in which weak inference holds is illustrated in Fig.~\ref{fig:correct_example}. In this example,
for simplicity determinism
is assumed. The full rectangle, including both colored rectangles, indicates 
the set of all possible histories of the universe across all time, $U$ (i.e., the set of all
``states of the world'', in the language of epistemology). In this example the function $\Gamma$ is whether the sky will 
(not) be cloudy at noon (at Greenwich, say). Since the ID is
embedded in the universe, the precise question concerning the future state of the universe that it is instructed to answer
picks out different subsets of the set of all possible histories of the universe across all time. 
There are two such sets indicated, corresponding to
the ID being asked the question, ``will the sky be cloudy at noon?'' or being asked the question, ``will the sky
be clear at noon?''. (Histories falling outside of both of those sets correspond to questions different from
those two.) Again, since the ID is embedded in the universe, and since its answer can have two
possible values, which answer it gives (say at 11am) is a partition across $U$. The separatrix between the
two elements of that partition are indicated by the bold line. Finally, in all elements of $U$, the sky either
will be clear at noon or will be cloudy. The two possibilities are indicated by the two colored rectangles.

The ID weakly infers $\Gamma$, i.e., correctly predicts the state of the sky at noon, since whichever
of the two possible questions it considers, it is guaranteed that its answer is correct. 

A related scenario where weak inference does not hold is illustrated in Fig.~\ref{fig:incorrect_example}.
The only difference from the scenario depicted in Fig.~\ref{fig:correct_example} is that if the ID
is asked the question, ``will the sky be cloudy at noon?'',
and the sky in fact will be cloudy at noon, the ID will answer 'no' --- which is incorrect.
 \begin{figure}
	\hglue-5mm
  	\includegraphics[width= 1.2\linewidth]{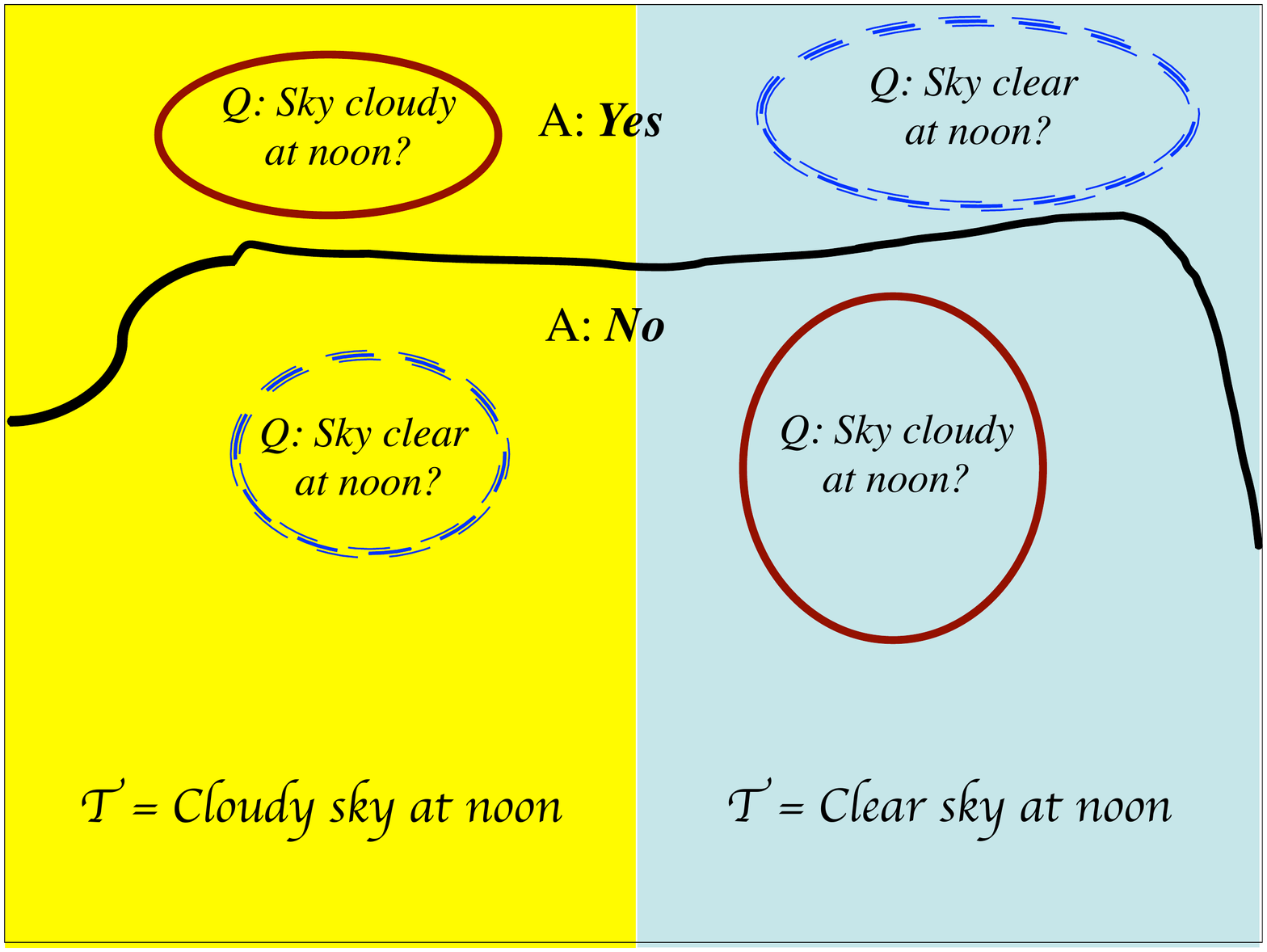}
 	\noindent \caption{An example of correct prediction as weak inference, where for simplicity determinism
is assumed. The set $U$ of all possible histories of the universe is the full rectangle, including both the yellow and blue
subsets, which correspond to the two possible states of the sky at noon.
Two of the possible questions of the ID are indicated: one of them is asked by the ID in all universes
	within the union of the two red ellipses, and the other question is asked in all universes within
	the union of the two blue ellipses. The ID weakly infers $\Gamma$, i.e., correctly predicts the state of the sky at noon, since whichever
of the two possible questions it considers, it is guaranteed that its answer is correct.
	  }
 	\label{fig:correct_example}
 \end{figure}
 
  \begin{figure}
	\hglue-10mm 
  	\includegraphics[width= 1.2\linewidth]{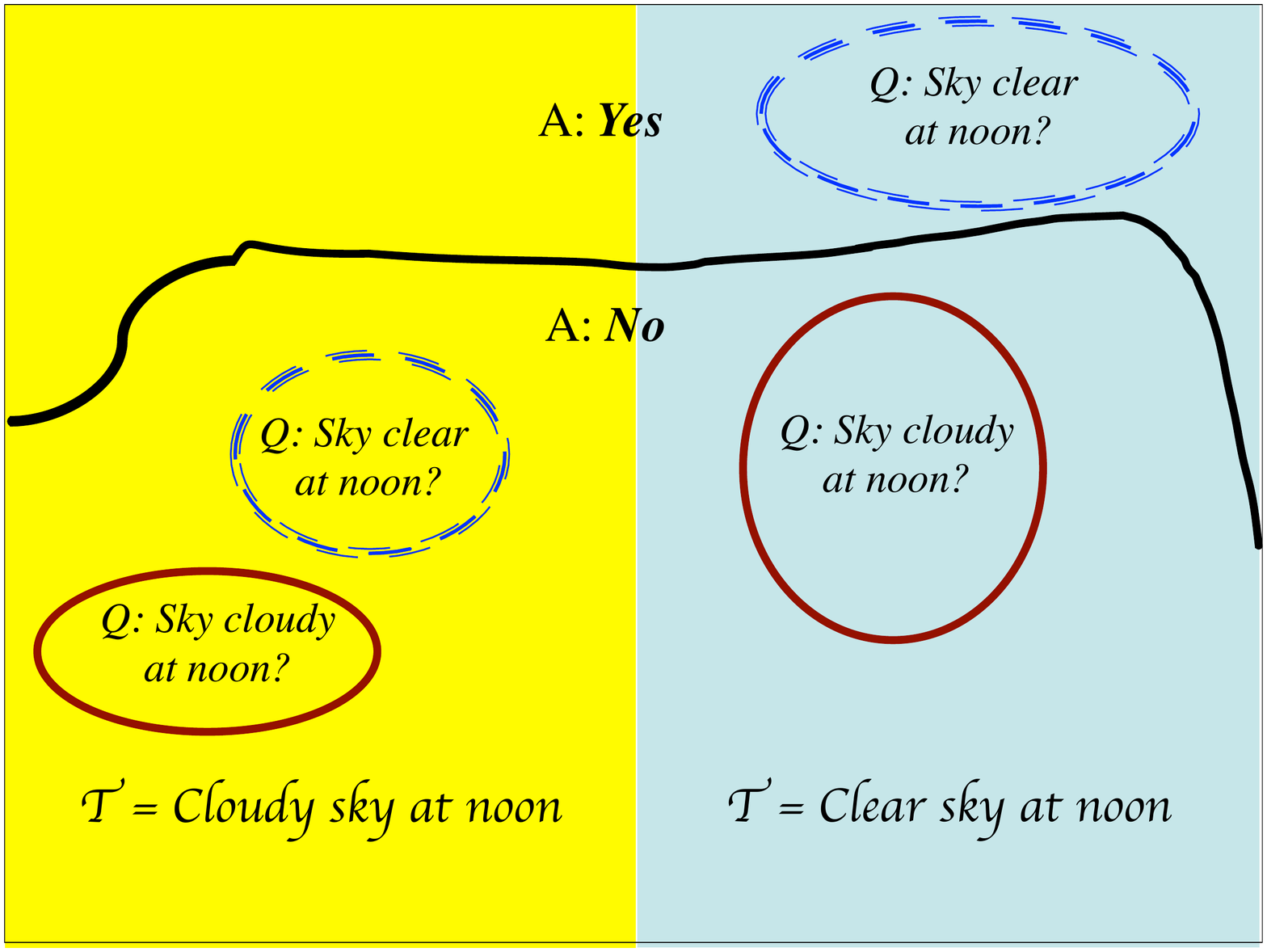}
 	\noindent \caption{An example where the prediction of an ID of the state of the sky at noon cannot be
	guaranteed of being correct, i.e., the ID does not weakly infer the function of $u$ giving the state of
	the sky at noon. The scenario is identical to the one depicted in Fig.~\ref{fig:correct_example},
	except that if the ID is asked the question, ``will the sky be cloudy at noon?'',
and the sky in fact will be cloudy at noon, the ID will answer 'no', which is incorrect. 
	  }
 	\label{fig:incorrect_example}
 \end{figure}
 \label{ex:3}
 \end{example}

\begin{example}
While it is clearly grounded in a real-world scenario, Ex.~\ref{ex:3} obscures the mathematical
essence of weak inference.
A fully abstract, stripped-down example of weak inference is given in the following table,
which provides functions $X(u), Y(u)$ and $\Gamma(u)$ for all $u$ in a space $U$. In this
minimal example, $U$ has only three elements:

\begin{center}
\begin{tabular} {p{1cm} || p{1cm} | p{1cm} | p{1cm} }
  $u$ & $X(u)$ & $Y(u)$ & $\Gamma(u)$   \\ 
\hline \hline a & 1 & 1 & 1        \\
\hline b &  2 & -1     &   1         \\
\hline c &   1 & -1 &  2          \\
\end{tabular}
\label{my-label}
\end{center}

\noindent In this example, $\Gamma(U) = \{1, 2\}$,
so we are concerned with two probes, $\delta_1$ and $\delta_2$.
Setting $X(u) = 2$ means that $u = b$, which in turn means that $\Gamma(u) = 1$
and $Y(u) = -1$. So setting $X(u) = 2$ guarantees that $\Gamma(u) \ne 2$,
and so $\delta_2(\Gamma(u)) = Y(u)$
(which in this case equals -1, the answer 'no'). So the setup value $x = 2$
ensures that the ID correctly answers the binary question, ``does $\Gamma(u) = 2$?'',
in the negative.
Similarly, setting $X(u) = 1$ guarantees that $\delta_1(\Gamma(u)) = Y(u)$,
so that it ensures that the ID correctly answers the binary question, ``does $\Gamma(u) = 1$?'',
in the positive.
\label{ex:4}
\end{example}

Ex.~\ref{ex:4} shows that weak inference can
hold even if $X(u) = x$ doesn't always fix a unique value for $Y(u)$. Such
non-uniqueness is typical when the device is being used for
observation. Setting up a device to observe a variable outside of that
device restricts the set of possible universes; only those $u$ are
allowed that are consistent with the observation device being set up
that way to make the desired observation. But typically just setting
up an observation device to observe what value a variable has doesn't
uniquely fix the value of that variable.

As discussed in App. B of~\cite{wolp08b}, 
the definition of weak inference is very unrestrictive. For example,
a device $\DDD$ is `given credit' for correctly answering probe $\delta(\Gamma(u))$
if there is \emph{any} $x \in X(U)$ such that $ X(u) = x \Rightarrow Y(u) = \delta({\Gamma(u)})$.
In particular, $\DDD$ is given credit even if the binary question 
we would associate with $x$ (under some particular physical interpretation of
what $X$, like in Ex.~\ref{ex:beg_1} and Ex.~\ref{ex:beg_2}) is 
not whether $\Gamma(u) = \gamma$, but some other question. In essence,
the device receives credit even if it gets the right answer by accident.

Unless specified otherwise, a device written as ``${{\DDD}}_i$'' for any
integer $i$ is implicitly presumed to have domain $U$, with setup
function $X_i$ and conclusion function $Y_i$ (and similarly for no
subscript).  Similarly, unless specified otherwise, expressions like
``min$_{x_i}$'' mean min$_{x_i \in X_i(U)}$. 

\subsection{The two Laplace's Demon theorems}
\label{subsec:id_major}

\noindent \emph{``An intellect which at a certain moment would know all forces that set nature in motion, and all positions of all items of which nature is composed, if this intellect were also vast enough  ...  nothing would be uncertain and the future just like the past would be present before its eyes.''}

$ $

\noindent --- Pierre Simon Laplace, ``A Philosophical Essay on Probabilities''

$ $

There are 
limitations on the ability of any device to weakly infer functions. Perhaps
the most trivial is the following:
\begin{proposition}
\label{prop:prop1}
%
%
For any device $\DDD$, there is a function that $\DDD$ does not infer. 
\end{proposition}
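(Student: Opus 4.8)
The plan is to diagonalize against the device's own conclusion function, in the spirit of the Epimenides-style arguments the paper has announced. Since $Y$ is surjective onto $\{-1,1\}$ by Definition~\ref{def:id}, the function $\Gamma \equiv -Y$ (that is, $\Gamma(u) = -Y(u)$ for all $u$) is also surjective onto $\{-1,1\}$, so that $\Gamma(U) = \{-1,1\}$ contains the two distinct elements required by Definition~\ref{def:weak_inf}. I claim that $\DDD$ does not infer this particular $\Gamma$, and that alone establishes the proposition.

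To verify the claim I would inspect the weak-inference condition at the single value $\gamma = 1$. Unwinding the probe, for any $u$ one has $\delta_1(\Gamma(u)) = 1$ exactly when $\Gamma(u) = 1$, i.e.\ when $Y(u) = -1$, and $\delta_1(\Gamma(u)) = -1$ otherwise, i.e.\ when $Y(u) = 1$; in both cases $\delta_1(\Gamma(u)) = -Y(u)$. Hence inferring $\Gamma$ at $\gamma = 1$ would require some setup value $x \in X(U)$ with the property that $X(u) = x \Rightarrow Y(u) = -Y(u)$ for all $u$. But every $x \in X(U)$ is attained by at least one $u \in U$, and for that $u$ we have $Y(u) \in \{-1,1\}$, whence $Y(u) \ne -Y(u)$. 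So no setup value can satisfy the implication, the condition fails already at $\gamma = 1$, and therefore $\DDD \not> \Gamma$.

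There is essentially no obstacle here beyond bookkeeping, consistent with the author's remark that these proofs are almost trivial: the one substantive move is to take the target function to be the negation of the conclusion function, so that correctly inferring it would force the device to announce the opposite of what it in fact announces. The only points requiring care are (i) confirming that $\Gamma = -Y$ genuinely takes two values, which is precisely where the surjectivity of $Y$ onto $\mathbb{B}$ is used, and (ii) checking that the failure of the existential over $x$ relies only on the fact that each $x \in X(U)$ is realized by some $u \in U$. If one wished to avoid even the mild computation in step two, an alternative is to argue purely at the level of partitions: inferring $\Gamma$ would require the $x$-indexed cells $\{u : X(u) = x\}$ to refine $\overline{Y}$ in a way that assigns each such cell a $Y$-value equal to its own negation, which is impossible for any nonempty cell.
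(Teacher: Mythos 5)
Your proof is correct and is essentially the paper's own argument in lightly repackaged form: the paper takes $\Gamma = Y$ and applies the negation probe $\delta_{-1}$, while you take $\Gamma = -Y$ (a function equivalent to $Y$, inducing the same partition $\overline{Y}$) and apply $\delta_1$ --- in both cases the requirement collapses to finding a nonempty set on which $Y(u) = -Y(u)$, which is impossible. The verification details you spell out are exactly the bookkeeping the paper leaves implicit.
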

\begin{proof}
Choose $\Gamma$ to be the function $Y$, so that the device is trying to infer itself.
Then choose the negation probe $\delta(y \in {\mathbb{B}}) = -y$ to
see that such inference is impossible. (Also see~\cite{wolp08b}.)
\end{proof}

\noindent 

 \begin{figure}
	\hglue-10mm
  	\includegraphics[width= 1.2\linewidth]{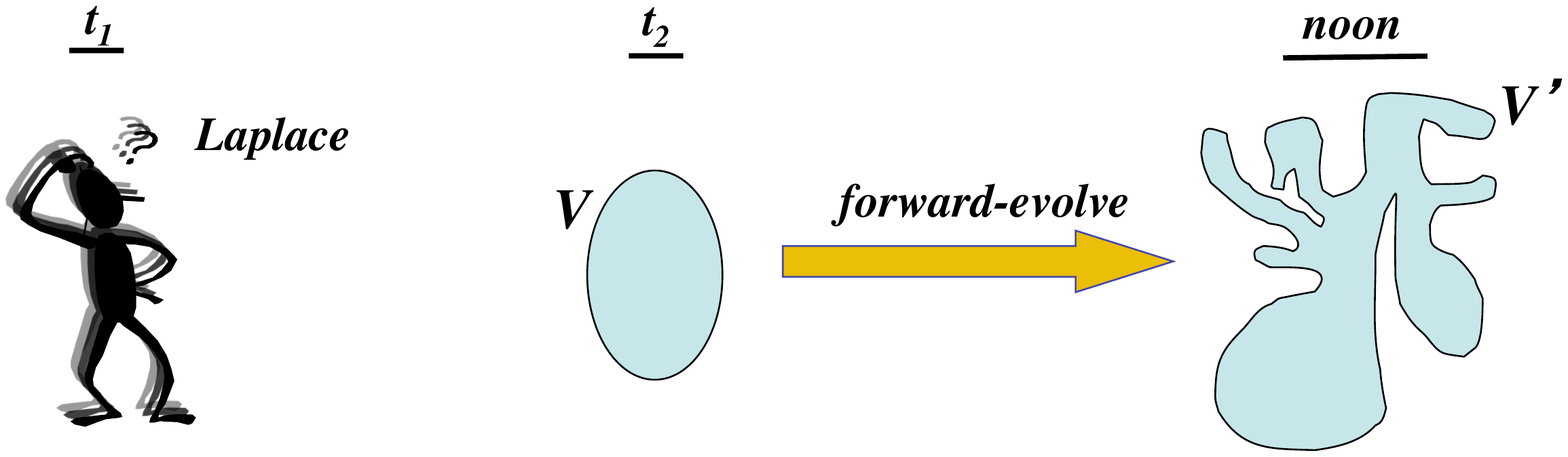}
 	\vglue-35mm
 	\noindent \caption{The time $t_1$ is less than $t_2$, which in turn is less than noon.
    $V$ is the set of all time-$t_2$ universes where Laplace is thinking the answer ``yes'' in response to the $t_1$ question 
    Laplace heard --- whatever that question was.
    $V'$ is $V$ evolved forward to noon. At $t_1$, we ask Laplace, ``will the universe be outside $V'$ at noon?''
    It is impossible for Laplace to answer correctly, no matter what his computational capabilities are, what the laws
    of the universe are, etc. }
 	\label{fig:first_imp}
 \end{figure}

It is interesting to consider the implications of Prop.~\ref{prop:prop1} for the
case where the inference is prediction, as in Ex.~\ref{ex:beg_2}.
Depending on how precisely one interprets Laplace,
Prop.~\ref{prop:prop1} means that he was wrong in his claim
about the ability of an ``intellect''
to make accurate predictions: even if the universe were a
giant clock, it could not contain an intellect that could reliably predict the
universe's future state before it occurred.{\footnote{Similar
conclusions have been reached previously~\cite{mack60,popp88}. However
in addition to being limited to the inference process of prediction,
that earlier work is quite informal. It is no surprise than that some claims in that
earlier work are refuted by well-established results in engineering. For example, the
claim in~\cite{mack60} that ``a prediction concerning the narrator's
future ... cannot ... account for the effect of the narrator's
learning that prediction'' is just not true; it is refuted by adaptive control theory in general and by
Bellman's equations in particular. Similarly, it
is straightforward to see that statements (A3), (A4), and the notion of
``structurally identical predictors'' in~\cite{popp88} have no formal
meaning.}} 
More precisely, for all $\Gamma$ as in Prop.~\ref{prop:prop1}, there could be an intellect
$\DDD$ that can infer $\Gamma$. However
Prop.~\ref{prop:prop1} tells us that for any fixed intellect, there must exist a $\Gamma$
that the intellect cannot infer.  (See Fig.~\ref{fig:first_imp}.) The ``intellect'' Laplace refers
to is commonly called Laplace's ``demon'', so I sometimes
refer to Prop.~\ref{prop:prop1} as the ``first (Laplace's) demon theorem''.

One might think that Laplace could
circumvent the first demon theorem by simply constructing a second demon, specifically designed
to infer the $\Gamma$ that thwarts his first demon. Continuing in this way, one might think that Laplace could construct a set of
demons that, among them, could infer any function $\Gamma$. Then he could construct an ``overseer demon''
that would choose among those demons, based on the function $\Gamma$ that needs to be inferred.
However this is not possible. To see this, simply redefine the device $\DDD$ in
Prop.~\ref{prop:prop1} to be the combination of Laplace with all of his demons.

These limitations on prediction hold even
if the number of possible states of the universe is countable (or even finite),
or if the inference device has super-Turning capabilities.
It holds even if the current formulation of
physics is wrong; it does not rely on chaotic dynamics,
physical limitations like the speed of light, or quantum mechanical
limitations. 

%

Note as well that in Ex.~\ref{ex:beg_2}'s model of a
prediction system the actual values of the times of the various events are not
specified.  So in particular the
impossibility result of Prop.~\ref{prop:prop1} still applies to that example
even if $t_3 < t_2$ --- in which case the time when the agent provides the
prediction is \emph{after} the event they are predicting. Moreover, consider the variant of Ex.~\ref{ex:beg_2} where the
agent programs a computer to do the prediction, as discussed in Footnote~\ref{foot:4}
in that example. In this variant, the program that is input
to the prediction computer could even contain the future value that the agent wants
to predict. Prop.~\ref{prop:prop1} would still mean that the
conclusion that the agent using the computer comes to after
reading the computer's output cannot be guaranteed to be correct.

Prop.~\ref{prop:prop1} tells us that any inference device $\DDD$ can be ``thwarted''
by an associated function. However it does not forbid the possibility
of some second device that can infer that function that thwarts $\DDD$.
To analyze issues of this sort, and more generally to analyze the
inference relationships within sets of multiple functions and multiple
devices, we start with the following definition:
\begin{definition}
\label{def:setup_dist}
Two devices $(X_1, Y_1)$ and $(X_2, Y_2)$ are {\bf{(setup) distinguishable}} iff $\forall x_1,
x_2, \exists u \in U$ such that $X_1(u) = x_1, X_2(u) = x_2$.
\end{definition}
\noindent No device is distinguishable from itself.  Distinguishability is symmetric, but
non-transitive in general (and obviously not reflexive).  

Having two devices be distinguishable
means that no matter how the first device is set up, it is always
possible to set up the second one in an arbitrary fashion; the setting
up of the first device does not preclude any options for setting up
the second one.  Intuitively, if two devices are not distinguishable,
then the setup function of one of the devices is partially
``controlled" by the setup function of the other one.  In such a
situation, they are not two fully separate, independent devices.


I will say that one ID $(X, Y)$ can weakly infer a second one,  $(X', Y')$, if it can weakly infer 
the conclusion of the second ID, $Y'$. (See~\cite{wolp08b} for an example.)
\begin{proposition}
\label{prop:dist_not_infer}
No two distinguishable devices $(X, Y)$ and $(X', Y')$ can weakly infer each other.{\footnote{In fact we can strengthen this result: If $(X', Y')$ can
weakly infer the distinguishable device $(X, Y)$, then $(X, Y)$ can infer neither of the two binary-valued functions equivalent
to $Y'$.}} 
\label{prop:2}
\end{proposition}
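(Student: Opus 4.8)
The plan is to argue by contradiction, turning mutual weak inference into an instance of Epimenides' paradox. Suppose $(X,Y)$ and $(X',Y')$ are distinguishable and that each weakly infers the other, i.e.\ $(X,Y) > Y'$ and $(X',Y') > Y$. Since $Y$ and $Y'$ are each surjective onto $\mathbb{B}$, the idea is to feed each inference relation one carefully chosen probe and read off a constraint relating the two conclusion functions; the two constraints will then be arranged so as to be jointly unsatisfiable.

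First I would extract the two constraints. Applying Definition~\ref{def:weak_inf} to $(X,Y) > Y'$ with the probe $\delta_{-1}$ produces an $x \in X(U)$ with $X(u) = x \Rightarrow Y(u) = \delta_{-1}(Y'(u))$; because $Y'$ is binary-valued, $\delta_{-1}(Y'(u)) = -Y'(u)$, so $X(u) = x \Rightarrow Y(u) = -Y'(u)$. Symmetrically, applying $(X',Y') > Y$ with the probe $\delta_{1}$ produces an $x' \in X'(U)$ with $X'(u) = x' \Rightarrow Y'(u) = \delta_1(Y(u)) = Y(u)$. The choice of probes is the only place any cleverness enters: I want one device forced to \emph{agree} with the other while the other is forced to \emph{disagree}.

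Next I would invoke distinguishability. By Definition~\ref{def:setup_dist}, for the particular pair $(x,x')$ just produced there exists some $u^\ast \in U$ with $X(u^\ast) = x$ and $X'(u^\ast) = x'$ \emph{simultaneously}. This is the key step, and the only substantive use of the hypothesis: distinguishability guarantees that the two setup conditions, which arose from two independent inference statements, can be met at one common history. Evaluating the two constraints at $u^\ast$ yields $Y(u^\ast) = -Y'(u^\ast)$ and $Y'(u^\ast) = Y(u^\ast)$, hence $Y(u^\ast) = -Y(u^\ast)$. Since $Y(u^\ast) \in \mathbb{B} = \{-1,1\}$ this is impossible, which is the desired contradiction.

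Finally, for the strengthening in the footnote I would observe that weak inference of a function depends only on the partition it induces, since the probes $\delta_\gamma(\Gamma)$ are just the $\pm 1$-valued indicators of the cells of $\overline{\Gamma}$; hence inferring $Y'$ is equivalent to inferring $-Y'$, the other binary-valued function equivalent to $Y'$. The argument above, which assumed only that $(X',Y')$ weakly infers the distinguishable device $(X,Y)$, already forbids $(X,Y)$ from inferring $Y'$, and therefore forbids it from inferring $-Y'$ as well. I expect no real obstacle here — the whole result is an elementary self-reference argument — the one point to handle with care is keeping the probe values straight, so that the two derived constraints are genuinely contradictory rather than merely consistent.
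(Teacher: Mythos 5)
Your proof is correct and follows exactly the route the paper intends: the paper defers the formal proof to~\cite{wolp08b}, but describes these results as ``extending Epimenides' paradox,'' which is precisely your construction — a negation probe on one device, an identity probe on the other, and distinguishability to realize both setup values at a single $u^\ast$, forcing $Y(u^\ast)=-Y(u^\ast)$. Your handling of the footnote is also sound, since weak inference depends only on the partition $\overline{Y'}$, so inferring $Y'$ and inferring $-Y'$ are equivalent.
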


\noindent I will call Prop.~\ref{prop:2} the ``second (Laplace's) demon theorem''. 
See Fig.~\ref{fig:second_imp} for an illustration of Prop.~\ref{prop:2}, for two IDs called ``Bob'' and ``Alice'', 
in which they do not directly infer one another's conclusion, but rather infer functions of those
conclusions.

 \begin{figure}
	\hglue-10mm
  	\includegraphics[width= 1.2\linewidth]{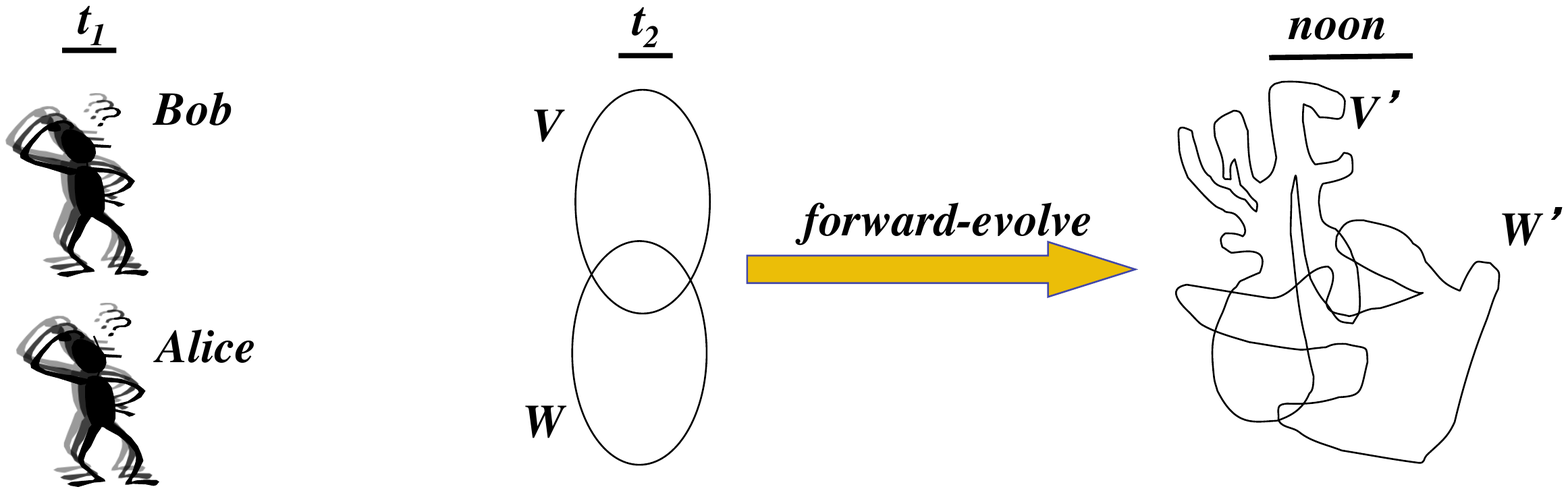}
 	\vglue-35mm
 	\noindent \caption{The time $t_1$ is less than $t_2$, which in turn is less than noon.
    $V$ is the set of all time-$t_2$ universes where Bob is thinking the answer ``yes'' in response to the $t_1$ question 
    Bob heard --- whatever that question was.
      $W$ is the set of all time-$t_2$ universes where Alice is thinking the answer ``yes'' in response to the $t_1$ question 
    Alice heard --- whatever that question was.
    $V'$ is $V$ evolved forward to noon, and $W'$ is $W$ evolved
    forward to noon. At $t_1$, we ask Bob, ``will the universe be in $W'$ at noon?'' (i.e., ``is
    Alice thinking `yes' at $t_2$?''). At that time we also ask Alice, ``will the 
    universe be outside of $V'$ at noon?'' (i.e., ``is Bob \emph{not} thinking `yes' at $t_2$?'').
    It is impossible for both Bob and Alice to answer correctly, no matter what their computational capabilities are, what the laws
    of the universe are, etc. }
 	\label{fig:second_imp}
 \end{figure}

This second Laplace's demon theorem establishes that a
whole class of functions cannot be inferred by $\DDD$ (namely the
conclusion functions of devices that are distinguishable from $\DDD$ and
also can infer $\DDD$). More generally, let $\mathcal{S}$ be a set of devices, all of which are distinguishable
from one another. Then the second demon theorem says that there can be
at most one device in $\mathcal{S}$ that can infer all other devices in
$\mathcal{S}$. It is important to note that the
distinguishability condition is crucial to the second demon theorem; mutual weak
inference can occur between non-distinguishable devices.

In~\cite{barrow2011godel} Barrow speculated 
whether ``only computable patterns are instantiated in physical reality".
There ``computable" is defined in the sense of Turing machine theory.
However we can also consider the term as meaning ``can be evaluated
by a real world computer". If so, then his question is answered --- in the negative ---
by the Laplace demon theorems.

By combining the two demon theorems it is possible to establish the following:

\begin{corollary}
Consider a pair of devices $\DDD = (X, Y)$ and ${{\DDD}'} = (X', Y')$ that are  distinguishable
from one another and whose conclusion functions are inequivalent. Say that ${{\DDD}'}$ weakly 
infers $\DDD$. Then there are at least three 
inequivalent surjective binary functions $\Gamma$ that $\DDD$ does not infer.
\label{coroll:1}
\end{corollary}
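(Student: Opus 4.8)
The plan is to produce three pairwise inequivalent surjective binary functions that $\DDD$ fails to infer, using the two demon theorems for the first two and a short combinatorial argument for the third. The two theorems hand us two of the functions immediately: by Prop.~\ref{prop:prop1} the device $\DDD$ does not infer its own conclusion $Y$, and by Prop.~\ref{prop:2}, since ${\DDD}'$ weakly infers $\DDD$ and the two devices are distinguishable, $\DDD$ does not infer $Y'$. The hypothesis that $Y$ and $Y'$ are inequivalent means $\overline{Y} \ne \overline{Y'}$, so these are two distinct non-inferred classes. Note that this is as far as the theorems take us directly: the strengthened form of Prop.~\ref{prop:2} also rules out $-Y'$, but $-Y'$ lies in the class $\overline{Y'}$, so it yields no new class, and likewise $-Y$ lies in $\overline{Y}$. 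Getting a genuinely \emph{third} class therefore requires a new idea, and that is where I expect the real work to be.

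The engine behind the third function is the observation that the hypotheses force every setup block $\{u : X(u) = x\}$ of $\DDD$ to contain at least two elements. Indeed, if some block were a singleton $\{u_0\}$, then distinguishability of $\DDD$ and ${\DDD}'$ would require $X'(u_0)$ to equal every value in $X'(U)$, forcing $X'$ to be constant; but ${\DDD}'$ weakly infers $Y$, and since $Y$ is surjective onto $\B$ the two probes $\delta_{\pm 1}(Y)$ demand two distinct setup values of ${\DDD}'$, so $|X'(U)| \ge 2$, a contradiction. I would record this ``no singleton block'' property as the one place the full strength of the hypotheses enters.

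With that in hand I would build the third function by flipping $Y$ at a single point. Since $Y$ is surjective and $|U| \ge 3$ (itself forced by $\overline{Y} \ne \overline{Y'}$), some value of $Y$ is attained at two or more points; choose such a point $u_0$ and define $\Gamma_0$ to agree with $Y$ everywhere except $\Gamma_0(u_0) = -Y(u_0)$. Then $\Gamma_0$ equals $-Y$ only on the singleton $\{u_0\}$, so no setup block of $\DDD$ can satisfy $\Gamma_0 = -Y$; hence the probe $\delta_{-1}(\Gamma_0)$ cannot be met, and $\DDD$ does not infer $\Gamma_0$. Because $\Gamma_0$ agrees with $Y$ at the other (at least two) points and differs from it at $u_0$, it equals neither $Y$ nor $-Y$, so $\overline{\Gamma_0} \ne \overline{Y}$; and since $Y(u_0)$ is attained elsewhere, $\Gamma_0$ remains surjective.

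The one remaining thing to check, and the main (if minor) obstacle, is that the flip point can be chosen so that in addition $\overline{\Gamma_0} \ne \overline{Y'}$. There are at least two admissible flip points (two points sharing an attained value of $Y$), and flipping at two distinct points yields functions inequivalent to each other whenever $|U| \ge 3$; a short argument shows they cannot both be equivalent to $Y'$, since $\Gamma_0^{(u_0)} = Y'$ and $\Gamma_0^{(u_1)} = -Y'$ would force $Y(w) = -Y(w)$ at any unflipped point $w$. Choosing a flip point that dodges $\overline{Y'}$ then makes $\Gamma_0$ inequivalent to both $Y$ and $Y'$, and the three functions $Y$, $Y'$, $\Gamma_0$ complete the proof. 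I expect the only subtlety to be this bookkeeping of surjectivity and inequivalence under the flip, all of which is routine once the no-singleton-block property is in place.
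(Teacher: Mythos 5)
Your proof is correct --- I checked the no-singleton-fiber lemma, the one-point-flip argument, and the final bookkeeping, and each step holds up. The first two classes come exactly from where the paper's lead-in points: Prop.~\ref{prop:prop1} eliminates $\overline{Y}$ and Prop.~\ref{prop:2} eliminates $\overline{Y'}$, distinct by the inequivalence hypothesis. Your third class, however, is reached by a genuinely different route from the one the paper defers to~\cite{wolp08b}. The deferred argument, in the spirit of ``combining the two demon theorems,'' takes the third witness from the algebra generated by $Y$ and $Y'$: for a suitable sign pair $(\sigma,\tau)$ the conjunction $\Gamma_{\sigma\tau}(u)=1 \Leftrightarrow (Y(u),Y'(u))=(\sigma,\tau)$ cannot be inferred, because one of the two setup values required for $\DDD > \Gamma_{\sigma\tau}$ forces $(Y,Y')$ into a definite pattern on an $X$-fiber, which distinguishability lets you intersect with an $X'$-fiber on which $Y'=\pm Y$, reproducing the contradiction of Prop.~\ref{prop:2}; surjectivity and inequivalence of $Y,Y'$ guarantee that at most one of the four cells $Y^{-1}(\sigma)\cap Y'^{-1}(\tau)$ is empty, so some such $\Gamma_{\sigma\tau}$ is surjective and avoids both $\overline{Y}$ and $\overline{Y'}$. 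You instead isolate a structural fact --- distinguishability together with $\DDD' > Y$ forbids singleton fibers of $X$ --- after which any one-point flip of $Y$ at an admissible point defeats the probe $\delta_{-1}$ outright, and $Y'$ enters only in the final step of dodging $\overline{Y'}$. What your route buys is that the non-inference step never touches $Y'$ at all, and it yields a whole family of non-inferred classes (one per admissible flip point, at most one of which collides with $\overline{Y'}$) rather than a single witness; what the conjunction route buys is that it needs no claim about fiber sizes and keeps the third function expressed in terms of the given data $Y,Y'$. Your instinct that a genuinely new idea is needed for the third class is also vindicated: the most tempting ``combination'' witness, the agreement function $YY'$, genuinely fails, since one can construct a four-element reality satisfying all the hypotheses (distinguishability, $\DDD' > \DDD$, inequivalent conclusion functions) in which $\DDD$ does infer $YY'$.
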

\noindent In particular, Coroll.~\ref{coroll:1} means that if any device in a set of distinguishable devices
with inequivalent conclusion functions
is sufficiently powerful to infer all the others, then each of those others must fail
to infer at least three inequivalent functions.

\subsection{Strong inference --- inference of entire functions}
\label{sec:UTM}

As considered in computer science theory,
a computer is an entire map taking an arbitrary ``input'' given
by the value of a physical variable, $\Gamma_1(u)$, to an
``output'' also given by the value of a physical variable,
$\Gamma_2(u)$~\cite{hopcroft2000jd}. It is concerned with saying
how the value of $\Gamma_2(u)$ would change if the value of $\Gamma_1(u)$ changed. 
So it is concerned with two separate physical variables.
In contrast, weak inference is only concerned with inferring the value of a single physical variable, $\Gamma(u)$,
not the relationship between two variables. 

So we cannot really say that a device ``infers a computer'' 
if we only use the weak inference concept analyzed above. 
In this subsection we extend the theory
of inference devices to  include inference of entire functions. In addition to allowing us
to analyze inference of computers, this lays the groundwork for the analysis
in the next section of the relation between inference and algorithmic information theory.

To begin, suppose we have a function $f$ that relates two physical variables.
Since those two variables are themselves functions defined over $U$, in general $f$ is not.
To be more precise, suppose
that there are two function $S$ and $T$ defined over $U$, where $S$ refines $T$, 
and that for all $s \in S(u)$, $f(s) = T(S^{-1}(s))$ is single-valued. We want to define
what it means for a device 
to be able to ``emulate'' the entire mapping taking any $s \in S(U)$ to the associated value 
$f(s) = T(S^{-1}(s))$.

One way to do this is to strengthen the concept of weak inference, so that for any desired
input value $s \in S(U)$, the ID in question can
simultaneously infer the output value $f(s)$ \emph{while also forcing the input to have the value $s$}.
In other words, for any pair $(s \in S(U), t \in T(U))$, by appropriate choice of $x \in X(U)$ the ID $(X, Y)$ simultaneously answers
the probe $\delta_t$ correctly (as in the concept of weak inference) \emph{and}  forces $S(u) = s$.
In this way, when the ID ``answers $\delta_t$ correctly'', it is answering whether $f(s) = t$ correctly,
for the precise $s$ that it is setting. By being able to do this for all $s \in S(U)$, the ID can emulate
the function $f$.

Extending this concept from single-valued functions $f$ to include multivalued functions results in the
following definition:

\begin{definition}
\label{def:defi_5a}
Let $S$ and $T$ be functions both defined over $U$.
A device $(X, Y)$ {\bf{strongly infers}} $(S, T)$ iff
$\forall \; \delta \in {\mathcal{P}}(T)$ and all
$s \in S(U)$, $\exists \; x$ such that $X(u) = x \Rightarrow \{S(u) = s, Y(u) = \delta(T(u))\}$.
\end{definition}
\noindent If $(X, Y)$ strongly infers $(S, T)$ we write $(X, Y) \gg (S, T)$. 

By considering the special case where $T(U) = \B$, we can use strong
inference to formalize what it means for one device to emulate another device:
\begin{definition}
\label{def:defi_5}
A device $(X_1, Y_1)$ {\bf{strongly
infers}} a device $(X_2, Y_2)$ iff $\forall \; \delta \in {\mathcal{P}}(Y_2)$ and all
$x_2$, $\exists \; x_1$ such that $X_1 = x_1 \Rightarrow X_2 = x_2,
Y_1 = \delta(Y_2)$.
\end{definition}
\noindent See App. B in~\cite{wolp08b}
for a discussion of how unrestrictive Def.~\ref{def:defi_5} is.

Def.~\ref{def:defi_5} might seem peculiar, since $(X, Y) \gg (S, T)$ means that
in a certain sense the function $X$ controls what the input to the function $s \rightarrow T(S^{-1}(S))$
is. However, by a simple change in perspective of what device is doing the 
strong inference, we can see that Def.~\ref{def:defi_5} applies even to scenarios that (before
the change in perspective)
do not involve such control. This is illustrated in the following example:

\begin{example}
Suppose ${{\DDD}}_2$ is a device that (for example) can
be used to make predictions about the future state of the weather. Let $\Gamma$ be
the set of future weather states that the device can predict, and let $X_2$ be the set of
possible current meteorological 
conditions. So if this device can in fact infer the future state of the weather,
then for any question $\delta_\gamma$ of whether the future weather will have
value $\gamma$, there is some current condition 
$x_2$ such that if ${{\DDD}}_2$ is set up with that $x_2$, it correctly answers whether
the associated future state of the weather will be $\gamma$. On the other hand, if ${{\DDD}}_2 \not > \Gamma$,
then there is some such question of the form, ``will the future weather be $\gamma$?'' such
that for \emph{no} input to the device of the current meteorological conditions will the device
necessarily produce an answer $y_2$ to the question that is correct.

One way for us to be able to conclude that some device ${{\DDD}}' = (X', Y')$ can ``emulate'' this
behavior of ${{\DDD}}_2$ is to set up ${{\DDD}}_2$ with an arbitrary value $x_2$,
and confirm that ${{\DDD}}'$ can infer the associated value of $Y_2$. So
we require that for all $x_2$, and all $\delta \in \mathcal{P}(Y_2)$, $\exists x'$ such that 
if $X_2 = x_2$ and $X' = x'$, then $Y = \delta(Y_2)$. 

Now define a new device ${{\DDD}}_1$, with its setup function defined by $X_1(u) = (X'(u), X_2(u))$ and its conclusion
function equal to $Y'$. Then our condition for confirming that ${{\DDD}}'$ can emulate ${{\DDD}}_2$
gets replaced by the condition that for all $x_2$, and all $\delta \in \mathcal{P}(Y_2)$, $\exists x_1$ such that 
if $X_1 = x_1$, then $X_2 = x_2$ and $Y = \delta(Y_2)$. This is precisely the definition of strong inference.
\end{example}

Say we have a Turing machine (TM) $T_1$ that can emulate another TM, $T_2$ (e.g., $T_1$
could be a universal Turing machine (UTM), able to emulate any other TM). Such ``emulation'' means that $T_1$ can 
perform any particular calculation that $T_2$ can. The analogous relationship holds for IDs,
if we translate ``emulate'' to ``strongly infer'', and translate ``perform a particular calculation'' to ``weakly infer''.
In addition, like UTM-style emulation (but unlike weak inference), strong
inference is transitive. These results are formalized as follows:

\begin{proposition}
\label{thm:thm_2}
Let ${{\DDD}}_1$, ${{\DDD}}_2$  and ${{\DDD}}_3$ be a set of
inference devices over $U$ and $\Gamma$ a function over $U$. Then:

{\bf{i)}} ${{\DDD}}_1 \gg {{\DDD}}_2$ and ${{\DDD}}_2 > \Gamma$ $\Rightarrow$ ${{\DDD}}_1 >
\Gamma$.

{\bf{ii)}} ${{\DDD}}_1 \gg {{\DDD}}_2$ and ${{\DDD}}_2 \gg {{\DDD}}_3$ 
$\Rightarrow$ ${{\DDD}}_1 \gg {{\DDD}}_3$.
\end{proposition}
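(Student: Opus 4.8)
The plan is to prove both parts as short ``chaining'' arguments, in each case exploiting the fact that a conclusion function is surjective onto $\B = \{-1,1\}$ (Def.~\ref{def:id}), so that the probe $\delta_1 \in {\mathcal{P}}(Y_2)$ acts as the identity map on $\B$; that is, $\delta_1(Y_2(u)) = Y_2(u)$ for every $u \in U$. This single observation is what makes the two notions of strong inference compose cleanly with weak inference, since it lets me turn ``$\DDD_1$ strongly infers $\DDD_2$'' into the usable fact that some setup value $x_1$ forces $Y_1$ to \emph{copy} $Y_2$.

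For part (i), I would fix an arbitrary $\gamma \in \Gamma(U)$. Since $\DDD_2 > \Gamma$, there is a setup value $x_2$ with $X_2(u) = x_2 \Rightarrow Y_2(u) = \delta_\gamma(\Gamma(u))$ for all $u$. Now I invoke $\DDD_1 \gg \DDD_2$ with the particular probe $\delta_1 \in {\mathcal{P}}(Y_2)$ and this particular setup value $x_2$: Def.~\ref{def:defi_5} then supplies an $x_1$ with $X_1(u) = x_1 \Rightarrow \{X_2(u) = x_2,\, Y_1(u) = \delta_1(Y_2(u)) = Y_2(u)\}$. Chaining the two implications gives $X_1(u) = x_1 \Rightarrow Y_1(u) = Y_2(u) = \delta_\gamma(\Gamma(u))$. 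Since $\gamma$ was arbitrary, this is exactly the statement $\DDD_1 > \Gamma$.

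Part (ii) is the same move applied twice. I would fix an arbitrary $\delta \in {\mathcal{P}}(Y_3)$ and an arbitrary $x_3$. By $\DDD_2 \gg \DDD_3$ there is an $x_2$ with $X_2(u) = x_2 \Rightarrow \{X_3(u) = x_3,\, Y_2(u) = \delta(Y_3(u))\}$. Applying $\DDD_1 \gg \DDD_2$ again with the identity probe $\delta_1$ on $Y_2$ and this $x_2$ yields an $x_1$ with $X_1(u) = x_1 \Rightarrow \{X_2(u) = x_2,\, Y_1(u) = Y_2(u)\}$. Composing the implications, $X_1(u) = x_1 \Rightarrow \{X_3(u) = x_3,\, Y_1(u) = Y_2(u) = \delta(Y_3(u))\}$, which is precisely $\DDD_1 \gg \DDD_3$.

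I do not expect a genuine obstacle here; the content is entirely quantifier bookkeeping. The one place to be careful is the order of quantifiers in Def.~\ref{def:defi_5}: strong inference of a device promises a suitable $x_1$ for \emph{every} choice of probe and setup value $x_2$, so I am free to feed it exactly the $x_2$ produced by the preceding (weak or strong) inference, together with the identity probe $\delta_1$. If anything warrants a remark, it is only the justification that $\delta_1$ is the identity on $\B$ (equivalently one could use the negation probe $\delta_{-1}$ and absorb the sign), after which the implications compose by simple substitution.
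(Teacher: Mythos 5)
Your proof is correct and takes essentially the same approach as the paper's own proof (which is deferred to~\cite{wolp08b}): the argument there is exactly this quantifier chaining, using the fact that the probe $\delta_1 \in {\mathcal{P}}(Y_2)$ acts as the identity on $\B$ to convert $\DDD_1 \gg \DDD_2$ into ``$Y_1$ copies $Y_2$'' on the chosen setup region, and then composing the resulting implications for both parts. Nothing further is needed.
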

\noindent In addition, strong inference implies weak inference, i.e., ${{\DDD}}_1 \gg {{\DDD}}_2
\Rightarrow {{\DDD}}_1 > {{\DDD}}_2$.  

Most of the properties of weak inference have analogs for strong inference:

\begin{proposition}
\label{prop:prop2}
Let ${{\DDD}}_1$ be a device over $U$.

{\bf{i)}} There is a device ${{\DDD}}_2$ such that ${{\DDD}}_1 \not \gg {{\DDD}}_2$.

{\bf{ii)}} Say that $\forall \; x_1$, $|X_1^{-1}(x_1)| >
2$.  Then there is a device ${{\DDD}}_2$ such that ${{\DDD}}_2 \gg {{\DDD}}_1$.
\end{proposition}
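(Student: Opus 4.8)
The plan is to prove the two parts separately, handling part (i) as an immediate corollary of the first demon theorem (Prop.~\ref{prop:prop1}) and part (ii) by an explicit fiber-by-fiber construction of the inferring device.

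For part (i), I would exploit the fact, stated just above, that strong inference implies weak inference of the conclusion function: $\DDD_1 \gg \DDD_2 \Rightarrow \DDD_1 > \DDD_2$, where $\DDD_1 > \DDD_2$ means $\DDD_1$ weakly infers $Y_2$. So I take $\DDD_2$ to be any device whose conclusion function equals $Y_1$, the simplest choice being $\DDD_2 = \DDD_1$ itself. If $\DDD_1 \gg \DDD_2$ held, then $\DDD_1$ would weakly infer $Y_1$. But Prop.~\ref{prop:prop1}, applied with $\Gamma = Y_1$ and the negation probe $\delta_{-1}$ (which on $\mathbb{B}$ is $y \mapsto -y$), shows this is impossible: the value $\gamma = -1$ would demand $Y_1(u) = -Y_1(u)$ on a nonempty setup fiber, which cannot hold in $\{-1,1\}$. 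This contradiction gives $\DDD_1 \not\gg \DDD_2$.

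For part (ii), I would build $\DDD_2 = (X_2, Y_2)$ over the partition of $U$ induced by $X_1$. Since $Y_1$ is surjective onto $\mathbb{B}$, the probe set $\mathcal{P}(Y_1)$ contains exactly two elements, the identity $\delta_1$ and the negation $\delta_{-1}$; hence $\DDD_2 \gg \DDD_1$ requires, for each $x_1 \in X_1(U)$, just two setup values of $\DDD_2$, one realizing $Y_2 = Y_1$ and one realizing $Y_2 = -Y_1$, each on a nonempty subfiber that forces $X_1 = x_1$. Concretely, for each $x_1$ I split the fiber $F_{x_1} = X_1^{-1}(x_1)$ into two nonempty pieces $A_{x_1}$ and $B_{x_1}$, assign $X_2$ a fresh distinct value on each piece (all such values distinct across all $x_1$), and set $Y_2 = Y_1$ on $A_{x_1}$ and $Y_2 = -Y_1$ on $B_{x_1}$. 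Then for probe $\delta_1$ and target $x_1$ the witness is the $X_2$-value on $A_{x_1}$, and for $\delta_{-1}$ it is the value on $B_{x_1}$; in each case $X_2(u) = x_2$ forces $u \in F_{x_1}$, hence $X_1(u) = x_1$, and $Y_2(u) = \delta(Y_1(u))$ as required. Since this covers both probes and every $x_1 \in X_1(U)$, it verifies $\DDD_2 \gg \DDD_1$.

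The hard part, and the reason for the hypothesis $|X_1^{-1}(x_1)| > 2$, is ensuring that $\DDD_2$ is a legitimate device, i.e., that $Y_2$ is surjective onto $\mathbb{B}$. The two-way split forces $\delta_1$ to copy $Y_1$ and $\delta_{-1}$ to flip it, and one checks that a fiber of only two elements on which $Y_1$ takes both values yields a constant $Y_2$ (both subfibers being singletons), which would be illegal. With at least three elements per fiber this degeneracy disappears: if $Y_1$ is nonconstant on $F_{x_1}$ I place two points of opposite $Y_1$-value into $A_{x_1}$, so that $Y_2 = Y_1$ already attains both signs there while $B_{x_1}$ remains nonempty; and if $Y_1$ is constant on $F_{x_1}$, then $Y_2$ equals that constant on $A_{x_1}$ and its negation on $B_{x_1}$. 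Either way $Y_2$ attains both values of $\mathbb{B}$, so $Y_2$ is surjective. I expect this surjectivity bookkeeping, rather than the inference verification, to be the only genuinely delicate step.
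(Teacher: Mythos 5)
Your proposal is correct, and it takes essentially the same route as the paper's own argument (which this paper defers to~\cite{wolp08b}): part (i) by noting that strong inference implies weak inference and that no device can weakly infer its own conclusion function (the negation-probe argument of Prop.~\ref{prop:prop1}, applied with $\DDD_2 = \DDD_1$), and part (ii) by the fiber-splitting construction in which each $X_1^{-1}(x_1)$ is divided into two nonempty pieces carrying $Y_2 = Y_1$ and $Y_2 = -Y_1$ respectively. Your observation that the hypothesis $|X_1^{-1}(x_1)| > 2$ is exactly what rescues surjectivity of $Y_2$ (the only failure mode being a two-element fiber on which $Y_1$ is nonconstant) is also the correct accounting of why that hypothesis appears.
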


\noindent Strong inference also obeys a restriction that is analogous to Prop.~\ref{prop:dist_not_infer}, except that
there is no requirement of setup-distinguishability:
\begin{proposition}
\label{thm:thm_3}
No two devices can strongly infer each other.
\end{proposition}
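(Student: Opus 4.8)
The plan is to argue by contradiction, extending the liar-paradox (Epimenides) construction that underlies the other impossibility results. Suppose there were two devices $\mathcal{D}_1 = (X_1,Y_1)$ and $\mathcal{D}_2 = (X_2,Y_2)$ with $\mathcal{D}_1 \gg \mathcal{D}_2$ and $\mathcal{D}_2 \gg \mathcal{D}_1$. Since each conclusion function is binary, the probe set $\mathcal{P}(Y_i)$ contains exactly two probes: the identity probe $\delta_1$ (with $\delta_1(y) = y$ on $\mathbb{B}$) and the negation probe $\delta_{-1}$ (with $\delta_{-1}(y) = -y$). The key idea is to invoke the identity probe in one direction of strong inference and the negation probe in the other, so that chaining the two forced implications demands $Y_1 = -Y_1$ at some world.

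Concretely, first I would fix any $x_1^* \in X_1(U)$ and apply $\mathcal{D}_2 \gg \mathcal{D}_1$ with the identity probe $\delta_1 \in \mathcal{P}(Y_1)$ at the value $x_1^*$, obtaining some $x_2^* \in X_2(U)$ with $X_2 = x_2^* \Rightarrow X_1 = x_1^*,\ Y_2 = Y_1$. Next I would apply $\mathcal{D}_1 \gg \mathcal{D}_2$ with the negation probe $\delta_{-1} \in \mathcal{P}(Y_2)$ at the value $x_2^*$, obtaining some $x_1^{**} \in X_1(U)$ with $X_1 = x_1^{**} \Rightarrow X_2 = x_2^*,\ Y_1 = -Y_2$. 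Then I would pick any $u_0$ realizing $X_1(u_0) = x_1^{**}$: the second implication gives $X_2(u_0) = x_2^*$ and $Y_1(u_0) = -Y_2(u_0)$, while the first, now triggered by $X_2(u_0) = x_2^*$, gives $Y_2(u_0) = Y_1(u_0)$. Combining these yields $Y_1(u_0) = -Y_2(u_0) = -Y_1(u_0)$, which is impossible for a $\mathbb{B}$-valued function. Note that the argument never assumes the two devices are distinct, so the self case $\mathcal{D}_1 = \mathcal{D}_2$ is automatically subsumed, and — unlike Prop.~\ref{prop:dist_not_infer} — no distinguishability hypothesis is needed.

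The one point requiring care, and the only real obstacle, is the status of the existentially quantified setup values: the argument needs the witnesses $x_2^*$ and $x_1^{**}$ to be genuinely attained, i.e., to lie in $X_2(U)$ and $X_1(U)$ respectively, so that a world $u_0$ with $X_1(u_0) = x_1^{**}$ actually exists and the implications are not vacuous. This is precisely the convention already fixed for weak inference in Def.~\ref{def:weak_inf} (where one writes $\exists x \in X(U)$), and I would read Def.~\ref{def:defi_5} the same way; once that reading is in force, the chaining above is immediate.

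As an alternative, I would mention a shorter route: by the transitivity of strong inference (Prop.~\ref{thm:thm_2}(ii)), mutual strong inference forces $\mathcal{D}_1 \gg \mathcal{D}_1$, so it suffices to show that no device strongly infers itself. That single fact follows in one line from the negation probe, since $X_1 = x_1 \Rightarrow Y_1 = -Y_1$ is unsatisfiable on the nonempty set $X_1^{-1}(x_1)$. I would present the direct diagonal argument as primary, since it is self-contained and makes the Epimenides structure explicit, and cite the transitivity shortcut as a confirmation.
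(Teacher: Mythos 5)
Your proof is correct and is essentially the same argument the paper relies on: the paper defers this proposition's proof to~\cite{wolp08b}, which uses exactly this Epimenides-style chaining --- the identity probe in one direction of strong inference and the negation probe in the other, yielding $Y_1(u_0) = -Y_1(u_0)$ at a world realizing the witness setup value. Your attention to the attained-value reading of the existential quantifier ($\exists x$ meaning $\exists x \in X(U)$, per the paper's stated convention) and the transitivity shortcut via Prop.~\ref{thm:thm_2}(ii) are both consistent with the paper.
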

%

%
%

Recall that there are entire functions that are not computable by any TM, in the sense
that no TM can correctly compute the value of that function for every input to that function. On the other hand, trivially,
any single output value of a function \emph{can} be computed by some TM (just choose the TM that
prints that value and then halts). The analogous distinction holds for inference devices:

\begin{proposition}
Let $U$ be any countable space with at least two elements.
\begin{enumerate}
\item For any function $\Gamma$ over $U$ such that $|\Gamma(U)| \ge 3$ there is a device $\DDD$ that weakly infers $\Gamma$;
\item There is a (vector-valued) function 
$(S, T)$ over $U$ that is not strongly inferred by any device.
\end{enumerate}
\label{prop:whats_inferrable}
\end{proposition}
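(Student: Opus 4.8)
The plan is to treat the two parts independently, guessing an explicit device in each case. For part~1, rather than engineering an elaborate setup function I would take $X=\Gamma$ and make the conclusion constant on the level sets of $\Gamma$: fix any $\mu_0\in\Gamma(U)$ and set $\DDD=(X,Y)$ with $X=\Gamma$ and $Y=\delta_{\mu_0}(\Gamma)$. First I would confirm this is a device, i.e.\ that $Y$ is surjective onto $\B$: it takes the value $1$ on $\Gamma^{-1}(\mu_0)$ and, since $|\Gamma(U)|\ge 2$, the value $-1$ on the (nonempty) preimage of the other values. To verify $\DDD>\Gamma$ I would fix an arbitrary $\gamma\in\Gamma(U)$ and search for a setup value. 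Because $X=\Gamma$, every setup value is some $\mu\in\Gamma(U)$, and on the fiber $X^{-1}(\mu)=\Gamma^{-1}(\mu)$ both $\Gamma$ and $Y$ are constant (equal to $\mu$ and to $\delta_{\mu_0}(\mu)$), so the defining requirement $Y=\delta_\gamma(\Gamma)$ on that fiber collapses to the single scalar identity $\delta_{\mu_0}(\mu)=\delta_\gamma(\mu)$.

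It then remains to solve that identity for some admissible $\mu$. If $\gamma=\mu_0$ the two probes coincide and any $\mu$ works; if $\gamma\ne\mu_0$, the choices $\mu=\mu_0$ and $\mu=\gamma$ each produce a $+1$ on one side and a $-1$ on the other, but any third value $\mu\in\Gamma(U)\setminus\{\mu_0,\gamma\}$ makes both sides equal $-1$. Such a $\mu$ exists exactly because $|\Gamma(U)|\ge 3$, and this is the sole place the hypothesis is consumed. I would also note that the bound is sharp: in the extreme case $|U|=|\Gamma(U)|=2$ surjectivity of $Y$ is incompatible with answering both probes, so the claim genuinely fails for $|\Gamma(U)|=2$. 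I expect this ``spare third value'' bookkeeping to be the only real content of part~1.

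For part~2 the idea is that strong inference forces the setup to pin down $S$, while answering two distinct probes about $T$ at one forced input requires genuine room inside the corresponding fiber of $S$; removing that room should defeat every device at once. Concretely I would take $S$ to be any injective function on $U$ (say the identity) and $T$ any function with $|T(U)|\ge 2$, and argue by contradiction. Suppose $\DDD=(X,Y)\gg(S,T)$. Fix $s\in S(U)$; injectivity makes $S^{-1}(s)=\{u_s\}$ a singleton, and put $\tau:=T(u_s)$. Applying Def.~\ref{def:defi_5a} to the input $s$ and the probe $\delta_\tau$ yields a setup value $x$ whose fiber lies in $S^{-1}(s)$, hence equals $\{u_s\}$; it forces $X(u_s)=x$ and $Y(u_s)=\delta_\tau(\tau)=1$. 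Doing the same with a probe $\delta_{t'}$ for some $t'\in T(U)$ with $t'\ne\tau$ (available since $|T(U)|\ge 2$) forces $X(u_s)=x'$ and $Y(u_s)=\delta_{t'}(\tau)=-1$; since $X$ is single-valued, $x=x'$ and so $Y(u_s)=1=-1$, a contradiction. The one point requiring care is the standing convention that a setup value ranges over $X(U)$ (nonempty fiber), without which the implication in Def.~\ref{def:defi_5a} is vacuously satisfiable and strong inference trivial, contradicting Prop.~\ref{thm:thm_3}. The genuine obstacle here is conceptual rather than computational: seeing that singleton $S$-fibers form a single, \emph{device-independent} barrier --- the mirror of the large-fiber sufficiency condition in Prop.~\ref{prop:prop2}(ii) --- so that one fixed $(S,T)$ suffices. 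Neither construction actually uses countability of $U$.
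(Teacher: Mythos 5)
Your proof is correct and follows essentially the same route as the paper's: both parts are settled by explicit constructions in which the hypothesis $|\Gamma(U)|\ge 3$ supplies a ``spare'' third value letting the device answer every probe other than the distinguished one in the negative, and in which injectivity of $S$ forces setup fibers to be singletons so that probes of $T$ at a forced input cannot all be answered. The only differences are cosmetic: for part 1 the paper takes $X$ to be the identity and $Y$ the indicator of a single point $\bar{u}$ (rather than your $X=\Gamma$, $Y=\delta_{\mu_0}(\Gamma)$), and for part 2 it takes $S=T=$ identity and derives the contradiction by showing $Y\equiv 1$, violating surjectivity of the conclusion function, whereas you play two probes against each other at a single point --- a mild generalization (any injective $S$, any $T$ with $|T(U)|\ge 2$) that never invokes surjectivity.
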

\begin{proof}
The proof is by construction.
Let $X(u)$ be the identity function (so that each $u \in U$ has its own, unique value $x$).
Choose $Y(u)$ to equal $1$ for exactly one $u$, $\bar{u}$. Then whatever the
value $\gamma := \Gamma(\bar{u}) \in \Gamma(U)$ happens to be, for the probe
$\delta_{\gamma}$ we can choose $x = X(\bar{u})$, so that the device correctly answers `yes' to 
the question of whether $\Gamma(u) = \Gamma(\bar{u})$. For any other probe $\delta_{\gamma'}$, note that
since $|\Gamma(U)| \ge 3$, there must be a $u' \in U$ such that $\Gamma(u') \ne \gamma'$. Moreover, by
construction $Y(u') = -1$. So if we choose $x$ to be $X(u')$, then the device correctly answers `no' to 
the question of whether $\Gamma(u') = \gamma'$. Since this is true for any $\gamma' \ne \Gamma(\bar{u})$,
this completes a proof of the first claim.

We also prove the second claim by construction.
Choose both $S$ and $T$ to be the identity function,
i.e., $S(u) = u$ and $T(u) = u$ for all $u$, so that $|S(U)| = |T(U)| = |U|$. 
So by the first requirement for some device $(X, Y)$ to strongly infer $(S, T)$, it must be that
for any $s$, there is a value of $X$, $x(s)$, such that $X(u) = x(s) \Rightarrow S(u) = s$. 
Since $S$ is a bijection, this means that $x(s)$
must be a single-valued function, for each $s$ choosing a unique ($x$ which in turn chooses a unique) 
$u$. Since $T$ is also a bijection, this means
that $Y(X^{-1}(x(s))$ must equal $1$, in order for the device to correctly answer `yes' to the probe
of whether $T(u) = \delta_{T(S^{-1}(s))}$. However since this is true for all $s \in S(U)$, it is true for all $u \in U$.
So $Y(U)$ is a singleton, contradicting the requirement that the conclusion function of any device be binary-valued.
\end{proof}

\section{Inference in stochastic universes}

\subsection{Stochastic inference}

There are several ways to extend the analysis above to incorporate a
probability measure $P$ over $U$, so that inference is not exact, but
only holds under some probability. In this subsection we present some
of the elementary properties of one such measure of stochastic inference. 

Once there is a distribution over $U$, all functions like $X$, $Y$
and $\Gamma$ become random variables. Now
recall that $\delta_\gamma(\Gamma)$ is shorthand
for the function $u \in U \rightarrow \delta_\gamma(\Gamma(u))$ --- and so now it is 
a random variable. Bearing this in mind, the measure of stochastic inference
we will consider here is defined as follows:

%
\begin{definition}
\label{def:def9}
Let $P(u \in U)$ be a  probability measure and $\Gamma$ a function with
domain $U$ and finite range. Then we say that a device $(X, Y)$
(weakly) infers $\Gamma$ {\bf{with (covariance) accuracy}} 
\begin{eqnarray*}
cov(\DDD, \Gamma) &:=&
\frac{\sum_{\delta \in {\mathcal{P}}(\Gamma)}\max_{x} \big[{\mathbb{E}}_P(Y \delta(\Gamma) \mid x)\big]}{|\Gamma(U)|} 
\end{eqnarray*}
\end{definition}
\noindent Writing it out explicitly, for countable $U$, the numerator in Def.~\ref{def:def9} is
\ba
\sum_{\gamma \in \Gamma(U)} \max_{x \in X(U)} \bigg[ \sum_u Y(u) \delta_\gamma(\Gamma(u)) P( u \mid x) \bigg]
\ea
Intuitively, this is a probe-averaged, best-case (over $x \in X(U)$) probability of answering the probe correctly.


Covariance accuracy is a way to quantify the degree to which $\DDD > \Gamma$ when the inference is subject to uncertainty.
Clearly, $cov(\DDD,\Gamma) \le 1.0$, and if $P$ is nowhere 0, 
then $cov(\DDD,\Gamma) = 1.0$ iff $\DDD > \Gamma$.{\footnote{A
subtlety with the definition of an inference devices arises in this stochastic setting:
we can either require that $Y$ be surjective, as in Def. 1, or instead require that
$Y$ be ``{stochastically surjective}'' in the sense that $\forall y \in {\mathbb{B}}, \; \exists
u$ with non-zero probability such that $Y(u) = y$. The distinction
between requiring surjectivity and stochastic surjectivity of $Y$ will not
arise here.}}
Covariance accuracy obeys the following bound:

\begin{proposition}
\label{prop:cov_lb}
Let $P$ be a probability measure over $U$, 
$\DDD = (X, Y)$ a device, and $\Gamma$ a function over $U$ with finite $|\Gamma(U)|$.
Then
\begin{equation*}
 cov(\DDD,\Gamma) \geq \frac{(2 - |\Gamma(U)|) \max_x  \big[ {\mathbb{E}}_P(Y \mid x ) \big]}{ |\Gamma(U)|}
\end{equation*}
\end{proposition}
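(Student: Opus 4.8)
The plan is to bound the probe-averaged sum defining $cov(\DDD,\Gamma)$ from below by collapsing all of its inner maximizations onto a single common setup value, which then lets me exploit linearity of the conditional expectation together with a combinatorial identity satisfied by the probes. Writing $n := |\Gamma(U)|$, I would first unpack the definition as $cov(\DDD,\Gamma) = \frac{1}{n}\sum_{\gamma \in \Gamma(U)} \max_x \mathbb{E}_P(Y\,\delta_\gamma(\Gamma) \mid x)$, where the sum ranges over the $n$ probes $\delta_\gamma$ of $\Gamma$.

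The key elementary fact I would establish next is that the probes sum to a constant. For any fixed value $\gamma' \in \Gamma(U)$, exactly one probe $\delta_\gamma$ (the one with $\gamma = \gamma'$) returns $+1$ and the remaining $n-1$ return $-1$, so $\sum_{\gamma \in \Gamma(U)} \delta_\gamma(\gamma') = 1 - (n-1) = 2 - n$. Since this holds for every value $\gamma'$, the function $u \mapsto \sum_{\gamma} \delta_\gamma(\Gamma(u))$ is identically equal to the constant $2-n$ on all of $U$.

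With this in hand I would choose $x^\ast$ to be a setup value attaining $\max_x \mathbb{E}_P(Y \mid x)$ (restricting, as usual, to $x$ with $P(X=x)>0$ so the conditionals are defined). Lower-bounding each inner maximization by its value at this common $x^\ast$ gives $cov(\DDD,\Gamma) \ge \frac{1}{n}\sum_\gamma \mathbb{E}_P(Y\,\delta_\gamma(\Gamma)\mid x^\ast)$. Linearity of the conditional expectation then lets me pull the sum over $\gamma$ inside, and substituting the constant from the previous step yields $\sum_\gamma \mathbb{E}_P(Y\,\delta_\gamma(\Gamma)\mid x^\ast) = \mathbb{E}_P\big(Y\,(2-n)\mid x^\ast\big) = (2-n)\,\mathbb{E}_P(Y\mid x^\ast)$. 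Dividing by $n$ and recalling that $\mathbb{E}_P(Y\mid x^\ast) = \max_x \mathbb{E}_P(Y\mid x)$ produces exactly the claimed bound.

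There is no serious obstacle here; the only point that warrants care is the handling of signs. Since $|\Gamma(U)| \ge 2$, the factor $2-n$ is non-positive, and the step that bounds each inner $\max_x$ by its value at a single $x^\ast$ is valid for \emph{any} fixed $x^\ast$. The choice $x^\ast = \arg\max_x \mathbb{E}_P(Y \mid x)$ is precisely what reproduces the stated inequality; in fact, because the coefficient $2-n$ is non-positive, taking instead the \emph{minimizing} $x^\ast$ would give a formally stronger lower bound, so the stated bound is the weakest (and cleanest) member of this family and requires no additional work to justify.
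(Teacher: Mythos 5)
Your proof is correct and is essentially identical to the paper's: the paper likewise fixes $x_m := \mathrm{argmax}_x\,\mathbb{E}_P(Y \mid x)$, lower-bounds each probe's maximum by its value at $x_m$, and swaps the sum over probes with the expectation using the same identity $\sum_{\gamma} \delta_\gamma(\gamma') = 2 - |\Gamma(U)|$ (which the paper applies implicitly in a single line and you state explicitly). Your closing observation that the minimizing $x^\ast$ would give a formally stronger bound is a correct aside not present in the paper.
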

 
\begin{proof}  
For any probe $\delta_\gamma$ of  $\gamma \in \Gamma(U)$, 
let $M_\gamma = \max_x \big[ {\mathbb{E}}_P(Y \delta_\gamma(\Gamma) \mid x ) \big]$. 
Define $x_m := \rm{argmax}_x \big[ {\mathbb{E}}_P(Y \mid x ) \big]$. 
Then $M_\gamma \geq  {\mathbb{E}}_P(Y \delta_\gamma(\Gamma) \mid x_m )$ and 
\begin{equation*}
\begin{split}
cov(\DDD,\Gamma)& = \frac{\sum_{\gamma \in \Gamma(U)} M_\gamma}{|\Gamma(U)|} \geq 
	\frac{\sum_{\gamma \in \Gamma(U)} {\mathbb{E}}_P(Y \delta_\gamma(\Gamma) \mid x_m )}{|\Gamma(U)|} \\
	    & =  \frac{ \sum_u P(u \mid x_m)\sum_\gamma Y(u) \delta_\gamma(\Gamma(u))}{|\Gamma(U)|}  \\
	    & = \frac{\sum_u P(u \mid x_m) (2 - |\Gamma(U)|) Y(u)}{|\Gamma(U)|} \\
            &   = \frac{(2 - |\Gamma(U)|) {\mathbb{E}}_P(Y \mid x_m )}{ |\Gamma(U)|}\\
            &   = \frac{ (2 - |\Gamma(U)|) \max_x \big[ {\mathbb{E}}_P(Y\mid x )\big]}{ |\Gamma(U)|}.
\end{split}
\end{equation*}
\end{proof}

\noindent This bound is sharp, as can be seen from the following example.

\begin{example}
Fix some device $\DDD$ and a value $|\Gamma(U)| < \infty$.
Next divide each cell of the partition $X \times Y$ into $|\Gamma(U)|$ parts 
and assign them equal probability. Also map those cells to 1, $\dotso, |\Gamma(U)|$, 
so that $\Gamma(U) = \{1, \dotso, |\Gamma(U)|\}.$
For any given $x \in X$, let $a_x = P(Y = 1 \mid x), b_x = P(Y = -1 \mid x)$. 
For any $x \in X(U), \gamma \in \Gamma(U)$ and associated probe $\delta_\gamma $, 
\begin{equation*}
\begin{split}
\mathbb{E}_P(Y \delta_\gamma(\Gamma) \mid x)& = \frac{a_x + (|\Gamma(U)| - 1) b_x - (|\Gamma(U)| - 1) a_x + b_x}{|\Gamma(U)|}\\
            & = \frac{(2 - |\Gamma(U)|) (a_x - b_x)}{|\Gamma(U)|} = \frac{(2 - |\Gamma(U)|){\mathbb{E}}_P(Y \mid x)}{|\Gamma(U)|}. 
\end{split}
\end{equation*}
We can use this to evaluate
\begin{equation*}
\begin{split}
M_\gamma& := \max_x \big[{\mathbb{E}}_P(Y \delta_\gamma(\Gamma) \mid x)  \big] \\
   & = \frac{(2 - |\Gamma(U)|) \; \max_x \big[ {\mathbb{E}}_P(Y \mid x)\big]}{|\Gamma(U)|}
 \end{split}
\end{equation*}
Since this is the same for all probe parameter values $\gamma$,
\begin{equation*}
cov(\DDD,\Gamma) = \frac{(2 - |\Gamma(U)|) \; \max_x \big[ {\mathbb{E}}_P(Y \mid x)\big]}{ |\Gamma(U)|}
\end{equation*}
which establishes the claim.
\label{ex:cov_lb}
\end{example}

%

The term $\frac{2 - |\Gamma(U)|}{ |\Gamma(U)|}$ in Prop.~\ref{prop:cov_lb} depends only on the size of the space $\Gamma(U)$.{\footnote{Note 
that this term $[2 - |\Gamma(U)|] \;/\; |\Gamma(U)|$ can be negative for $|\Gamma(U)| > 2$. 
This reflects our use of expected values and the convention that $\B = \{-1, 1\}$.}}
The other term, max$_x \big( {\mathbb{E}}_P(Y \mid x) \big)$, can be viewed as a measure of the 
``inference power'' of the device, by analogy with the power of a statistical test.
It quantifies the device's ability to say `yes'.

In the previous section some \emph{a priori} restrictions on the capabilities of IDs 
were presented. These restrictions involved whether certain properties of IDs can(not) be guaranteed with complete certainty.
When we have a probability distribution over $U$ it is appropriate to replace
consideration of ``guaranteed'' properties with consideration of properties that are likely but 
not necessarily guaranteed, e.g., as quantified with covariance accuracy.
When we do that the restrictions of the previous section get modified, sometimes quite substantially. This is illustrated in the next two
propositions.

First, by Prop.~\ref{thm:thm_2}(i), if for devices $\DDD_1$, $\DDD_2$ and function $\Gamma$, $\DDD_1 \gg \DDD_2$ and $\DDD_2 > \Gamma$, 
then $\DDD_1 > \Gamma$. 
In covariance terms, this says that if $\DDD_1 \gg \DDD_2$ and $cov(\DDD_2, \Gamma) = 1.0$, then $cov(\DDD_1, \Gamma) = 1.0$.
What happens to $cov(\DDD_1, \Gamma)$ if $cov(\DDD_2, \Gamma) < 1.0$? A partial answer is given by
the following result:

\begin{proposition}
\label{prop:cov-sinf}
There are devices $\DDD$, $\DDD'$, probability distribution $P$ defined over $U$,
and function $\Gamma$, such that $\DDD' \gg \DDD$ and  
$cov(\DDD, \Gamma)$ is arbitrarily close to 1.0 while $cov(\DDD', \Gamma)$ = 0.
\end{proposition}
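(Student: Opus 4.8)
The plan is to exhibit an explicit counterexample showing that the deterministic transitivity of Prop.~\ref{thm:thm_2}(i) degrades maximally once inference is only probabilistic. First I would reduce the target $cov(\DDD',\Gamma) = 0$ to a transparent condition. Taking $\Gamma$ binary, $\Gamma(U) = \{1,2\}$, there are exactly two probes $\delta_1$ and $\delta_2 = -\delta_1$, so writing $f(x') := \E_P(Y'\delta_1(\Gamma)\mid x')$, Def.~\ref{def:def9} gives
\[
cov(\DDD',\Gamma) = \frac{\max_{x'} f(x') + \max_{x'}(-f(x'))}{2} = \frac{\max_{x'} f(x') - \min_{x'} f(x')}{2}.
\]
Hence $cov(\DDD',\Gamma) = 0$ iff $f$ is constant across the setups of $\DDD'$; it suffices to arrange $f(x') = 0$ for every $x' \in X'(U)$, i.e.\ to make $Y'$ uncorrelated with $\Gamma$ conditioned on each setup.

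Second, I would build $\DDD = (X,Y)$ together with $P$ and $\Gamma$ so that $cov(\DDD,\Gamma) = 1 - 2\epsilon$. Give $\DDD$ two setups $x_+,x_-$; conditioned on $x_+$ put $Y = \delta_1(\Gamma)$ with probability $1-\epsilon$ and $Y = -\delta_1(\Gamma)$ (an ``error'') with probability $\epsilon$, and symmetrically conditioned on $x_-$ put $Y = \delta_2(\Gamma)$ with probability $1-\epsilon$. A direct evaluation gives $\max_x \E_P(Y\delta_1(\Gamma)\mid x) = \max_x \E_P(Y\delta_2(\Gamma)\mid x) = 1-2\epsilon$, whence $cov(\DDD,\Gamma) = 1-2\epsilon \to 1$ as $\epsilon \to 0$.

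Third --- the heart of the construction --- I would define $\DDD' = (X',Y')$ so that $\DDD' \gg \DDD$ while every $X'$-cell is $\Gamma$-balanced. By Def.~\ref{def:defi_5}, strong inference only demands, for each of the four pairs $(x,\delta) \in \{x_+,x_-\}\times\{\mathrm{id},\mathrm{neg}\}$, a \emph{single} setup $x'$ whose cell lies inside $\{X = x\}$ and on which $Y' = \delta(Y)$. The crucial point is that this cell may be an arbitrarily small, atypical sliver of $\{X = x\}$: I would place it on the rare error events together with an equal measure of correct events, so that $Y\,\delta_1(\Gamma)$ is balanced there and $f(x') = 0$. The error mass $\epsilon$ inside each $\{X=x\}$ is split into the two disjoint slivers needed for the id- and neg-probes, which is possible precisely because $\epsilon > 0$. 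The remaining ``bulk'' of $U$ --- the $1-2\epsilon$ fraction on which $Y$ is strongly correlated with $\Gamma$ --- carries no strong-inference obligation, so I would gather it into further $X'$-cells and define $Y'$ there to be $\Gamma$-uncorrelated (for instance an independent fair coin, or a constant on a $\Gamma$-balanced cell), giving $f = 0$ on those cells too. Then $\DDD' \gg \DDD$ holds, $cov(\DDD,\Gamma) = 1-2\epsilon$, and $f \equiv 0$ forces $cov(\DDD',\Gamma) = 0$.

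The main obstacle is the apparent contradiction that strong inference compels $Y'$ to reproduce $\pm Y$ inside each $\{X = x\}$, where $Y$ is nearly perfectly correlated with $\Gamma$, which naively should drag $cov(\DDD',\Gamma)$ upward. The resolution, and the conceptual content of the proposition, is the observation that a single setup cell witnessing strong inference can be chosen as a sliver (vanishing as $\epsilon \to 0$) sitting exactly on $\DDD$'s errors, so that the correlation seen there is destroyed, while the bulk where the correlation actually lives is never assigned a $\pm Y$ conclusion. A secondary point to verify is that the errors can be partitioned into the two disjoint balanced slivers required per $x$, which is why the argument needs $\epsilon$ strictly positive and breaks down exactly at $\epsilon = 0$ --- consistent with the deterministic transitivity of Prop.~\ref{thm:thm_2}(i).
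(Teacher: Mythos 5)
Your proposal is correct and uses the same mechanism as the paper's proof: an explicit counterexample in which the cells witnessing $\DDD' \gg \DDD$ are placed on probability-balanced mixtures of $\DDD$'s correct and erroneous conclusions, so that $\E_P(Y'\delta(\Gamma)\mid x') = 0$ on every $X'$-cell while $cov(\DDD,\Gamma) = 1-2\epsilon \to 1$. The paper instantiates this as a concrete ten-state table with parameter $p$ rather than your generic sliver-plus-bulk construction, but the idea---and the reason the argument must fail exactly at $\epsilon = 0$, consistent with Prop.~\ref{thm:thm_2}(i)---is identical.
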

\begin{proof}
The proof is by example. 

Let $U$ have ten states, labeled A$, \ldots, $ J and suppose that the functions
$P, \Gamma, \DDD = (X, Y)$ and $\DDD' = (X', Y')$ are
as in Fig.~\ref{fig:table-ws1}, with $0 \le p \le 1$.

\begin{figure}[tbp]
  	\hglue-1.2cm
        \includegraphics[width=1.5\columnwidth]{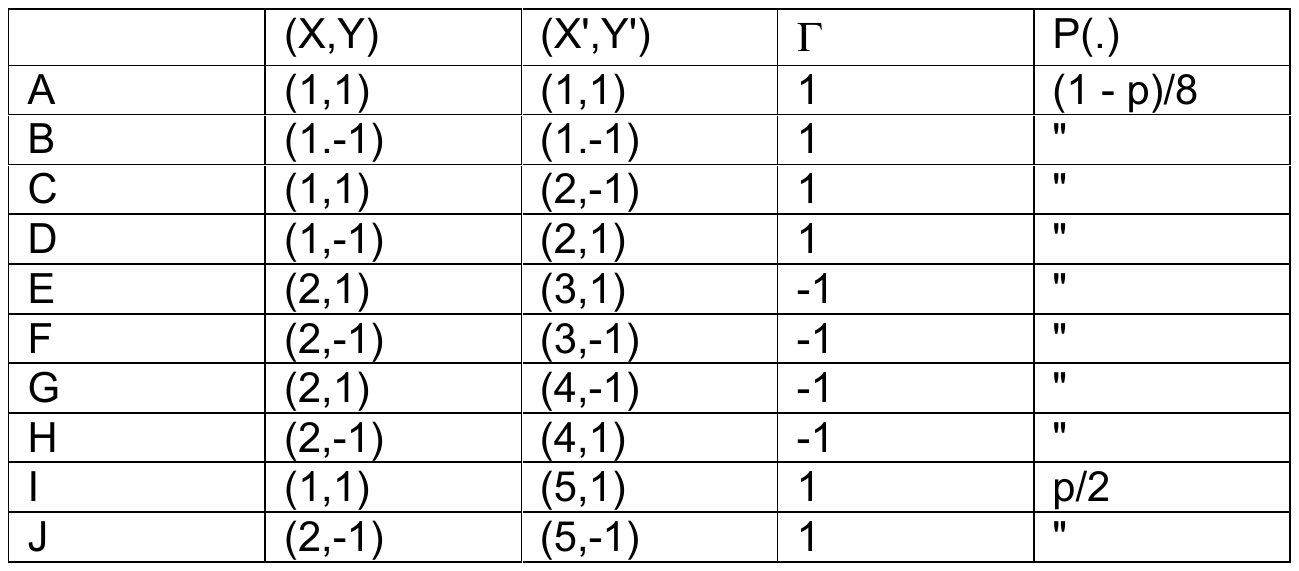}
        \vglue-12cm
        \caption{Specification of a scenario in which the stochastic version of Prop.~\ref{thm:thm_2}(i),
        concerning ``transitivity'' of weak inference through strong inference, fails drastically.}
\label{fig:table-ws1}
\end{figure}
%
\begin{enumerate}
\item To verify that $\DDD' \gg \DDD$,
for the $1$-probe, for $x = 1,2$, choose $x' = 1,3$, respectively. For the $-1$-probe, for $x = 1,2$, choose $x' = 2,4$,
respectively.

\item $cov(\DDD, \Gamma) = p$. To see this,
for the $1$-probe, evaluate $\max_x {\mathbb{E}}_P(Y \delta_1(\Gamma) \mid x) = p$, the maximum
occurring for $x = 1$.
Similarly, for the $-1$-probe, evaluate $\max_x {\mathbb{E}}_P(Y \delta_{-1}(\Gamma) \mid x) = p$, the maximum
occurring for  $x = 2$.

\item $cov(\DDD', \Gamma) = 0$. To see this
for both probes, note that ${\mathbb{E}}_P(Y' \delta(\Gamma) \mid x') = 0$ for each $x'$.
\end{enumerate}
The proof is completed by taking $p \rightarrow 1$.
\end{proof}

%


To understand Prop.~\ref{prop:cov-sinf},
recall that the definition of $\DDD' \gg \DDD$ requires that for any $x \in X(U)$ and for any probe $\delta_\gamma \in {\mathcal{P}}(\Gamma)$, there be \emph{some} $x'$ and associated $X'^{-1}(x') \subseteq U$ for which $\DDD'$ successfully emulates $\DDD$'s
behavior at inferring $\delta_\gamma$. If the inference  $\DDD > \Gamma$ is perfect, then 
$\DDD'$ also infers $\Gamma$.
However, if the inference  $\DDD > \Gamma$ is only partially correct, then that 
value $x'$ and associated subset of $U$, under which $\DDD' \gg \DDD$ may be 
precisely those $u$ for which $\DDD$ performs badly at inferring $\delta_\gamma$.
Thus, $\DDD$ may do an excellent, though imperfect, job overall of inferring $\Gamma$ while $\DDD'$ fails completely.

The second example of how the restrictions of the previous section get modified
by introducing a probability distribution is that this makes the second Laplace's impossibility theorem 
become ``barely true'':
\begin{proposition}
\label{prop:non-dist-cov}
There are devices $\DDD$ and $\DDD'$ with $X$ and $X'$ setup-distinguishable and
a distribution $P$ where both $cov(\DDD, \DDD')$ and $cov(\DDD', \DDD)$ are arbitrarily close to 1.
\end{proposition}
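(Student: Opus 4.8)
The plan is to mimic the parametrized-table constructions already used in Prop.~\ref{prop:cov-sinf} and Ex.~\ref{ex:cov_lb}: exhibit an explicit finite $U$, functions $(X,Y)=\DDD$ and $(X',Y')=\DDD'$, and a distribution $P$ depending on a small parameter $\epsilon$, and then let $\epsilon \to 0$.

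First I would simplify both covariances. Since $Y'$ is binary, $\mathcal{P}(Y')$ contains exactly two probes, $\delta_1(Y')=Y'$ and $\delta_{-1}(Y')=-Y'$, so that $cov(\DDD,\DDD') = \tfrac{1}{2}\big[\max_x \E_P(YY' \mid x) - \min_x \E_P(YY' \mid x)\big]$, and symmetrically $cov(\DDD',\DDD) = \tfrac{1}{2}\big[\max_{x'} \E_P(YY' \mid x') - \min_{x'} \E_P(YY' \mid x')\big]$. Thus both quantities are governed by the \emph{single} random variable $YY'$, and differ only in whether one conditions on the setup $X$ or on $X'$. It therefore suffices to arrange that conditioning on one value of $X$ (resp.\ $X'$) drives $\E_P(YY'\mid\cdot)$ to near $+1$ while conditioning on another drives it to near $-1$.

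For the construction I would take $X,X'\in\{1,2\}$ and make $YY'$ essentially perfectly correlated with the setups on a block of total probability $1-\epsilon$: states with $(X,X')=(1,1)$ carry $YY'=+1$ and states with $(X,X')=(2,2)$ carry $YY'=-1$, each block receiving probability $(1-\epsilon)/2$. Setup-distinguishability forces the two ``conflicting'' combinations $(X,X')=(1,2)$ and $(2,1)$ to be populated too; I give these total probability only $\epsilon$, splitting the mass across both signs of $YY'$ and choosing the $Y,Y'$ entries so that each of $Y$ and $Y'$ is surjective onto $\B$. Since every pair $(x,x')$ is then realized by some $u$, the devices are setup-distinguishable for every $\epsilon>0$, and $P$ is nowhere zero. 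Conditioning on $X=1$ the aligned block gives $\E_P(YY'\mid X=1)\ge 1-2\epsilon$, and conditioning on $X=2$ gives $\E_P(YY'\mid X=2)\le -1+2\epsilon$, so $cov(\DDD,\DDD')\ge 1-2\epsilon$; the computation for $cov(\DDD',\DDD)$ is identical with $X'$ in place of $X$. Letting $\epsilon\to 0$ drives both covariances to $1$.

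The only real difficulty is conceptual rather than computational, and it is precisely the content of the second demon theorem, Prop.~\ref{prop:2}: the conflicting states $(1,2)$ and $(2,1)$ cannot be deleted, since distinguishability requires them, and they are exactly the states on which $\E_P(YY'\mid X=1)$ and $\E_P(YY'\mid X'=2)$ are pulled in opposite directions. The whole point of the proof is to show that their unavoidable contribution can be made $O(\epsilon)$, so that the impossibility result survives for every $\epsilon>0$ (by Prop.~\ref{prop:2} the two covariances can never \emph{both} equal $1$) yet only ``barely,'' with the obstruction vanishing in the $\epsilon\to 0$ limit.
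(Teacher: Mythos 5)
Your proposal is correct and takes essentially the same route as the paper's own proof: both exhibit an explicit finite example in which the aligned setup combinations carry probability $1-O(\epsilon)$ while the conflicting combinations forced by setup-distinguishability carry only $O(\epsilon)$, and then take the limit (the paper's Fig.~\ref{fig:table-ww1} construction uses parameters $b \to 0$ with $a = (1-6b)/2$, giving $cov(\DDD,\DDD') = cov(\DDD',\DDD) = a/(a+b) \to 1$). Your preliminary reduction of each covariance to $\tfrac{1}{2}\bigl[\max_x \E_P(YY' \mid x) - \min_x \E_P(YY' \mid x)\bigr]$ is the same computation the paper performs implicitly when it records $\E_P(YY' \mid X=1) = (a-b)/(a+b)$ and $\E_P(YY' \mid X=-1) = -1$.
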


\begin{proof}
The proof is by example. 

Let $U$ have sixteen states, labeled A, $\ldots$, P and suppose that the functions
$P, \Gamma, \DDD = (X, Y)$ and $\DDD' = (X', Y')$ are
as in Fig.~\ref{fig:table-ww1}, with arbitrary $0 < b < 1/6$, and $a = (1 - 6b)/2$.

By inspection, $X$ and $X'$ are setup distinguishable. Next, plugging in yields
$cov(\DDD,\DDD') = cov(\DDD, Y') = a/(a + b)$. Moreover $cov(\DDD',\DDD) = cov(\DDD,\DDD') $ by symmetry
of the columns in Fig.~\ref{fig:table-ww1}.
(${\mathbb{E}}_P(Y  Y' \mid X = 1) = (a - b)/(a + b)$ and ${\mathbb{E}}_P(Y Y' \mid X = -1) = -1$.)

\begin{figure}[tbp]
  	\hglue-1.2cm
        \includegraphics[width=1.7\columnwidth]{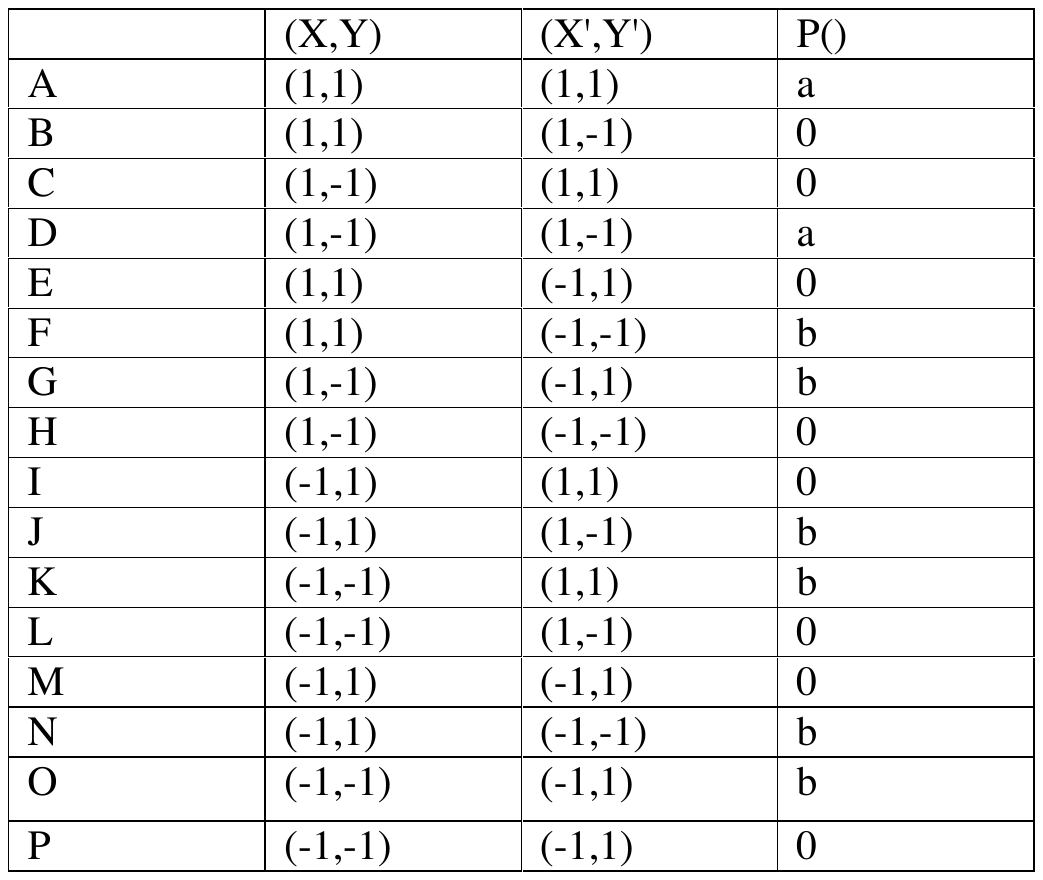}
        \vglue-11cm
        \caption{Specification of a scenario in which the stochastic version of Prop.~\ref{prop:dist_not_infer},
        concerning simultaneous inference of two setup-distinguishable IDs, fails drastically.}
\label{fig:table-ww1}
\end{figure}
%

So by taking $b$ arbitrarily close to 0, both of the covariances can be made arbitrarily close to 1.
\end{proof}

Prop.~\ref{prop:non-dist-cov} shows that in a certain sense, as soon as any stochasticity is introduced into the universe,
having two devices be setup-distinguishable no longer restricts their ability
to simultaneously infer each other. However
if we replace setup-distinguishability with the property that the setup functions of the two devices are statistically
independent, then we recover strong restrictions on simultaneous inference.
%

To illustrate this, let $M$ be the four-dimensional hypercube $\{0, 1\}^4$. Define the following
three functions over $\vec{z} \in M$:
\begin{enumerate}
\item $k({\vec{z}}) = z_1 + z_4 - z_2 - z_3$;
\item $m({\vec{z}}) = (z_2 - z_4)$;
\item $n({\vec{z}})=(z_3 - z_4)$. 
\end{enumerate}

\begin{proposition}
\label{prop:prop6}
Let $P$ be a probability measure over $U$,
and ${{\DDD}}_1$ and ${{\DDD}}_2$ two devices 
where $X_1(U) = X_2(U) = {\mathbb{B}}$, and those variables are
statistically independent under $P$. Define
$P(X_1 = -1) \equiv \alpha$ and $P(X_2 = -1)
\equiv
\beta$. Say that ${{\DDD}}_1$ infers ${{\DDD}}_2$ with accuracy $\epsilon_1$, while
${{\DDD}}_2$ infers ${{\DDD}}_2$ with accuracy $\epsilon_2$. Then
\begin{eqnarray*}
\epsilon_1 \epsilon_2 \;&\le&\; {\mbox{max}}_{{\vec{z}} \in M} 
\big| \alpha \beta [k({\vec{z}})]^2 + \alpha k({\vec{z}})m({\vec{z}}) +
\beta k({\vec{z}})n({\vec{z}}) + m({\vec{z}})n({\vec{z}})\big| .
\end{eqnarray*}
In particular, if $\alpha = \beta = 1/2$, then
\begin{eqnarray*}
\epsilon_1 \epsilon_2 \;&\le&\; \frac{{\mbox{max}}_{{\vec{z}} \in M}
\; |\; (z_1 -
z_4)^2 - (z_2 - z_3)^2\; |}{4} \nonumber \\
&=&\; 1/4.
\end{eqnarray*}
\end{proposition}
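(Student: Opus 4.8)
The plan is to collapse both accuracies to absolute values of linear forms in four conditional correlations, write $\epsilon_1\epsilon_2$ as the modulus of a product of two such forms, and then show that the maximum of that product over the unit box is attained at a corner, which is exactly the index set $M$.

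First I would unwind the covariance accuracies using the fact that $\DDD_1$ inferring $\DDD_2$ means $\DDD_1$ infers the binary function $Y_2$. Since $Y_2(U) = \B$ has only the two probes $\delta_1(Y_2) = Y_2$ and $\delta_{-1}(Y_2) = -Y_2$, and since $X_1(U) = \B = \{-1,1\}$ has only two setup values, Def.~\ref{def:def9} collapses to $\epsilon_1 = \tfrac12[\max_{x_1} C(x_1) - \min_{x_1} C(x_1)] = \tfrac12|C(1) - C(-1)|$, where $C(x_1) := \E_P(Y_1 Y_2 \mid X_1 = x_1)$; symmetrically $\epsilon_2 = \tfrac12|D(1) - D(-1)|$ with $D(x_2) := \E_P(Y_1 Y_2 \mid X_2 = x_2)$. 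It is essential here that each setup range is binary, so that $\max - \min$ is just a single absolute difference.

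Next I would introduce the four conditional correlations $g(x_1,x_2) := \E_P(Y_1 Y_2 \mid X_1=x_1, X_2=x_2)\in[-1,1]$ and invoke the statistical independence of $X_1$ and $X_2$. By the law of total expectation, $C(x_1) = \sum_{x_2} P(X_2 = x_2)\,g(x_1,x_2)$ and $D(x_2) = \sum_{x_1} P(X_1 = x_1)\,g(x_1,x_2)$, where the conditioning drops out precisely because of independence (this is the step that fails under mere setup-distinguishability, consistent with Prop.~\ref{prop:non-dist-cov}). Substituting $P(X_2=-1)=\beta$ and $P(X_1=-1)=\alpha$, the differences $C(1)-C(-1)$ and $D(1)-D(-1)$ become explicit linear forms $A$ and $B$ in the four numbers $g(x_1,x_2)$, so $\epsilon_1\epsilon_2 = \tfrac14|AB|$. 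Changing variables to $z_i := (g_i+1)/2\in[0,1]$, with $g_1,\ldots,g_4$ enumerating the four correlations, turns $A,B$ into the affine forms built from $k,m,n$, and collecting terms rewrites $\tfrac14|AB|$ as $|\alpha\beta k^2 + \alpha km + \beta kn + mn|$, the precise coefficients following from the labeling of the four settings.

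The crux is the maximization, and I expect it to be the main obstacle. Because the conditional law of $(Y_1,Y_2)$ may be chosen freely for each setting, the four correlations range over the whole box $[-1,1]^4$, so it suffices to bound $\tfrac14|AB|$ there. The tempting ``coordinatewise convexity'' argument fails, since $AB$ is convex in two coordinates but concave in the other two. The correct observation is that $AB$, a product of two linear forms, has Hessian $ab^\top + ba^\top$ with $a,b$ the gradients of $A,B$: a symmetric matrix of rank at most $2$ with one nonnegative and one nonpositive eigenvalue. Hence $AB$ is a saddle with no interior local maximum, and the same holds for $-AB$, so both $\max AB$ and $\min AB$ over the box lie on the boundary; restricting to any face leaves a product of linear forms of the same type, and induction on faces drives the optimum to a vertex $g\in\{-1,1\}^4$, i.e. $z\in\{0,1\}^4 = M$. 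This gives $\epsilon_1\epsilon_2 \le \max_{\vec z\in M}|\alpha\beta k^2 + \alpha km + \beta kn + mn|$. Finally, for $\alpha=\beta=\tfrac12$, completing the square gives $\tfrac14 k^2 + \tfrac12 k(m+n) + mn = \tfrac14(k+m+n)^2 - \tfrac14(m-n)^2$, and since $k+m+n = z_1 - z_4$ and $m-n = z_2 - z_3$ this equals $\tfrac14[(z_1-z_4)^2 - (z_2-z_3)^2]$, whose modulus over $M$ is maximized at $1/4$.
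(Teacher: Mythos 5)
Your first two reduction steps are sound, and I believe they are essentially forced: with $Y_2(U)=\B$ and $X_1(U)=\B$, Def.~\ref{def:def9} does collapse to $\epsilon_1=\tfrac12|C(1)-C(-1)|$ (and symmetrically for $\epsilon_2$); independence of $X_1,X_2$ is exactly what lets the law of total expectation express $C$ and $D$ as $\beta$- and $\alpha$-mixtures of the four conditional correlations $g(x_1,x_2)$; and with the right labeling of the four settings the algebra giving $\epsilon_1\epsilon_2=\tfrac14|AB|=|\alpha\beta k^2+\alpha km+\beta kn+mn|$ checks out, as does the completion of the square at $\alpha=\beta=\tfrac12$. (The paper itself defers the proof of this proposition to~\cite{wolp08b}, so I am judging your argument on its own terms.)

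The gap is the vertex-reduction step, and it is not merely unproven but false as you state it. Your induction needs the restriction of $AB$ to every face of the box to again be a saddle, but the restricted gradients can become anti-parallel on a face even though $a\not\parallel b$ globally; the restriction is then a \emph{concave} quadratic in the restricted affine form, and its maximum sits in the interior of that face. This happens inside the proposition's hypotheses: take $\alpha+\beta=1$ with $\alpha\neq\beta$, say $\alpha=0.9$, $\beta=0.1$ (on faces fixing $z_1,z_4$ the restricted gradients are anti-parallel precisely when $\alpha+\beta=1$). At the interior point $\vec z=(0,\tfrac12,\tfrac12,1)$, i.e.\ $k=0$, $m=n=-\tfrac12$, the signed expression $\alpha\beta k^2+\alpha km+\beta kn+mn$ equals $mn=\tfrac14$; but enumerating all sixteen vertices $\vec z\in M$ shows the maximum of the \emph{signed} expression over $M$ is $\max\bigl(\alpha\beta,(1-\alpha)(1-\beta)\bigr)=0.09$. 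So $\max_{\vec z\in[0,1]^4}AB$ is attained only off the vertex set and strictly exceeds $\max_{\vec z\in M}AB$, contradicting your claim that ``induction on faces drives the optimum to a vertex.'' Note this interior point is realizable by actual devices (it gives $\epsilon_1=\epsilon_2=\tfrac12$, hence $\epsilon_1\epsilon_2=\tfrac14>0.09$), so any proof that only controls the signed maximum at vertices is doomed. The proposition's inequality survives here only because the vertex $\vec z=(0,1,0,0)$ gives $-\alpha(1-\beta)=-0.81$, whose modulus $0.81$ dominates $\tfrac14$: the interior maximum of $AB$ on a degenerate face must be compared against $|AB|$ at vertices of \emph{other} faces, a cross-face, absolute-value estimate your argument never makes. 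To repair the proof you need to treat the degenerate faces explicitly --- e.g.\ when the restriction has the form $A'(c-A')$, show its interior maximum $c^2/4$ is bounded by $\max_{\vec z\in M}\bigl|\alpha\beta k^2+\alpha km+\beta kn+mn\bigr|$ --- and verify that all lower-dimensional degenerate restrictions (including concave edges) are likewise dominated.
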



The maximum for $\alpha = \beta = 1/2$ can occur in several
ways. One is when $z_1 = 1$, and $z_2, z_3, z_4$ all equal $0$. At
these values, both devices have an inference accuracy of 1/2 at
inferring each other. Each device achieves that accuracy by perfectly
inferring one probe of the other device, while performing randomly for
the remaining probe.

The ID framework as developed to date has no
function measuring distance, nor one measuring time. So at present, one
cannot even formulate an ID-analog of Heisenberg's uncertainty principle, never
mind try to derive it. It is intriguing that despite this,  Prop.~\ref{prop:prop6} 
is a bound on the product of uncertainties, exactly like Heisenberg's uncertainty principle.
This suggests it may be worth exploring extensions of the ID framework
that do involve distance and time, to see what \emph{a priori} constraints
there might be on the product of uncertainties of two IDs that are measuring
different aspects of the same system. (This idea is returned to in the last section below.)

%

Finally, it should be noted that there are other ways to quantify the degree of weak inference
when there is intrinsic uncertainty, in addition to covariance accuracy. For 
example, we could
change Def.~\ref{def:def9} by replacing the sum over all probes $\delta$ and associated
division by $|\Gamma(U)|$ with a minimum over all probes $\delta$. (This amounts to 
replacing an average-best-case expression with a worst-case expression.)

\subsection{The complexity of inference} 

Constraints on
what can be computed by a physical device can be derived from the laws of physics~\cite{lloyd2000ultimate}.
There have also been attempts to
go the other way, and derive constraints on the laws of physics from computation theory, in 
particular from algorithmic information theory (AIT)~\cite{livi08,chaitin2004algorithmic,
zure89a,zure89b,zurek1990complexity,zenil2012computable}. These often implicitly 
involve uncertainty about the state of the universe. For example, the use of Kolmogorov
complexity to model physical reality is often intimately related to the use of algorithmic probability~\cite{livi08,schmidhuber2000algorithmic,zuse1969rechnender}.
(Indeed, the very first line in~\cite{schmidhuber2000algorithmic}
is ``The probability distribution $P$ from which the history of our universe is sampled
represents a theory of everything''.) One way to justify consideration of such a
probability distribution in the first place is to identify
it with uncertainty of some agent  (e.g., a scientist) concerning the state of the universe. 

This importance of an agent in attempts to analyze physics using AIT
suggests we extend the inference device framework to include structures similar
to those considered in AIT.
There are several ways to extend the ID framework this way. In this subsection I sketch the starting point
for one of them.
%

%
%

\label{sec:inf__compl}

Given a TM $T$, the \emph{Kolmogorov complexity} of an
output string $s$ is defined as the size of the smallest input
string $s'$ that when input to $T$ produces $s$ as output. To
construct our inference device analog of this, we need to define the
``size'' of an input region of an inference device $\DDD$. To do this,
we assume we are given a measure $d\mu$ over $U$, and for simplicity
restrict attention to functions $\Gamma$ over $U$ with countable
range. Then we define the {\bf{size}} of $\gamma \in \Gamma(U)$ as 
-ln$\big[\int_{\Gamma^{-1}(\gamma)}
d\mu(u) \; 1\big]$, i.e., the negative logarithm of the measure of all
$u \in U$ such that $\Gamma(u) = \gamma$.{\footnote{As usual, if $U$ is countable, 
$\mu$ is a point measure, and the integral is a sum.}} We write this size as
${\mathcal{M}}_{\mu; \Gamma}(\gamma)$, or just ${\mathcal{M}}(\gamma)$ for
short.{\footnote{If $\int d\mu(u) \; 1 = \infty$, then we instead work
with differences in logarithms of volumes, evaluated under an
appropriate limit of $d\mu$ that takes $\int d\mu(u) \; 1 \rightarrow
\infty$. For example, we might work with such differences when $U$ is
taken to be a box whose size goes to infinity. 
\label{foot:vol}}}  

We define inference complexity in terms of such a size
function using the shorthand introduced just below Eq.~\eqref{eq:shorthand1}:
\begin{definition}
\label{def:def6}
Let $\DDD$ be a device and $\Gamma$ a function over
$U$ where $X(U)$ and $\Gamma(U)$ are countable and $\DDD > \Gamma$. The
{\bf{inference complexity}} of $\Gamma$ with respect to $\DDD$ and measure $\mu$ is defined
as
\begin{eqnarray*}
{\mathcal{C}}_\mu(\Gamma ; \DDD) \;\;&\triangleq& \;\; \sum_{\delta \in {\mathcal{P}}(\Gamma)}
	{\mbox{min}}_{x : X = x \Rightarrow Y = \delta(\Gamma)}
			 [{\mathcal{M}}_{\mu,X} (x)].
\end{eqnarray*}
\end{definition}

\noindent In the sequel I will often have the measure implicit, and 
(for example) simply write ${\mathcal{C}}$
rather than ${\mathcal{C}}_\mu$. I will also mostly restrict attention to the case
where $\mu$ is either a distribution or a semi-measure.\footnote{A natural alternative 
measure of ``inference complexity'' is given by replacing
the sum over all probes in Def.~\ref{def:def6} with a max over all probes, so that we are
analyzing the hardest possible question to ask about $\Gamma$. In the interests of
space, we leave this for future work.}

As an example, for the case where inference models the process of prediction, $\Gamma$ corresponds to a potential future
state of some system $S$ external to $\DDD$. In this case ${\mathcal{C}}(\Gamma;
\DDD)$ is a measure of how difficult it currently is for $\DDD$ to predict that future
state of $S$. Loosely speaking, the more sensitively that future state
depends on current conditions, the greater the inference complexity of predicting
that future state.

Inference complexity of any function $\Gamma$ with respect to a device $(X, Y)$  is bounded by the Shannon entropy of $\mu(X)$:

\begin{proposition}
For any ID $\DDD$, probability distribution $\mu$, and function $\Gamma$ with a countable image such that $\DDD > \Gamma$,
\ba
{\mathcal{C}}_\mu(\Gamma ; \DDD) \le  |\Gamma| \times H_\mu(X) \nonumber
 \ea
 where $H_\mu(X)$ is the Shannon entropy of $\mu(X)$.
\label{prop:inf_comp_entropy}
\end{proposition}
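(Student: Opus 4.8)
The plan is to bound each term in the sum defining $\mathcal{C}_\mu(\Gamma;\DDD)$ separately. Recall
\[
\mathcal{C}_\mu(\Gamma;\DDD) = \sum_{\delta \in \mathcal{P}(\Gamma)} \min_{x : X = x \Rightarrow Y = \delta(\Gamma)} [\mathcal{M}_{\mu,X}(x)],
\]
where $\mathcal{M}_{\mu,X}(x) = -\ln[\mu(X^{-1}(x))]$. There are exactly $|\Gamma(U)| = |\Gamma|$ probes in $\mathcal{P}(\Gamma)$, so it suffices to show that for \emph{each} probe $\delta$, the minimizing value satisfies $\min_{x} \mathcal{M}_{\mu,X}(x) \le H_\mu(X)$, after which summing over the $|\Gamma|$ probes yields the claimed bound $|\Gamma| \times H_\mu(X)$.

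First I would fix an arbitrary probe $\delta = \delta_\gamma$. Since $\DDD > \Gamma$ (weak inference holds), Def.~\ref{def:weak_inf} guarantees that there exists at least one $x \in X(U)$ with $X = x \Rightarrow Y = \delta_\gamma(\Gamma)$, so the set over which the minimum is taken is nonempty and the term is well-defined. The inner minimum is therefore no larger than $\mathcal{M}_{\mu,X}(x)$ evaluated at \emph{any} admissible $x$; in particular it is bounded above by $\min_{x \in X(U)} \mathcal{M}_{\mu,X}(x)$ taken over the full setup range, which only makes the bound weaker but is what we need. The key observation is then purely information-theoretic: for any probability distribution, the smallest surprisal $-\ln[\mu(X^{-1}(x))]$ over all outcomes $x$ cannot exceed the Shannon entropy $H_\mu(X) = -\sum_x \mu(X^{-1}(x)) \ln[\mu(X^{-1}(x))]$, because the entropy is a convex combination (weighted average) of the individual surprisals $-\ln[\mu(X^{-1}(x))]$, and the minimum of a collection of numbers is at most their weighted average.

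Concretely, writing $p_x := \mu(X^{-1}(x))$ so that $\sum_x p_x = 1$ and $H_\mu(X) = \sum_x p_x(-\ln p_x)$, I would note
\[
\min_{x \in X(U)} (-\ln p_x) \;\le\; \sum_x p_x (-\ln p_x) \;=\; H_\mu(X),
\]
since the minimum of the quantities $-\ln p_x$ is bounded above by their $p_x$-weighted average. Hence every term in the defining sum is at most $H_\mu(X)$, and summing over the $|\Gamma|$ probes gives $\mathcal{C}_\mu(\Gamma;\DDD) \le |\Gamma| \times H_\mu(X)$.

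I do not anticipate a serious obstacle here; the result is essentially the elementary fact that the minimal surprisal is dominated by the entropy, combined with counting the $|\Gamma|$ probes. The one point requiring a little care is the reduction step: the inner minimum in Def.~\ref{def:def6} ranges only over those $x$ that are \emph{admissible} for the given probe (i.e., $X = x \Rightarrow Y = \delta(\Gamma)$), not over all of $X(U)$. Replacing that restricted minimum by the unrestricted minimum $\min_{x \in X(U)}$ is valid precisely because enlarging the feasible set can only decrease (or preserve) a minimum, so the restricted minimum is $\ge$ the unrestricted one — which is the wrong direction. The correct move is the reverse: the restricted minimum is attained at some specific admissible $x^\ast$, and $\mathcal{M}_{\mu,X}(x^\ast) = -\ln p_{x^\ast} \le H_\mu(X)$ follows not from the unrestricted minimum but directly from the fact that \emph{any} single surprisal can fail to exceed the entropy only if... — so in fact one must argue slightly more carefully that the particular admissible surprisal is bounded by $H_\mu(X)$. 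The cleanest route, and the one I would take, is to observe that the restricted minimum is bounded above by the surprisal at the \emph{most probable} admissible $x$, and then bound that by $H_\mu(X)$ using that the entropy dominates the surprisal of the mode of the full distribution; verifying that an admissible $x$ of sufficiently large measure exists (guaranteed by weak inference) is the only detail worth checking.
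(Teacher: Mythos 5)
You correctly identified the crack in your own argument, but the patch you propose in your final paragraph does not hold. The entropy $H_\mu(X)$ dominates the surprisal of the \emph{mode} of $\mu(X)$, i.e., $\min_{x \in X(U)}[-\ln \mu(X^{-1}(x))] \le H_\mu(X)$; it does not dominate the surprisal of the most probable \emph{admissible} $x$ for a given probe. Weak inference guarantees only that each probe's admissible set $\{x : X = x \Rightarrow Y = \delta(\Gamma)\}$ is nonempty --- it puts no lower bound at all on the $\mu$-measure of any admissible $x$, so there is no ``admissible $x$ of sufficiently large measure'' to be found. Concretely, take the paper's Ex.~\ref{ex:4}: $U = \{a,b,c\}$, $X(a) = X(c) = 1$, $X(b) = 2$, $Y(a) = 1$, $Y(b) = Y(c) = -1$, $\Gamma(a) = \Gamma(b) = 1$, $\Gamma(c) = 2$, and put $\mu(a) = \mu(c) = (1-\epsilon)/2$, $\mu(b) = \epsilon$. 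This device weakly infers $\Gamma$, but the only admissible setup value for the probe $\delta_2$ is $x = 2$ (the value $x = 1$ fails at $u = a$), so the $\delta_2$ term of ${\mathcal{C}}_\mu(\Gamma;\DDD)$ equals $-\ln\epsilon$, while $H_\mu(X) = -(1-\epsilon)\ln(1-\epsilon) - \epsilon\ln\epsilon \rightarrow 0$ as $\epsilon \rightarrow 0$. So your per-probe bound fails, and worse: ${\mathcal{C}}_\mu(\Gamma;\DDD) = -\ln(1-\epsilon) - \ln\epsilon$ exceeds $|\Gamma| \times H_\mu(X) = 2H_\mu(X)$ for small $\epsilon$, so the proposition itself is false as stated, and no repair of your argument can succeed.

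You are in good company, though: the paper's own proof commits essentially the same error, just at a different step. It first bounds the sum of per-probe minima by $\sum_{x \in X(U)} {\mathcal{M}}_{\mu,X}(x)$, and then asserts that this \emph{unweighted} sum of surprisals is at most $|\Gamma| \times H_\mu(X)$. But the entropy is the $\mu$-\emph{weighted} average of the surprisals; an unweighted sum over the range of $X$ can be arbitrarily larger (in the example above it is $-\ln(1-\epsilon)-\ln\epsilon$ versus $2H_\mu(X) \rightarrow 0$). Indeed the proposition as stated also contradicts Prop.~\ref{prop:sinf-sic}, which exhibits a finite-$U$ scenario (so $H_\mu(X)$ is bounded) with $\DDD > \Gamma$ in which ${\mathcal{C}}_P(\Gamma;\DDD)$ is arbitrarily large. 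The statement could be rescued only by adding a hypothesis tying the admissible setup values to the bulk of the measure --- e.g., requiring that for every probe some admissible $x$ satisfies $\mu(X^{-1}(x)) \ge e^{-H_\mu(X)}$ --- which is exactly the step that both you and the paper implicitly assume without justification.
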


\begin{proof} 
Expand
\begin{eqnarray*}
 \sum_{\delta \in \mathcal{P}(\Gamma)}
	\min_{x : X = x \Rightarrow Y = f(\Gamma)}
			 [{\mathcal{M}}_{\mu,X} (x)] &\le& \sum_{x\in X(U)} {\mathcal{M}}_{\mu,X} (x) \\
			 &\le& -|\Gamma| \sum_{x\in X(U)}  \frac{ {\mbox{log}}_2 \mu(x)}{|\Gamma|} \\
			 &\le& |\Gamma| H_\mu(X)
\end{eqnarray*}
\end{proof}

Kolmogorov complexity concerns TMs computing a single output, rather than TMs emulating
an entire function from inputs to outputs. The field of algorithmic information theory then
analyzes the relation between Kolmogorov complexity and UTMs, i.e., TMs that emulate entire functions from 
inputs to outputs. Analogously, inference complexity concerns inferring a single value of
a variable, i.e., it is defined in terms of \emph{weak} inference. So to
investigate the inference device analog of algorithmic 
information theory means investigating the relation between inference complexity and 
IDs that emulate entire functions --- which involves strong inference instead of weak inference.

To begin, recall  perhaps the most fundamental result in AIT, the
\emph{invariance theorem}. This theorem gives an upper bound on the
difference between the Kolmogorov complexity of a string using a
particular UTM $T_1$ and its complexity if using a different UTM,
$T_2$. This bound is independent of the computation to be performed,
and can be viewed as the Kolmogorov complexity of $T_1$ emulating
$T_2$. Similarly, we can bound how much greater the inference complexity
of a function can be for a device ${\mathcal{D}}_1$ than it is for a different
device ${\mathcal{D}}_2$ if ${\mathcal{D}}_1$ can strongly infer ${\mathcal{D}}_2$:

\begin{proposition}
\label{thm:thm4}
Let ${\mathcal{D}}_1$ and ${\mathcal{D}}_2$ be two  devices and
$\Gamma$ a function over $U$ where $\Gamma(U)$ is finite, ${\mathcal{D}}_1 \gg
{\mathcal{D}}_2$, and ${\mathcal{D}}_2 > \Gamma$. Then for any distribution $\mu$,
\begin{eqnarray*}
&&{\mathcal{C}}_\mu(\Gamma ; {\mathcal{D}}_1) - {\mathcal{C}}_\mu(\Gamma ; {\mathcal{D}}_2) 
			\;\;\le \;\; |\Gamma(U)| \; \times \nonumber \\
&& \qquad \qquad  {\max}_{x_2} \bigg( \min_{x_1 : \{X_1 = x_1
   \Rightarrow X_2 = x_2, Y_1 = Y_2\}} [{\mathcal{M}}_{\mu,X_1}(x_1) -
{\mathcal{M}}_{\mu,X_2}(x_2)]  \bigg).
\end{eqnarray*}
\noindent 
\end{proposition}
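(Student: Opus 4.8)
The plan is to mimic the proof of the invariance theorem from AIT, exploiting the same composition that underlies Prop.~\ref{thm:thm_2}(i): $\mathcal{D}_1$ will infer a probe of $\Gamma$ by emulating $\mathcal{D}_2$ (using $\mathcal{D}_1 \gg \mathcal{D}_2$) while $\mathcal{D}_2$ infers that probe (using $\mathcal{D}_2 > \Gamma$). First I would record that both complexities in the statement are well defined: with $\Gamma(U)$ finite and the setup codomains taken countable, $\mathcal{D}_2 > \Gamma$ makes $\mathcal{C}_\mu(\Gamma;\mathcal{D}_2)$ meaningful, and Prop.~\ref{thm:thm_2}(i) gives $\mathcal{D}_1 > \Gamma$, hence $\mathcal{C}_\mu(\Gamma;\mathcal{D}_1)$ as well.

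Next I would fix a probe $\delta \in \mathcal{P}(\Gamma)$ and let $x_2^\delta$ be a size-minimizing setup value realizing $\mathcal{D}_2$'s inference of $\delta$, i.e. one attaining $\min_{x_2 : X_2 = x_2 \Rightarrow Y_2 = \delta(\Gamma)} \mathcal{M}_{\mu,X_2}(x_2)$; summed over the $|\Gamma(U)|$ probes these give $\mathcal{C}_\mu(\Gamma;\mathcal{D}_2)$. I would then apply $\mathcal{D}_1 \gg \mathcal{D}_2$ with the \emph{identity} probe $\delta_1$ of the binary variable $Y_2$ (for which $\delta_1(Y_2) = Y_2$): at the value $x_2^\delta$ there is some $x_1$ with $X_1 = x_1 \Rightarrow X_2 = x_2^\delta,\, Y_1 = Y_2$. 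Chaining the two implications gives $X_1 = x_1 \Rightarrow Y_1 = Y_2 = \delta(\Gamma)$, so every $x_1$ in this emulating set is admissible for $\mathcal{D}_1$'s inference of $\delta$. Since a minimum over a subset can only be larger,
\begin{equation*}
\min_{x_1 : X_1 = x_1 \Rightarrow Y_1 = \delta(\Gamma)} \mathcal{M}_{\mu,X_1}(x_1) \;\le\; \min_{x_1 : \{X_1 = x_1 \Rightarrow X_2 = x_2^\delta,\, Y_1 = Y_2\}} \mathcal{M}_{\mu,X_1}(x_1).
\end{equation*}

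The rest is bookkeeping. I would add and subtract $\mathcal{M}_{\mu,X_2}(x_2^\delta)$ inside the right-hand minimum, sum over all probes, and collect the $\mathcal{M}_{\mu,X_2}(x_2^\delta)$ terms into $\mathcal{C}_\mu(\Gamma;\mathcal{D}_2)$; transposing yields
\begin{equation*}
\mathcal{C}_\mu(\Gamma;\mathcal{D}_1) - \mathcal{C}_\mu(\Gamma;\mathcal{D}_2) \;\le\; \sum_{\delta} \min_{x_1 : \{X_1 = x_1 \Rightarrow X_2 = x_2^\delta,\, Y_1 = Y_2\}} \big[\mathcal{M}_{\mu,X_1}(x_1) - \mathcal{M}_{\mu,X_2}(x_2^\delta)\big].
\end{equation*}
Each summand evaluates the inner quantity at the single value $x_2 = x_2^\delta$, so it is bounded by the maximum over all $x_2$; as there are $|\Gamma(U)|$ such identical bounds, the sum is at most $|\Gamma(U)|$ times the $\max_{x_2}$ expression, which is exactly the claimed inequality.

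I expect the only delicate points to be the direction of the subset inequality and the choice of probe of $Y_2$. One must use the identity probe, so that the emulation transmits $Y_2$ unchanged to $Y_1$ (the negation probe would flip the answer), and one must observe that restricting to the emulating set of $x_1$'s — a strictly more constrained condition than merely ``$Y_1 = \delta(\Gamma)$'' — can only raise the minimizing size, never lower it; this is precisely what turns the emulation cost into an \emph{upper} bound on $\mathcal{C}_\mu(\Gamma;\mathcal{D}_1)$. The measure-theoretic content of $\mathcal{M}$ (being $-\ln$ of a $\mu$-measure) plays no role beyond the additive bookkeeping, so no analytic estimates are required.
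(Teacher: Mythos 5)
Your proof is correct and is essentially the same argument as the paper's (which is deferred to \cite{wolp08b}): for each probe you chain $X_1 = x_1 \Rightarrow X_2 = x_2^\delta$ with $X_2 = x_2^\delta \Rightarrow Y_2 = \delta(\Gamma)$ using the identity probe of $Y_2$, note that the emulating set of $x_1$'s is a (nonempty, by $\mathcal{D}_1 \gg \mathcal{D}_2$) subset of the admissible set for $\mathcal{D}_1$'s inference of $\delta$, and then do the add-and-subtract bookkeeping followed by the bound via $\max_{x_2}$ over the $|\Gamma(U)|$ probes. The two delicate points you flag --- the direction of the subset inequality and the need for the identity rather than negation probe --- are exactly the right ones.
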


\noindent Note that since ${\mathcal{M}}_{\mu,X_1}(x_1) - {\mathcal{M}}_{\mu,X_2}(x_2) =
{\mbox{ln}}\bigg[\frac{\mu (X_2^{-1}(x_2))} {\mu( X_1^{-1}(x_1))} \bigg]$, the bound in
Prop.~\ref{thm:thm4} is independent of the units with which one measures volume in
$U$. (Cf. footnote ~\ref{foot:vol}.)  Furthermore, it is always true that $X_1 =
x_1 \Rightarrow X_2 = x_2, Y_1 = Y_2$ iff $X_1^{-1}(x_1) \subseteq
X_2^{-1}(x_2) \; \cap \; (Y_1Y_2)^{-1}(1)$. 
Accordingly, for all $(x_1, x_2)$ pairs arising in the bound in
Prop.~\ref{thm:thm4}, $\frac{\mu (X_2^{-1}(x_2))} {\mu( X_1^{-1}(x_1))} \ge 1$. So the upper bound in
Prop.~\ref{thm:thm4} is always non-negative.

The max-min expression on the RHS of Prop.~\ref{thm:thm4} is independent of $\Gamma$.
So the bound in Prop.~\ref{thm:thm4} 
is independent of all 
aspects of $\Gamma$ except the cardinality of $\Gamma(U)$. 
Intuitively, the bound is $|\Gamma(U)|$ times the worst-case amount of
``computational work'' that ${\mathcal{D}}_1$ has to do to ``emulate'' ${\mathcal{D}}_2$'s
behavior for some particular value of $x_2$. 

Suppose that it takes a lot of computational work
for $\DDD_2$ to infer $\Gamma$, and so it also takes a lot of computational work for $\DDD_1$ 
to infer $\Gamma$ by emulating $\DDD_2$.
However, it might take very little work for $\DDD_1$ to infer $\Gamma$ directly.
In fact, it may even be that ${\mathcal{C}}(\Gamma ; {\mathcal{D}}_1) < {\mathcal{C}}(\Gamma ; {\mathcal{D}}_2)$:

 \begin{proposition}
 \label{prop:sinf-sic}
 There are devices $\DDD$, $\DDD'$, probability distribution $P$ defined over $U$,
and function $\Gamma$, such that 
 $\DDD > \Gamma$, $\DDD' \gg \DDD$, and ${\mathcal{C}}_P(\Gamma; \DDD)$ 
 is arbitrarily large, while ${\mathcal{C}}_P(\Gamma; \DDD')$ is arbitrarily close to the minimum value of 
 $\big|\Gamma \big| \times \ln(|\Gamma(U)|)$.
 \end{proposition}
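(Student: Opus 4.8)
The plan is to prove the proposition by explicit construction, exhibiting one measure space together with $\Gamma$, $\DDD = (X,Y)$ and $\DDD' = (X',Y')$ that carry two independent infinitesimal parameters: one ($\epsilon$) driving $\mathcal{C}_P(\Gamma;\DDD) \to \infty$, and one ($\eta$) driving $\mathcal{C}_P(\Gamma;\DDD')$ down to the stated minimum $|\Gamma(U)|\ln|\Gamma(U)|$. Write $n = |\Gamma(U)| \ge 3$, label the values of $\Gamma$ by $\{1,\dots,n\}$, and choose $P = \mu$ so that $\Gamma$ is balanced, i.e. $P(\Gamma^{-1}(\gamma)) = 1/n$ for every $\gamma$; balance is the regime in which the minimum $n\ln n$ is attained, with each of the $n$ probes costing $\ln n$.

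For $\DDD$ I would split each level set $\Gamma^{-1}(\gamma)$ into a tiny ``spike'' $S_\gamma$ of measure $\epsilon$ and a ``bulk'' $B_\gamma$ of measure $1/n - \epsilon$, set $Y = +1$ on every spike, and give $\DDD$ only $n+1$ setup values: one for each spike $S_\gamma$, and a single value whose preimage is the union $B := \bigcup_\gamma B_\gamma$. On $B$ I set $Y = -1$ except at two measure-negligible points, where $\Gamma = 1$ and $\Gamma = 2$ respectively, at which I set $Y = +1$. The spike $S_\gamma$ is then a valid setup for the probe $\delta_\gamma$ (there $Y \equiv +1 \equiv \delta_\gamma(\Gamma)$), so $\DDD > \Gamma$. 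The purpose of the two contaminating $Y=+1$ points is that $B$ now fails $Y = \delta_\gamma(\Gamma)$ for \emph{every} $\gamma$ (one of the two points violates validity in each case), so $B$ is an invalid setup for all $n$ probes; likewise $S_{\gamma'}$ is invalid for $\delta_\gamma$ when $\gamma'\neq\gamma$. Hence the only valid setup for $\delta_\gamma$ is $S_\gamma$, giving $\mathcal{C}_P(\Gamma;\DDD) = \sum_\gamma (-\ln \epsilon) = -n\ln\epsilon$, which I send to $+\infty$ as $\epsilon \to 0$.

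For $\DDD'$ I would give its partition two kinds of cells. First, inside each of the $n+1$ cells of $\DDD$ I carve out two ``witness'' cells of measure at most $\eta$, one on which I set $Y' = Y$ and one on which $Y' = -Y$; since $\mathcal{P}(Y)$ consists of exactly the two probes $Y'=\pm Y$, and each witness cell is by construction a subset of the corresponding $X^{-1}(x)$, this supplies for every $(x,\delta)$ the $x'$ required by Def.~\ref{def:defi_5}, so that $\DDD' \gg \DDD$. Second, the remaining mass I repartition into $n$ large cells $G_1,\dots,G_n$ with $G_\gamma \subseteq \Gamma^{-1}(\gamma)$ and $P(G_\gamma) = 1/n - O(\eta)$, setting $Y' = +1$ on $G_1$ and $Y' = -1$ on $G_2,\dots,G_n$. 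Then $G_1$ answers $\delta_1$ ``yes'' correctly, and for each $\gamma \ge 2$ any $G_{\gamma'}$ with $\gamma' \in \{2,\dots,n\}\setminus\{\gamma\}$ (nonempty because $n \ge 3$) answers $\delta_\gamma$ ``no'' correctly; every such valid cell has measure $1/n - O(\eta)$, while each witness cell is smaller and hence never preferred by the inner minimum. Thus each probe contributes $-\ln(1/n - O(\eta))$, giving $\mathcal{C}_P(\Gamma;\DDD') = n\ln n + O(\eta) \to n\ln n$ as $\eta \to 0$.

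The verification then reduces to four routine checks: that $\DDD > \Gamma$ and $\DDD' \gg \DDD$; that $B$ is genuinely invalid for every probe (the contamination argument); and that no valid setup of $\DDD'$ has measure exceeding $1/n$, so the minimum is approached from above. The one genuinely delicate point -- and the reason the example needs two separate parameters -- is that $\DDD' \gg \DDD$ forces $\DDD'$ to resolve the fine structure inside \emph{every} cell of $\DDD$, including the tiny spikes, which naively would inflate $\mathcal{C}_P(\Gamma;\DDD')$ just as badly as $\mathcal{C}_P(\Gamma;\DDD)$. The resolution, which is the crux of the construction, is that strong inference only demands the \emph{existence} of suitable sub-cells, not that they carry appreciable measure: by making all witness cells $O(\eta)$ and letting $\DDD'$ infer $\Gamma$ directly through its own coarse, $\Gamma$-aligned cells $G_\gamma$, $\DDD'$ never pays the emulation cost in its own complexity while still strongly inferring the expensive device $\DDD$.
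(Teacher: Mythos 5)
Your construction rests on exactly the same mechanism as the paper's proof (which uses a concrete twelve-state example with binary $\Gamma$ and a single parameter $p$): the valid setup cells of $\DDD$ for the probes of $\Gamma$ are made tiny, while $\DDD'$ strongly infers $\DDD$ through auxiliary small cells nested inside $\DDD$'s cells but infers $\Gamma$ directly through its own large, $\Gamma$-aligned cells --- strong inference requiring only the \emph{existence} of suitable sub-cells, not that they carry appreciable measure. Your verifications that $\DDD > \Gamma$, that $\DDD' \gg \DDD$, that ${\mathcal{C}}_P(\Gamma;\DDD) = -n\ln\epsilon \rightarrow \infty$, and that ${\mathcal{C}}_P(\Gamma;\DDD') \rightarrow n\ln n$ are all correct. (One redundancy: the two ``contaminating'' points in $B$ are unnecessary, since $B$ already fails every probe $\delta_\gamma$ --- it contains the bulk $B_\gamma \subseteq \Gamma^{-1}(\gamma)$, on which $Y = -1$ while $\delta_\gamma(\Gamma) = +1$.)

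The genuine gap is your claim that for $n = |\Gamma(U)| \ge 3$ and balanced $P$, the value $n\ln n$ is \emph{the minimum} of the inference complexity, ``with each of the $n$ probes costing $\ln n$.'' That is false, and it matters, because the proposition asserts closeness to the minimum value. The inner minimum in Def.~\ref{def:def6} may be achieved by the \emph{same} setup cell for several different probes: weak inference gives credit for answering `no', so a single cell on which $Y = -1$ and on which $\Gamma$ avoids every value in a set $A$ is simultaneously valid for every probe $\delta_\gamma$ with $\gamma \in A$. For example, with $n = 3$ and balanced $P$, take one cell equal to $\Gamma^{-1}(3)$ with $Y = -1$ (valid for both $\delta_1$ and $\delta_2$, cost $\ln 3$ each), one cell equal to essentially all of $\Gamma^{-1}(1)\cup\Gamma^{-1}(2)$ with $Y = -1$ (valid for $\delta_3$, cost roughly $\ln(3/2)$), plus a tiny cell with $Y = +1$ to keep $Y$ surjective; this device weakly infers $\Gamma$ with complexity roughly $2\ln 3 + \ln(3/2) < 3\ln 3$. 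So for $n \ge 3$ your $\DDD'$ approaches $n\ln n$ but \emph{not} the minimum, and the proposition's descriptor ``the minimum value of $|\Gamma|\times\ln(|\Gamma(U)|)$'' is false in your setting. The fix is to do what the paper does: take $|\Gamma(U)| = 2$. For binary $\Gamma$ the two probes satisfy $\delta_1(\Gamma) = -\delta_2(\Gamma)$ pointwise, so no cell can be valid for both; the two minimizing cells are then distinct, hence disjoint, hence their measures sum to at most $1$, giving ${\mathcal{C}}_P(\Gamma;\DDD') \ge 2\ln 2$ for every admissible device, so $2\ln 2$ genuinely is the minimum. Note that your $\DDD'$ design (`yes' on $G_1$ only, `no' on the rest) itself requires $n \ge 3$; for $n = 2$ simply set $Y' = +1$ on both large cells $G_1$ and $G_2$, so that each probe is answered `yes' by its own cell of measure $1/2 - O(\eta)$, after which the rest of your argument goes through verbatim.
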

 
 \begin{proof} 
 The proof is by example. 

Let $U$ have twelve states, labeled A$, \ldots, $ L and suppose that the functions
%
 $P, \Gamma, \DDD' = (X, Y)$ and $\DDD = (X, Y)$ are
as in Fig.~\ref{fig:table-sic-s}, with $1/4 < p < 1$.

\begin{figure}[tbp]
  	\hglue-3cm
        \includegraphics[width=1.7\columnwidth]{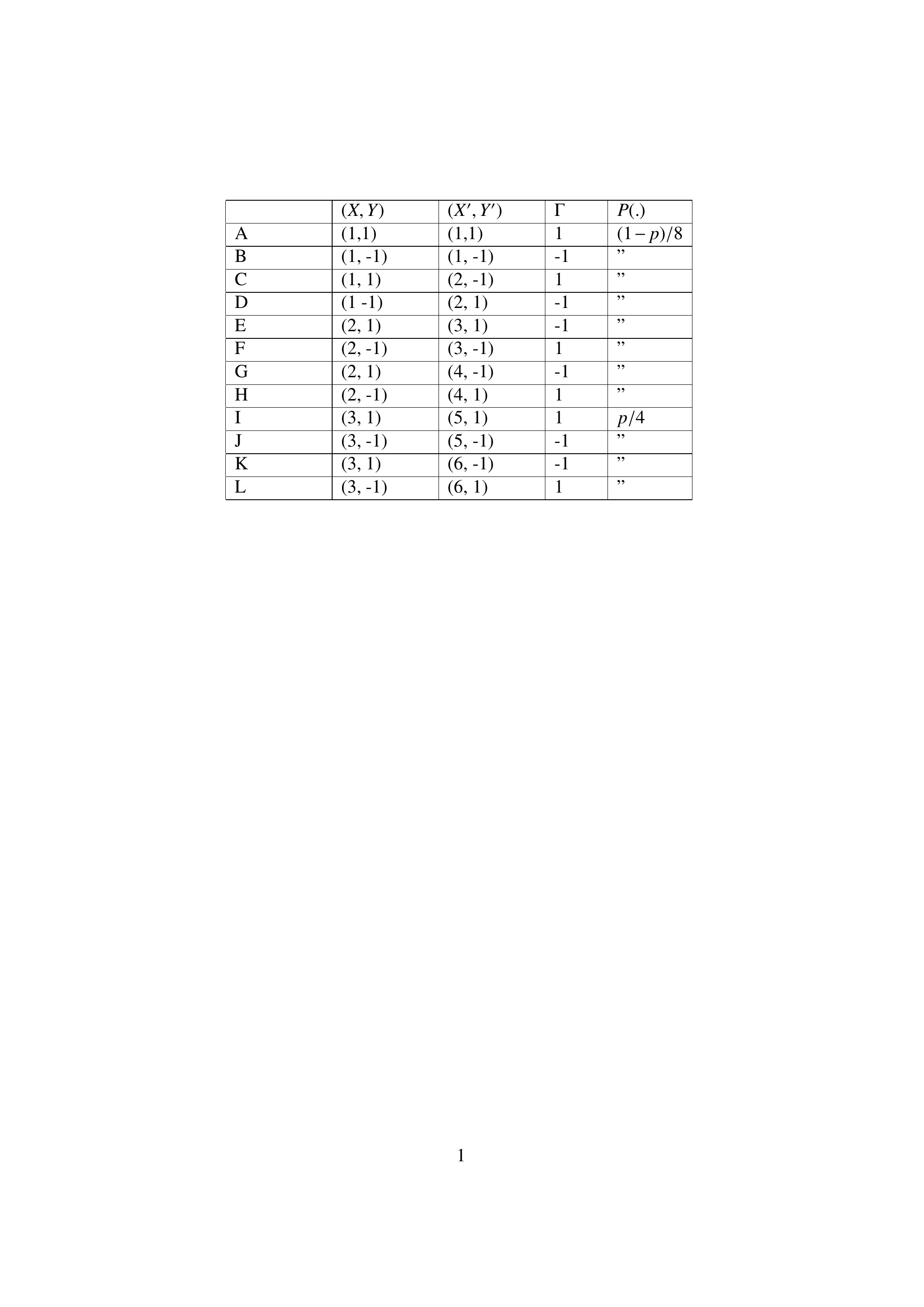}
        \vglue-12.5cm
        \caption{Scenario illustrating discrepancies of complexities of two IDs where one strongly infers the other.}
\label{fig:table-sic-s}
\end{figure}

\begin{enumerate}
\item To verify that $\DDD > \Gamma$, for the $1$-probe, choose $x = 1$. For the $-1$ probe,
choose $x = 2$.

\item To verify that $\DDD' \gg \DDD$, first, for the $1$-probe, for $x = 1,2,3$, choose $x' = 1,3,5$, respecitively.
Then for the $-1$-probe, for $x = 1,2,3$, choose $x' = 2,4,6$, respectively.

\item To verify that ${\mathcal{C}}(\Gamma ; \DDD)$ can be arbitrarily large,
first expand it as $-2\ln((1 - p)/2) = 2\ln(2) - 2\ln(1 - p)$.
(For the 1-probe, $x = 1$ and ${\mathcal{M}}_{P, X}(x) = -\ln((1 - p)/2)$ and similarly for the -1-probe and $x = 2$.)

\item To verify that ${\mathcal{C}}(\Gamma ; \DDD')$ can be arbitrarily close to its minimal value,
write it as $-2\ln(p/2) = 2\ln(2) - 2\ln(p)$.
(For the 1-probe, $x' = 5$ and ${\mathcal{M}}_{P,X}(x) = -\ln(p/2)$ and similarly for the -1-probe and $x' = 6$.)
\end{enumerate}

\noindent Finally, by taking $p$ arbitrarily close to 1, ${\mathcal{C}}(\Gamma ; \DDD)$ becomes arbitrarily large while 
${\mathcal{C}}(\Gamma ; \DDD')$ becomes arbitrarily close to the minimum of $2\ln(2)$.
\end{proof}

Although there is not space to analyze them here, it is worth noting that
there are several ways to translate some of the
mathematical structures of algorithmic information theory into the
inference device framework. For example, just as a given Turing machine may fail
to produce an output for some specific input, so an inference device may fail
to reach a conclusion for some specific setup. This motivates the following definition:

\begin{definition}
A device $(X, Y)$ \textbf{halts} for setup value $x$ iff $X= x \Rightarrow Y= y$ for some
single value $y$.
\end{definition}
\noindent We say that $x$ is a ``halting setup'' if $(X, Y)$  halts
for $x$. Parelleling the usual definitions in TM theory, 
we say that an ID is \textbf{total}, or \textbf{recursive} iff it
halts for all $x \in X(U)$. So an ID $(X, Y)$ is recursive iff $X$ refines $Y$.

Given this definition of what it means for a device to halt on a given input, we can
define the inference analog of a prefix-free Turing machine~\cite{livi08}:
\begin{definition}
Given a semi-measure $\mu$, a device $(X, Y)$ is \textbf{prefix(-free)} iff
\ba
\sum_{x : \DDD \; halts \; on \; x} 2^{-{\mathcal{M}}_{\mu,X} (x)} &\le& 1 \nonumber
\ea
\end{definition}
\noindent By Kraft's inequality, if $\DDD$ is prefix-free for a semi-measure $\mu$, 
then there is a prefix-free code for the set of all halting $x \in X(U)$. Therefore we
can identify that set of $x$'s with semi-infinite bit strings, or equivalently with the
natural numbers~\cite{livi08}.

As a final example, note that the min over $x$'s in Def.~\ref{def:def6} is a direct analog of the min in the
definition of Kolmogorov complexity (there the min is over those
strings that when input to a particular UTM result in the desired
output string). A natural modification to Def.~\ref{def:def6} is to remove the
min by considering all $x$'s that cause $Y = \delta(\Gamma)$, not just of
one of them:
\begin{eqnarray*}
{\hat{{\mathcal{C}}}}(\Gamma ; \DDD) \;\;&\triangleq& \;\; \sum_{\delta \in
{\mathcal{P}}(\Gamma)} -{\mbox{ln}} \left[\; \mu \left(\cup_{x :  X
= x \Rightarrow Y = \delta(\Gamma)} X^{-1}(x) \right) \;\right]  \nonumber \\
&=& \sum_{\delta \in {\mathcal{P}}(\Gamma)} -{\mbox{ln}} \left[\sum_{x : X = x \Rightarrow
Y = \delta(\Gamma)} e^{-{\mathcal{M}}(x)}\right],
\end{eqnarray*}
where the equality follows from the fact that for any $x, x' \ne x$,
$X^{-1}(x) \cap X^{-1}(x') = \varnothing$.  The argument of the $\ln(.)$ in
this modified version of inference complexity has a direct analog in
TM theory: The sum, over all input strings $s$ to a UTM that generates
a desired output string $s'$, of $2^{-n(s)}$, where $n(s)$ is the bit
size of $s$. This is sometimes known as the ``algorithmic'' or ``Solomonoff''
probability of $s'$~\cite{livi08} in the theory of TMs.

\section{Modeling the physical universe in terms of inference devices}

I now expand the scope of the discussion to allow sets of many  inference devices and / or many
functions to be inferred. Some of the
philosophical implications of the ensuing results are then discussed in the next subsection.

\subsection{Formalization of physical reality involving Inference Devices}
\label{sec:realities}

Define a {\bf{reality}} as a pair $(U; \{F_\phi\})$ where the space $U$ is 
the {\bf{domain}} of the reality, and
$\{F_\phi\}$ is a (perhaps uncountable) non-empty set of functions all
having domain $U$. We are particularly interested in {\bf{device
realities}} in which some of the functions are binary-valued, and we
wish to pair each of those functions uniquely with some of the other
functions. In general, not all of the functions in $\{F_\phi\}$
need to be members of such a pair. Accordingly, 
the most general form of such realities is triples of the form $(U;
\{(X_\alpha, Y_\alpha)\}; \{\Gamma_\beta\})$, or just $(U; \{{{\DDD}}_\alpha\};
\{\Gamma_\beta\})$ for short, where $\{{{\DDD}}_\alpha\}$ is a set of devices over $U$
and $\{\Gamma_\beta\}$ a set of functions over $U$.

Define a {\bf{universal device}} as any device in a reality that can
strongly infer all other devices and weakly infer all functions in
that reality. Prop.~\ref{thm:thm_3} means that no reality can contain more than one
universal device. So in particular, if a reality contains a
universal device and there is a given distribution over $U$, then 
the reality has a unique natural choice for an inference
complexity measure, namely the inference complexity with respect to
its (unique) universal device. (This contrasts with Kolmogorov
complexity, which depends on the arbitrary choice of what UTM to use.)

For simplicity, assume the index set $\phi$ is countable, with elements $\phi_1, \phi_2, \ldots$.
It is interesting to consider the {\bf{reduced form}} of a reality $(U;
\{F_\phi\})$, which is defined
as the image of the function $u \rightarrow (F_{\phi_1}(u), F_{\phi_2}(u), \ldots)$.
In particular, the reduced form of a
device reality is the set of all tuples $([x_1, y_1], [x_2, y_2],
\ldots; \gamma_1, \gamma_2, \ldots)$ for which $\exists \; u \in U$
such that simultaneously $X_1(u) = x_1, Y_1(u) = y_1, X_2(u) = x_2,
Y_2(u) = y_2, \ldots ; \Gamma_1(u) =
\gamma_1, \Gamma_2(u) = \gamma_2, \ldots$. By working with reduced forms of
realities, we dispense with the need to explicitly discuss $U$ entirely.{\footnote{Note
the implication that if we work with reduced realities,
all of the non-stochastic analysis of the previous sections can be reduced to
satisfiability statements concerning sets of categorial variables. For example, the fact
that a device cannot weakly infer itself is equivalent to the statement that there
is no countable space $X$ with at least two elements and associated
set of pairs $\mathcal{V} = \{(x_i, y_i)\}$ where all $y_i \in \B$, 
such that for both probes $\delta$ of $y_i$, there
is some value $x' \in X$ such that in all pairs $(x', y) \in \mathcal{V}$, $y = \delta(y)$.}}

\begin{example}
Take $U$ to be the set of all possible histories of a universe across all time that are consistent
with the laws of physics. So each $u$ is a specification of a trajectory
of the state of the entire universe through all time. The laws of physics are then
embodied in restrictions on $U$. For example, if
one wants to consider a universe in which the laws of physics are
time-reversible and deterministic, then we require that no two
distinct members of $U$ can intersect. Similarly, properties like
time-translation invariance can be imposed on $U$, as can more
elaborate laws involving physical constants. 

Next, have \{$\Gamma_\beta$\} be a set of physical characteristics of
the universe, each characteristic perhaps defined in terms of the
values of one or more physical variables at multiple locations and/or
multiple times. Finally, have \{${{\DDD}}_\alpha$\} be all prediction /
observation systems concerning the universe that all scientists might
ever be involved in.

In this example the laws of physics are embodied in $U$. The
implications of those laws for the relationships among the agent
devices \{${{\DDD}}_\alpha$\} and the other characteristics of the universe
\{$\Gamma_\beta$\} is embodied in the reduced form of the
reality. Viewing the universe this way, it is the $u \in U$,
specifying the universe's state for all time, that has ``physical
meaning''. The reduced form instead is a logical implication of the
laws of the universe. In particular, our universe's $u$ picks out the
tuple given by the Cartesian product $[\varprod_\alpha {{\DDD}}_\alpha (u)] \times [\varprod_\beta
\Gamma_\beta(u)]$ from all tuples in the reduced form of the reality.

As an alternative we can view the reduced form of the reality itself as
encapsulating the ``physical meaning'' of the universe. In this
alternative $u$ does not have any physical meaning. It is only the
relationships among the inferences about $u$ that one might want to
make and the devices with which to try to make those inferences that
has physical meaning. One could completely change the space $U$ and
the functions defined over it, but if the associated reduced form of
the reality does not change, then there is no way that the devices in
that reality, when considering the functions in that reality, can tell
that they are now defined over a different $U$.  In this view, the
laws of physics i.e., a choice for the set $U$, are simply a
calculational shortcut for encapsulating patterns in the reduced form
of the reality. It is a particular instantiation of those patterns
that has physical meaning, not some particular element $u \in U$.

See~\cite{tegmark2008mathematical}
for another perspective on the relationship between physical reality and mathematical
structures.
\label{ex:reality}
\end{example}

Given a reality $(U; \{(X_1, Y_1), (X_2, Y_2),
\ldots \})$, we say that a pair of devices in it are
{\bf{pairwise (setup) distinguishable}} if they are distinguishable. 
We say that the reality as a whole is {\bf{mutually (setup)
distinguishable}} iff $\forall \; x_1 \in X_1(U), x_2 \in X_2(U),
\ldots \; \exists \; u
\in U$ s.t. $X_1(u) = x_1, X_2(u) = x_2, \ldots$. 

\begin{proposition}
\label{prop:prop3}

{\bf{i)}} There exist realities $(U; {{\DDD}}_1, {{\DDD}}_2, {{\DDD}}_3)$ where each pair
of devices is pairwise setup distinguishable and ${{\DDD}}_1 > {{\DDD}}_2 > {{\DDD}}_3 > {{\DDD}}_1$. 

{\bf{ii)}} There exists no reality $(U; \{{{\DDD}}_i : i \in {\mathscr{N}}
\subseteq {\mathbb{N}}\})$ where the devices are mutually
distinguishable and for some integer $n$, ${{\DDD}}_1 > {{\DDD}}_2 >
\ldots > {{\DDD}}_n > {{\DDD}}_1$.

{\bf{iii)}} There exists no reality $(U; \{{{\DDD}}_i : i  \in {\mathscr{N}}
\subseteq {\mathbb{N}}\})$ where for some
integer $n$, ${{\DDD}}_1 \gg {{\DDD}}_2 \gg \ldots \gg {{\DDD}}_n \gg {{\DDD}}_1$.
\end{proposition}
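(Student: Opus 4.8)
The plan is to handle the three parts separately; the only genuine construction is in (i), while (ii) and (iii) are short paradox/transitivity arguments. For (i) I would build the reality directly from its reduced form. Give each device a binary setup function, writing $X_i(U) = \{a_i, b_i\}$, and arrange things so that the setup value $a_i$ forces $Y_i = Y_{i+1}$ while $b_i$ forces $Y_i = -Y_{i+1}$ (indices mod $3$, so $Y_4 \equiv Y_1$). By Def.~\ref{def:weak_inf} this is exactly $\DDD_i > \DDD_{i+1}$, since $a_i$ answers the probe $\delta_1$ of $Y_{i+1}$ and $b_i$ answers $\delta_{-1}$. The key observation is that because each $X_i$ takes only these two values, \emph{every} realizable state imposes a sign relation $Y_i = \sigma_i Y_{i+1}$ with $\sigma_i = \pm 1$, so chaining around the cycle forces $Y_1 = \sigma_1 \sigma_2 \sigma_3 Y_1$. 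Since $Y_1 \in \B$, a state is consistent precisely when $\sigma_1 \sigma_2 \sigma_3 = 1$, i.e. when an even number of negation-setups are selected. I would therefore define $U$ (equivalently, its reduced form) to be the eight tuples obtained by choosing a sign pattern $(\sigma_1, \sigma_2, \sigma_3)$ with product $+1$ together with either value of $Y_1$.

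With this reduced form in hand the three verifications are immediate. Each $Y_i$ is surjective onto $\B$, and each weak inference $\DDD_i > \DDD_{i+1}$ holds by the defining property of $a_i$ and $b_i$. For pairwise distinguishability, given any target pair of setup values, say $(x_1, x_2)$, I can always complete it to a consistent triple by taking $\sigma_3 = \sigma_1 \sigma_2$, so Def.~\ref{def:setup_dist} is met for every pair. Conversely, the triples with $\sigma_1 \sigma_2 \sigma_3 = -1$ are \emph{not} realizable, so the reality fails to be mutually distinguishable --- which is exactly what part (ii) demands, and confirms the internal consistency of the whole proposition.

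For (ii) I would run the Epimenides argument behind the second demon theorem (Prop.~\ref{prop:2}) around the entire cycle at once. Assume a mutually distinguishable reality with $\DDD_1 > \DDD_2 > \cdots > \DDD_n > \DDD_1$. For each $i < n$ pick the setup of $\DDD_i$ answering the probe $\delta_1$ of $Y_{i+1}$ (forcing $Y_i = Y_{i+1}$), and for $i = n$ pick the setup answering $\delta_{-1}$ of $Y_1$ (forcing $Y_n = -Y_1$). Mutual distinguishability then supplies a \emph{single} $u$ realizing all $n$ of these setup values simultaneously; at that $u$ the forced equalities chain to $Y_1(u) = Y_2(u) = \cdots = Y_n(u) = -Y_1(u)$, contradicting $Y_1(u) \in \B$. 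The essential point --- and the reason (i) and (ii) do not conflict --- is that this step needs one $u$ consistent with all $n$ setups at once, which only mutual, not merely pairwise, distinguishability provides.

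For (iii) no distinguishability hypothesis is available, so I would instead exploit the transitivity of strong inference. Applying Prop.~\ref{thm:thm_2}(ii) repeatedly along $\DDD_1 \gg \DDD_2 \gg \cdots \gg \DDD_n$ collapses the chain to $\DDD_1 \gg \DDD_n$; together with the closing edge $\DDD_n \gg \DDD_1$ this says $\DDD_1$ and $\DDD_n$ strongly infer each other, which is impossible by Prop.~\ref{thm:thm_3}. The degenerate cases $n = 1, 2$ are subsumed, since a single edge already yields self strong inference or mutual strong inference, both excluded by the negation-probe argument underlying Prop.~\ref{thm:thm_3}; the only routine check is that transitivity chains correctly through the $n-1$ applications. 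The main obstacle in the proposition is thus isolated entirely in part (i): producing a reduced form rich enough to be pairwise distinguishable yet constrained --- by the parity condition $\sigma_1 \sigma_2 \sigma_3 = 1$ --- just enough to support a consistent weak-inference cycle.
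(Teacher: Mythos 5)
Your proof is correct and follows essentially the same route as the paper's (whose proof is deferred to~\cite{wolp08b}): an explicit reduced-form construction for (i), the probe-chaining Epimenides contradiction enabled by mutual distinguishability for (ii), and repeated application of Prop.~\ref{thm:thm_2}(ii) followed by Prop.~\ref{thm:thm_3} for (iii). Your parity construction $\sigma_1\sigma_2\sigma_3 = 1$ in part (i) is a clean way of packaging the standard counterexample, and your observation that it is pairwise but not mutually distinguishable correctly explains why (i) and (ii) are compatible.
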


There are many ways to view a reality that
contains a countable set of devices $\{{{\DDD}}_i\}$ as a graph, for example by having
each node be a device while the edges between the nodes concern
distinguishability of the associated devices, or concern whether one
weakly infers the other, etc. In particular, given a countable
reality, define an associated directed graph by identifying each
device with a separate node in the graph, and by identifying each
relationship of the form ${{\DDD}}_i \gg {{\DDD}}_j$ with a directed edge going from
node $i$ to node $j$.  We call this the {\bf{strong inference graph}}
of the reality.

Prop.~\ref{prop:whats_inferrable}(ii) means that no reality with $|U| > 3$ can have a universal
device if the reality contains all functions defined over $U$. Suppose that this is not the case,
so that the reality may contain a universal device.
Prop.~\ref{thm:thm_3} means that such a universal device must be a root node
of the strong inference graph of the reality and that there cannot be any other root node. 
In addition, by Prop.~\ref{thm:thm_2}(ii), we
know that every node in a reality's strong inference graph with successor nodes has edges
that lead directly to every one of those successor nodes (whether or not
there is a universal device in the reality). By Prop.~\ref{prop:prop3}(iii) we also
know that a reality's strong inference graph is acyclic. 

Note that even if a device ${{\DDD}}_1$ can strongly infer all other devices
${{\DDD}}_{i > 1}$ in a reality, it may not be able to infer them
$simultaneously$ (strongly or weakly). For example, define a ``composite'' function $\Gamma : u
\rightarrow (Y_2(u), Y_3(u), \ldots)$. Then the fact that ${{\DDD}}_1$ is a
universal device does not mean that $\forall \delta \in {\mathcal{P}}(\Gamma) \;\exists
\; x_1 : Y_1 = \delta(\Gamma)$. See the discussion in~\cite{wolp01} on
``omniscient devices'' for more on this point.

We now define what it means for two devices to operate in an identical
manner:

\begin{definition}
\label{def:def7}
Let $U$ and $\hat{U}$ be two (perhaps
identical) sets. Let ${{\DDD}}_1$ be a device in a reality with domain
$U$. Let $R_1$ be the relation between $X_1$ and $Y_1$ specified by
the reduced form of that reality, i.e., $x_1 R_1 y_1$ iff the pair
$(x_1, y_1)$ occurs in some tuple in the reduced form of the
reality. Similarly let $R_2$ be the relation between $X_2$ and $Y_2$
for some separate device ${{\DDD}}_2$ in the reduced form of a reality having
domain ${\hat{U}}$.

Then we say that ${{\DDD}}_1$ {\bf{mimics}} ${{\DDD}}_2$ iff there is an injection,
$\rho_X : X_2({\hat{U}}) \rightarrow X_1(U)$ and a bijection $\rho_Y :
Y_2({\hat{U}}) \leftrightarrow Y_1(U)$, such that for $\forall x_2,
y_2$, $x_2 R_2 y_2 \Leftrightarrow \rho_X(x_2) R_1 \rho_Y(y_2)$. If
both ${{\DDD}}_1$ mimics ${{\DDD}}_2$ and vice-versa, we say that ${{\DDD}}_1$ and ${{\DDD}}_2$
are {\bf{copies}} of each other.
\label{def:mimic}
\end{definition}

Intuitively, when expressed as devices, two physical systems are
copies if they follow the same inference algorithm with $\rho_X$ and
$\rho_Y$ translating between those systems. As an example, consider the
case where $U = \hat{U}$, and we have a reality over that space that
contains two separate physical computers that are inference
devices, both being used for prediction. If those devices are copies
of each other, then they form the same conclusion for the same value
of their setup function, i.e., they perform the same computation for
the same input. 

The requirement in Def.~\ref{def:mimic} that $\rho_Y$ be surjective
simply reflects the fact that since we're considering devices, $Y_1(U)
= Y_2(U) = {\mathbb{B}}$. Note that because $\rho_X$ in Def.~\ref{def:mimic} need not be surjective,
there can be a device in $U$ that mimics multiple devices in $\hat{U}$. The relation of one device mimicing another
is reflexive and transitive. The relation of two devices being copies
is an equivalence relation.

Say that an inference device $\DDD_2$ is being used for observation and
$\DDD_1$ mimics $\DDD_2$. The fact that $\DDD_1$ mimics $\DDD_2$ does not imply
that $\DDD_1$ can emulate the observation that $\DDD_2$ makes of some
function $\Gamma$. The mimicry property only relates $\DDD_1$ and
$\DDD_2$, with no concern for  relationships with any third
function.
This is why up above we formalized what it means for one device that ``emulates'' another 
in terms of strong inference rather than in terms of mimicry. Indeed, there are some
interesting relationships between what it means for devices to be copies
and what it means for one to strongly infer the other:

\begin{proposition}
\label{prop:prop5}
Let ${{\DDD}}_1$ be a copy of ${{\DDD}}_2$ where both exist in the same reality.

{\bf{i)}} It is possible that ${{\DDD}}_1$ and ${{\DDD}}_2$ are distinguishable and
${{\DDD}}_1 > {{\DDD}}_2$, even for finite $X_1(U), X_2(U)$.

{\bf{ii)}} It is possible that ${{\DDD}}_1 \gg {{\DDD}}_2$, but only if $X_1(U)$ and
$X_2(U)$ are both infinite.
\end{proposition}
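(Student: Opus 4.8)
The plan is to prove both possibility claims by explicit construction and the ``only if'' clause of (ii) by a cardinality count. Throughout I use that the two probes of a binary conclusion $Y_2$ satisfy $\delta_1(Y_2) = Y_2$ and $\delta_{-1}(Y_2) = -Y_2$, so that ``$\DDD_1 > \DDD_2$'' means there are setup values $x_1^+, x_1^-$ with $X_1 = x_1^+ \Rightarrow Y_1 = Y_2$ and $X_1 = x_1^- \Rightarrow Y_1 = -Y_2$, while ``$\DDD_1 \gg \DDD_2$'' means that for every $x_2 \in X_2(U)$ there are setup values pinning $X_2 = x_2$ together with each of $Y_1 = Y_2$ and $Y_1 = -Y_2$.

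For (i) I would take $X_1(U) = X_2(U) = \{1,2\}$ and let the reduced form be the eight tuples obtained by letting $(X_1, X_2, Y_2)$ range over $\{1,2\}^2 \times \B$, with $Y_1$ defined by $Y_1 = Y_2$ when $X_1 = 1$ and $Y_1 = -Y_2$ when $X_1 = 2$. Weak inference $\DDD_1 > \DDD_2$ then holds with $x_1^+ = 1$ and $x_1^- = 2$. Distinguishability is immediate because all four pairs $(X_1, X_2) \in \{1,2\}^2$ occur. Finally both reduced-form relations are the full product $\{1,2\} \times \B$ (each setup value is compatible with each conclusion value), so the identity maps $\rho_X, \rho_Y$ witness that each device mimics the other; hence they are copies, with finite setup ranges, as required.

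For the ``only if'' half of (ii) --- which I expect to be the crux --- I would argue by cardinalities. Being copies means $\DDD_1$ mimics $\DDD_2$ and conversely, so there are injections $X_2(U) \to X_1(U)$ and $X_1(U) \to X_2(U)$; hence $|X_1(U)| = |X_2(U)|$. On the other hand $\DDD_1 \gg \DDD_2$ supplies, for each $x_2 \in X_2(U)$ and each of the two probes, a setup value $x_1 \in X_1(U)$ with nonempty preimage. These choices define a map $X_2(U) \times \{\delta_1, \delta_{-1}\} \to X_1(U)$ that I claim is injective: a given $x_1$ forces a single value of $X_2$ on its nonempty preimage, so it cannot serve two distinct $x_2$; and it cannot serve both probes of one $x_2$, since that would force $Y_1 = Y_2$ and $Y_1 = -Y_2$ simultaneously on a nonempty set, impossible for $\B$-valued $Y_2$. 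Injectivity gives $|X_1(U)| \ge 2\,|X_2(U)|$. Combined with $|X_1(U)| = |X_2(U)| =: \kappa$ this forces $\kappa \ge 2\kappa$, which for a nonzero finite $\kappa$ is false; and $\kappa \ge 1$ because surjectivity of $Y$ onto $\B$ makes $U$, and hence $X(U)$, nonempty. Therefore $\kappa$ is infinite, proving both setup ranges must be infinite.

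It remains to show $\DDD_1 \gg \DDD_2$ is achievable once the ranges are infinite. Here I would set $X_2(U) = \N$ with the full relation, index $X_1(U) = \N \times \B$, and define the reality to consist of the states $u_{n,s,b}$ (for $n \in \N$ and $s, b \in \B$) with $X_2 = n$, $Y_2 = b$, $X_1 = (n,s)$, and $Y_1 = s\,b$. Then $x_1 = (n,+1)$ pins $X_2 = n$ with $Y_1 = Y_2$ and $x_1 = (n,-1)$ pins $X_2 = n$ with $Y_1 = -Y_2$, so $\DDD_1 \gg \DDD_2$; both reduced-form relations are again full products, and since $X_1(U)$ and $X_2(U)$ are both countably infinite there are injections each way, so the devices are copies with infinite setup ranges. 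The only point needing care is the reading of the existential quantifier over setup values in Definitions~\ref{def:defi_5a}--\ref{def:defi_5} as ranging over attained values $x_1 \in X_1(U)$; without this convention the injectivity step in (ii) collapses, so I would make it explicit.
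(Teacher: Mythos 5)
Your proof is correct: the finite eight-state construction for (i), the injection $(x_2,\delta)\mapsto x_1$ giving $|X_1(U)|\ge 2\,|X_2(U)|$ (which, combined with $|X_1(U)|=|X_2(U)|$ from the mutual injections in Def.~\ref{def:mimic}, rules out finite setup ranges) for the ``only if'' half of (ii), and the $\N\times\B$ construction for its possibility half all check out against Defs.~\ref{def:weak_inf}, \ref{def:defi_5} and \ref{def:mimic}, and the quantifier convention you flag is exactly the paper's stated convention that expressions over setup values range over $X_i(U)$. Note that this paper defers the proof of Prop.~\ref{prop:prop5} to~\cite{wolp08b}; your argument --- explicit constructions plus the setup-cardinality doubling argument --- is essentially the same route taken there, so nothing further is needed.
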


\subsection{Philosophical implications}
\label{sec:philo}


Return now to the case where $U$ is a set of laws of physics (i.e.,
the set of all histories consistent with a set of such laws). The
results above provide general restrictions that must
relate any devices in such a universe, regardless of the detailed
nature of the laws of that universe. In particular, these results
would have to be obeyed by all universes in a
multiverse~\cite{smol02,agte05,carr07}.

Accordingly, it is interesting to consider these results from an
informal philosophical perspective. Say we have a device $\DDD$ in a
reality that is distinguishable from the set of all the other devices in the reality. Such a device can be viewed
as having ``free will'', in the limited sense that the way the other devices are set up
does not restrict how $\DDD$ can be set up.  Under this interpretation,
Prop.~\ref{prop:dist_not_infer} means that if two devices both have free will, then they cannot
predict / recall / observe each other with guaranteed complete
accuracy.  A reality can have at most one of its devices that has free
will and can predict / recall / observe / control the other devices in that
reality with guaranteed complete accuracy.{\footnote{There are 
other ways to interpret the vague term ``free will''. For example, Lloyd has
argued that humans have ``free will'' in the sense that under the assumption that they are computationally universal,
then due to the Halting theorem they cannot predict their own future conclusions ahead of
time~\cite{lloyd2012turing}. The fact that an ID cannot even weakly infer itself
has analogous implications that hold under a broader range of assumptions concerning human 
computational capability. For example, this implications hold under
the assumption that humans are \textit{not} computationally universal, or, at the opposite
extreme, under the assumption that they have super-Turing reasoning capability.}

Prop.~\ref{thm:thm_3} then goes further and considers devices that can emulate each
other. It shows that independent of concerns of free will, no two
devices can unerringly emulate each other. (In other words, no reality
can have more than one universal device.) Somewhat tongue in cheek,
taken together, these results could be called a ``monotheism
theorem''.

Prop.~\ref{prop:prop5} tells us that if there is a universal device in some reality, then it must be infinite
(have infinite $X(U)$) if there are other devices in the reality that
are copies of it. Now the time-translation of a physical device is a
copy of that device.{\footnote{Formally, say that the states of some physical system $S$ at
a particular time $t$ and shortly thereafter at $t + \delta$ are
identified as the setup and conclusion values of a device $\DDD$. In
other words, $\DDD$ is given by the functions $(X(u), Y(u))
\triangleq (S(u_t), S(u_{t+\delta}))$. In addition, let $R_S$ be the
relation between $X$ and $Y$ specified by the reduced form of the
reality containing the system. Say that the time-translation of $\DDD$,
given by the two functions $S(u_{t'})$ and $S(u_{t' +
\delta})$, also obeys the relation $R_S$. Then the pair of functions
$(X_2(u), Y_2(u)) \triangleq (S(u_{t'}), S(u_{t' + \delta}))$ is
another device that is copy of $\DDD$. So for example, the same
physical computer at two separate pairs of moments is two separate
devices, devices that are copies of each other, assuming they have the
same set of allowed computations.}} Therefore any physical device that is \emph{ever}
universal must be infinite. In addition, the impossibility of multiple
universal devices in a reality means that if any physical device is
universal, it can only be so at one moment in time. (Its
time-translation cannot be universal.) Again somewhat tongue in cheek,
taken together this second set of results could be called an
``intelligent design theorem''. 

In addition to the questions addressed by the monotheism and
intelligent design theorems, there are many other semi-philosophical
questions one can ask of the form ``Can there be a reality with the
following properties?''. By formulating such questions in terms
of reduced realities, they can often be
reduced to constraint satisfaction problems,
potentially involving infinite-dimensional spaces. In this sense,
many of the questions that have long animated philosophy can be 
formulated as constraint satisfaction problems.

\section{Physical knowledge}
\label{sec:phys_know}

Say that colloquially speaking you ``know'' the sky's color is currently blue, 
so long as $u$ is in some
subset $W$ of all histories. (The reason we consider subsets $W$
is that you cannot know that the sky's color is blue
in \textit{all} histories, since in some histories it will \textit{not}
be blue.) How can we
formalize this colloquial notion? Well, one thing it means if you ``know the sky's color
is blue'' for any $u \in W$ is that for such $u$'s you can ask yourself ``Is the sky green?'' and
answer 'no', ask yourself ``Is the sky red?'' and answer 'no', ask
yourself ``Is the sky blue'' and answer 'yes', etc., and always be
correct in your answer. So to ``know''
something implies you can weakly infer it.  Intuitively speaking, weak
inference formalizes an aspect of the semantic content of ``knowledge''.

To properly formalize knowledge of the sky's color however, we need to use more structure than
is just provided by weak inference of the sky's color. 
The problem is that it is possible that $(X, Y) > \Gamma$ even if for each $\gamma \in\Gamma(U)$,
the associated $x$ that causes $Y(u) = \delta(\gamma, \Gamma(u))$ always results in $Y(u) = -1$.{\footnote{Note
that there must be \emph{some} $x$ that allows $Y(u) = 1$, since $|Y(U)| = 2$. However
it may be that none of those specific $x$'s that are involved in the ID's inferring $\Gamma$ have that property.}}
Loosely speaking, $(X, Y)$ can infer the sky's color by always setting itself
up so that it (correctly) answers
that the sky does not have a given color $c$, so long as it can do that 
for any given color $c$.{\footnote{This characteristic  of weak
inference is an example of how flexible and unrestrictive the definition of weak
inference is, mentioned above. This particular flexibility is most reasonable for the
inference process of control, where typically $x$ directly influences the value of $\Gamma$,
and to a somewhat lesser degree for the inference process of observation.}} So to say that 
$(X, Y)$ knows $\Gamma = \gamma$ over (all $u$ in) $W$, it makes sense not just to require that
$(X, Y) > \Gamma$, but also that for all $\gamma \in \Gamma(U)$, 
there exists some $u \in W$ such that both $X(u)$ is a setup
value that arises for the question, ``Does $\Gamma(u) = \gamma$?'', and that $Y(u) = 1$, i.e.,
that the device answers `yes'.

Similarly, it would be problematic to say that the device $(X, Y)$ ``knows'' the sky's color
if $(X, Y)$ can infer the sky's color by always setting itself
up so that it (correctly) answers
that the sky \emph{does} have a given color $c$, so long as it can do that 
for any given color $c$. This suggests we want to also add the requirement 
that  for all $\gamma \in \Gamma(U)$, there exists some $u \in W$ such that $X(u)$ is a setup
value that arises for the question, ``Does $\Gamma(u) \ne \gamma$?'', and that $Y(u) = -1$, i.e.,
that the device answers `no'.

To model knowledge in this sense, not just inference, we need to guarantee that
there is some color $c$ such that whenever the history $u$ is in
some set $W$, for the question, ``Is the sky's color \emph{c}?'', the
inference device will answer `yes', and be correct. In other words,
you don't ``know'' the sky's color whenever $u \in W$ if you can only ever
say what color it is \emph{not} whenever $u \in W$. For it to be the case
that whenever $u \in W$ you know that the sky's color is \emph{c}, at
a minimum, it must be that you can correctly answer ``yes, the sky's
color is \emph{c}'', for some such $u \in W$. Nonetheless, we also want to
guarantee that there is at least one $u \in W$ at which we correctly answer
``no, $c'$ is not the sky's color'' for \textit{some} color $c'$, which may
be the same as $c$ or different.
%

\subsection{Formal definition of physical knowledge}

We can formalize this strengthened version of inference as follows:

\begin{definition}
Consider an inference device $(X, Y)$ defined over $U$, a function $\Gamma$ defined
over $U$, a $\gamma \in \Gamma(U)$, and a subset $W \subseteq U$. We say that
``\textbf{$(X, Y)$ \textbf{(physically) knows}
$\Gamma = \gamma$} over $W$'' iff $\exists$ $\xi :
\Gamma(U) \rightarrow \overline{X}$ such that
\begin{enumerate}[i)]
\item  $\forall \gamma' \in \Gamma(U), u \in  \xi(\gamma') \Rightarrow \delta_{\gamma'}(\Gamma(u)) = Y(u)$,
\item $\varnothing \;\ne\; \xi(\gamma) \cap W \;\subseteq\; Y^{-1}(1)$.
\item For all $\gamma' \ne \gamma$, $\varnothing \;\ne\; \xi(\gamma') \cap W \;\subseteq\; Y^{-1}(-1);$
\end{enumerate}
\label{def:knows}
\end{definition}
\noindent (Recall that $\overline{X}$ is the partition of $U$ induced by $X$.) 
When I want to specify the precise function $\xi$ used in Def.~\ref{def:knows},
I will say that ``by using $\xi$, $(X, Y)$ knows that $\Gamma = \gamma$ over $W$". 

By Def.~\ref{def:knows}(i), if $(X, Y)$ physically knows $\Gamma = \gamma$ over $W$, then $(X, Y)$ weakly infers $\Gamma$.
So $(X, Y)$ is always correct in its inference --- even if $u \not \in W$.
We impose this requirement for all of $U$, not just $W$,
because the agent using the device does not have any \emph{a priori} reason to expect that $u \in W$.
So it does them no good to be able to set up a device
that will correctly say whether some function has a certain value --- but only if 
the condition $u \in W \subset U$ holds, a condition they cannot detect.

Def.~\ref{def:knows}(ii) and  Def.~\ref{def:knows}(iii) are the extra conditions
beyond just weak inference, forcing the ID
to answer 'yes' at least once, and to answer 'no' at least once. Neither of those conditions
depend on the precise form of the function $\Gamma(u)$, only its image, $\Gamma(U)$
(which specifies the domain of $\xi$).
It's also worth noting that most of the analysis below does not invoke Def.~\ref{def:knows}(iii).
The motivation for including that condition anyway will arise below, when
we demonstrate that physical knowledge need not imply logical omniscience;
this demonstration is more consequential if it applies even when Def.~\ref{def:knows}(iii) holds.


The following properties are immediate:
\begin{lemma}
Let $(X, Y)$ be a device defined over $U$, $\Gamma$ a function over $U$, and
$W$ a subset of $U$. Say that by using $\xi$,
$(X, Y)$ knows that $\Gamma = \gamma$ over $W$. It follows that:
\begin{enumerate}[i)]
\item $\Gamma(u) = \gamma \; \forall u \in \xi(\gamma) \cap W$;
\item If $W$ refines $\Gamma$, then $\Gamma(W) = \gamma$.
\end{enumerate}
\label{lemma:elem}
\end{lemma}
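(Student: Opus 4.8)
The plan is to unpack Def.~\ref{def:knows} directly; both claims follow from the defining conditions of physical knowledge with essentially no extra machinery. The key tool is the semantics of the probe $\delta_{\gamma'}$, namely that $\delta_{\gamma'}(\Gamma(u)) = 1$ exactly when $\Gamma(u) = \gamma'$ and equals $-1$ otherwise.

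For claim (i), I would fix an arbitrary $u \in \xi(\gamma) \cap W$. By Def.~\ref{def:knows}(ii) this intersection lies inside $Y^{-1}(1)$, so $Y(u) = 1$. Applying Def.~\ref{def:knows}(i) with $\gamma' = \gamma$ (legitimate since $u \in \xi(\gamma)$) gives $\delta_\gamma(\Gamma(u)) = Y(u) = 1$, and by the definition of the probe this forces $\Gamma(u) = \gamma$. Since $u$ was arbitrary in $\xi(\gamma) \cap W$, claim (i) follows.

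For claim (ii), I would first use the nonemptiness half of Def.~\ref{def:knows}(ii), $\xi(\gamma) \cap W \neq \varnothing$, to select a witness $u_0 \in \xi(\gamma) \cap W$; claim (i) then yields $\Gamma(u_0) = \gamma$. Next I would unpack the hypothesis that $W$ refines $\Gamma$: by the definition of refinement stated in the notation subsection, $W$ is contained in a single cell of $\overline{\Gamma}$, i.e.\ $\Gamma$ is constant on $W$. Since $u_0 \in W$ already pins that constant value to $\gamma$, I conclude $\Gamma(u) = \gamma$ for every $u \in W$, which in the set-valued shorthand is exactly $\Gamma(W) = \gamma$.

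There is no genuine obstacle here --- the statement is labelled ``immediate'' for good reason. The only points that require a little care are (a) checking that $\gamma$ is a legitimate argument to $\xi$, which it is because $\gamma \in \Gamma(U)$ and $\Gamma(U)$ is the domain of $\xi$, and (b) not conflating the two ways in which $W$ interacts with $\Gamma$: claim (i) needs only condition (ii), whereas claim (ii) additionally invokes the refinement hypothesis to propagate the value $\gamma$ from the single witness $u_0$ to all of $W$. Notably neither argument uses Def.~\ref{def:knows}(iii), consistent with the earlier remark that most of the results below do not depend on that condition.
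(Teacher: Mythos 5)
Your proof is correct and follows essentially the same route as the paper's: Def.~\ref{def:knows}(ii) forces $Y(u)=1$ on $\xi(\gamma)\cap W$, Def.~\ref{def:knows}(i) plus the probe semantics then forces $\Gamma(u)=\gamma$ there, and the refinement hypothesis propagates that value to all of $W$. Your explicit use of the nonemptiness of $\xi(\gamma)\cap W$ to obtain a witness for claim (ii) is a point the paper leaves implicit, but it is the same argument.
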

\begin{proof}

To prove the first claim, note from Def.~\ref{def:knows}(ii)
that for all $u \in \xi(\gamma) \cap W$, $Y(u) = 1$. By
Def.~\ref{def:knows}(i), this means that at all such $u$, $\Gamma(u) = \gamma$,
completing the proof. Given this, if in addition
$W$ refines $\Gamma$ (so that $\Gamma(u)$ has the same value across
all $W$), then it must be that  $\Gamma(u) = \gamma$ for all
$u \in W$. (Similar arguments for $Y(u) = -1$
follow by using Def.~\ref{def:knows}(iii).) This establishes the second claim.
\end{proof}


Note that the definition of physical knowledge does not require that $\xi(\gamma) \subseteq Y^{-1}(1)$,
but only that $\xi(\gamma) \cap W \subseteq Y^{-1}(1)$ (and similarly for $Y^{-1}(-1)$). The simple fact that 
$\bar{x} \in \xi(\gamma)$ \emph{and nothing more} does
not imply that the device must answer `yes' if $X(u) = \xi(\gamma)$.
Furthermore, there may be more than one $\xi(.)$ with which the ID can ``know $\Gamma = \gamma$ over $W$''.
There may even be some other $\xi(.)$
that can be used to instead know $\Gamma = \gamma' \ne \gamma$ over $W$. This illustrates
that physical knowledge does not require that $\Gamma$ have the same value over all of $\xi(\Gamma)$.
This flexibility means that physical knowledge
includes knowledge that occurs by observation of the value of $\Gamma$, 
just like inference does.

A related point is that we do not require that $W$ refine $\Gamma$ to have a device
know that $\Gamma = \gamma$. This freedom allows the device to know
that $\Gamma = \gamma$ over $W$ even if the value of $\Gamma$ depends
on the value of $X$, the question the device is asking. In other words,
it is possible that the device both knows that $\Gamma = \gamma$ over $W$ 
and knows that $\Gamma = \gamma'$ over $W$  for some $\gamma' \ne \gamma$.
In this sense, the definition of physical knowledge is extremely non-restrictive.

This lack of restriction means that physical knowledge allows for
``quantum-mechanical-style'' coupling of an observation device and
the system being observed. More generally, it allows $(X, Y)$
to be a device that controls the property $\Gamma$ of the system
being observed. Typically though, when we are interested in knowledge in the sense
of accurate prediction or observation that does not affect
the system being predicted / observed, $W$ will refine $\Gamma$. 

The following example illustrates Def.~\ref{def:knows} in more detail:
\begin{example}
Say that the sky above Greenwich, UK at time $t$ 
is \{blue, cloud-free, with the sun less than 15 degrees above
the horizon\}. Furthermore, say that at some
time $t'$, Bob knows that the sky above Greenwich, UK at time $t$ is blue.
(It does not matter whether $t' = t$.)  To formalize this knowledge in terms 
of Def.~\ref{def:knows}, let $U$ be
the set of all histories in which both Bob and Greenwich, UK exist, and
where in addition the following conditions hold:
\begin{enumerate}[i)]
\item There is a partition
${\mathcal{C}}$ of all possible distributions of the intensity of light in optical wavelengths. For
example, one element of that partition is `green', one is `red', and one is the color 'blue';
\item Bob asks himself at $t'$, ``Is $c$ the color of the 
sky above Greenwich at $t$?'', for some color $c \in {\mathcal{C}}$;
\item Bob answers that question at that time $t'$ to the best of his abilities, with either a
'yes' or a 'no'. 
\end{enumerate}
Define $\Gamma(u)$ as the map taking each $u \in U$ to the associated
element of ${\mathcal{C}}$ that characterizes the color of the sky above Greenwich
at $t$. Define $X(u)$  as the map taking each $u \in U$ to the associated
color $c$ where at $t'$ Bob is asking himself the
question, ``Does the sky's color at Greenwich at $t$ equal
$c$?''. Assume that the image of $X$ is all ${\mathcal{C}}$. Next, let $Y(u)$
specify the binary answer in Bob's mind at $t'$. Finally, let
$W \subseteq U$ be the set of all histories $u$ such that
the sky above Greenwich at time $t$
is \{blue, cloud-free, with the sun less than 15 degrees above
the horizon\}, and assume $u \in W$.

Given all this, ``at $t'$, Bob knows the sky is blue above Greenwich at
time $t$, when the sky above Greenwich at time $t$
is \{blue, cloud-free, with the sun less than 15 degrees above
the horizon\}'', in the sense of Def.~\ref{def:knows} if three conditions
are met:
\begin{enumerate}[i)]
\item There is a $u$ for which ``the sky above Greenwich at time $t$
is \{blue, cloud-free, with the sun less than 15 degrees above
the horizon\}'' and for which $X(u)$ specifies the question, ``Is the sky
blue above Greenwich at $t$?'', i.e.  for which Bob is asking himself that
question at $t'$; 
\item For the $u$ in (i), $Y(u)$ specifies that Bob answers 'yes' at $t'$; 
%
\item If the sky above Greenwich at time $t$
were still \{blue, cloud-free, with the sun less than 15 degrees above
the horizon\}, but instead Bob were able to ask himself at $t'$ any question 
of the form ``Does the sky's color at Greenwich at $t$ equal $c'$?''
where $c' \ne$ blue (i.e., if the history were some different $u' \in W$
where $X(u') \ne X(u)$) and did so, Bob would answer 'no' at $t'$ (i.e., $Y(u)$ would equal $-1$).
\end{enumerate}

This is a very simple example. In particular, in some situations for
Bob to know at $t'$ that the sky is blue at Greenwich at $t$, Bob will
need to configure an apparatus to have a particular state at a
particular time (e.g., he may need to configure an automatic camera to
photograph the sky above Greenwich at $t$). In such situations, $X$
not only specifies the question that Bob asks himself at $t'$ but
also specifies how Bob configures the apparatus. See~\cite{wolp08b}.

Note that the set $U$ in this example might be a proper subset of the
set of all histories that are consistent with the laws of
physics. 
This is just an example of the
fact that the definition of weak inference in general implicitly
specifies a set $U$ that is a subset of the set of all physically possible
histories.  Indeed, there might very well be histories that are
consistent with the laws of physics in which Bob does not exist, or
perhaps even Greenwich does not exist. Clearly we cannot speak of
whether Bob does or does not ``know the sky is blue'' in any such
histories.{\footnote{This need to restrict the universe of discourse
to a subset of all physically possible histories holds no matter how
we formalize ``knowledge''. It has nothing to do with formalizing
knowledge using inference devices.}}

Finally, note that Bob could conceivably also know at time $t$ that the
sky above Greenwich, UK at time $t$ is cloud-free, or that the sun in
that sky is less than 15 degrees above the horizon. Any such
alternative knowledge would require posing a question to Bob about a
different subject. Therefore it would require a different value of
$X(u)$, and therefore a different $u$. However $W$, the state of
the sky above Greenwich at $t$, doesn't vary with the question that Bob asks
concerning that sky. This means
that $W$ contains each of those different $u$'s that result in different values of $X(u)$.
Ultimately, it is to allow this possibility of multiple questions all concerning the
same state of the sky that the definition of physical knowledge involves
sets $W$. 
\label{ex:1}
\end{example}

%

\subsection{Epistemic logic based on physical knowledge}
\label{sec:boole}


Inference knowledge obeys many of the usual properties considered
in theories of  logic. Here I illustrate this by
presenting some of those properties.

For the rest of this subsection assume that any space $U$ we consider is countable. 
I will consider Boolean-valued functions, i.e., functions $\Gamma$ such that $\Gamma(U) = \{-1, 1\}$.
It will also be convenient to identify 
each such binary-valued function $\Gamma$ over $U$ with the associated set $\Gamma^{-1}(1)$. 
Using this identification, any so-called \emph{concrete} Boolean algebra
over subsets of $U$ defines a Boolean algebra over an associated set of binary-valued functions, which
we call the \textbf{function Boolean algebra}. (Equivalently, a function
Boolean algebra over $U$ is a Boolean algebra specified by a set of
bit strings indexed by elements of $U$.)

Moreover, we can always express an arbitrary Boolean algebra
as a concrete Boolean algebra~\cite{wiki_boolean_algebra}.
Accordingly, given any Boolean algebra with propositions $\Phi$, we can identify any
specific proposition $\phi \in \Phi$ as a subset of $U$ in the concrete Boolean algebra, and then 
identify that element of the concrete Boolean algebra with an element of the function Boolean algebra. So we can identify
any proposition $\phi$ with a specific binary function, which we write as $\Gamma_\phi$ (with the 
set $\Phi$ usually implicit).

We now use the function Boolean algebra to define the standard shorthands of propositional logic
for binary-valued functions. For example, for any two binary-valued functions $\Gamma_1$ and $\Gamma_2$, 
their logical AND is 
\ba
\Gamma_1(u) \wedge \Gamma_2(u) = \{u \in U : \Gamma_1(u) = \Gamma_2(u) = 1\}
\label{eq:AND}
\ea
and the logical NOT is given by
\ba
\neg \Gamma_1(u)  = \{u \in U : \Gamma_1(u) = -1\}
\ea

This allows us to apply the usual axioms of Boolean algebra to binary-valued functions.
We can also adopt the usual abbreviations from Boolean algebra, e.g.,
\ba
\Gamma_1 \vee {\Gamma_2} &=& \neg (\neg \Gamma_1 \wedge \neg {\Gamma_2}); \\
\Gamma_1 \equiv \Gamma_2 &=& (\Gamma_1 \wedge \Gamma_2) \vee (\neg \Gamma_1 \wedge \neg {\Gamma_2}); \\
\label{eq:6}
\Gamma_1 \Rightarrow {\Gamma_2} &=& \neg\Gamma_1 \vee {\Gamma_2}; \\
\label{eq:7}
\Gamma_1 \Leftrightarrow {\Gamma_2} &=& (\Gamma_1 \Rightarrow {\Gamma_2}) \wedge ({\Gamma_2} \Rightarrow \Gamma_1)
\ea
I extend this terminology to cases where we are considering subsets $W \subseteq U$ in the obvious way,
e.g., $\Gamma_1 \Rightarrow {\Gamma_2}$ has the value
`true' over $W$ iff $\neg\Gamma_1(u) \vee {\Gamma_2}(u)$ has the value `true' for all
$u \in W$.

Similarly, though it is not used in this paper, as is conventional we can 
take the function TRUE$(u)$ to be an
abbreviation for some fixed propositional tautology such as $(\Gamma_1 \vee \neg \Gamma_1)(u)$ (i.e.,
the function that equals 1 for all $u \in U$) and
FALSE to be an abbreviation for the function $\neg$ TRUE.{\footnote{
Note that these two function each violate the requirement of the usual formulation of
inference devices  that every function take on at least two
values. So in particular, in order to consider whether a device can
weakly infer such a function, we would need to weaken the definition of weak inference of a function
to allow a single value in the image of that function.}} In keeping with this, 
I will sometimes use the term 'true' to mean the value $1$,
and use the term `false' to mean the value $-1$.

In many conventional types of epistemic logic, in particular Kripke structures, knowledge is
defined in such a way that is impossible for an agent to know two contradictory things.
However as discussed above, to allow physical knowledge to capture the case where
the agent ``knows a function has a specific value'' due to the (physical) 
fact that they control the value of that
function, we are careful to define terms so
that an agent can physically know contradictory things. Specifically,
this occurs if the function that they physically know, $\Gamma$,
takes on more than one value across the set under consideration, $W$, and by appropriate
choice of ($\xi$ and therefore) $\bar{x}$, the agent can cause different probes of $\Gamma(u)$
to have the value 1.  (Note that in this case $\Gamma$ is \emph{not} refined
by $W$.) 

Because of this, certain epistemic properties that are automatically satisfied in
conventional types of epistemic logic must be carefully derived in analysis of physical knowledge.
The following proposition presents one of these properties. For pedagogical clarity, 
in the remainder of this subsection, ``is true'' is shorthand for ``is
true over $W$'' and similarly ``is false'' is shorthand for ``is false
over $W$''.

\begin{proposition}
Let $\Gamma$
be any binary-valued function over $U$, $\DDD = (X, Y)$ any device over $U$,
and $W$ some (implicit) subset of $U$. Then\
$\DDD$ knows that $\Gamma$ is false iff $\DDD$ knows that $\neg \Gamma$ is true.
\label{prop:first_impl} 
\end{proposition}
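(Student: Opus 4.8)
The plan is to unwind both sides of the claimed equivalence directly through Def.~\ref{def:knows}, reading ``$\DDD$ knows that $\Gamma$ is false'' as ``$\DDD$ knows $\Gamma = -1$ over $W$'' and ``$\DDD$ knows that $\neg\Gamma$ is true'' as ``$\DDD$ knows $\neg\Gamma = 1$ over $W$''. The one structural fact I would establish first is that, since $\Gamma(U) = \B = \{-1,1\}$, the function-Boolean-algebra definition of negation gives $\neg\Gamma(u) = -\Gamma(u)$ for every $u$, so that $(\neg\Gamma)(U) = \{-1,1\}$ as well and both invocations of Def.~\ref{def:knows} use maps $\xi,\xi'$ with the same domain $\{-1,1\}$ and the same codomain $\overline{X}$. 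From $\neg\Gamma = -\Gamma$ and the definition of a probe I would then record the two relabeling identities $\delta_1(\neg\Gamma(u)) = \delta_{-1}(\Gamma(u))$ and $\delta_{-1}(\neg\Gamma(u)) = \delta_1(\Gamma(u))$, which are the engine of the whole argument.

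Next I would exhibit an explicit correspondence between the witnessing assignments for the two statements. Given a $\xi$ witnessing ``$\DDD$ knows $\Gamma = -1$ over $W$'', I would define $\xi'(\gamma') := \xi(-\gamma')$ (so $\xi'(1) = \xi(-1)$ and $\xi'(-1) = \xi(1)$) and check that $\xi'$ witnesses ``$\DDD$ knows $\neg\Gamma = 1$ over $W$''. Condition (i) transfers once the probe identities are applied: for $u \in \xi'(1) = \xi(-1)$, Def.~\ref{def:knows}(i) for $\Gamma$ gives $\delta_{-1}(\Gamma(u)) = Y(u)$, which equals $\delta_1(\neg\Gamma(u))$, and the case $\gamma' = -1$ is symmetric. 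Conditions (ii) and (iii) then match literally under the swap, since $\xi'(1)\cap W = \xi(-1)\cap W \subseteq Y^{-1}(1)$ is exactly Def.~\ref{def:knows}(ii) for $\Gamma$, while $\xi'(-1)\cap W = \xi(1)\cap W \subseteq Y^{-1}(-1)$ is exactly Def.~\ref{def:knows}(iii).

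For the converse I would run the identical construction in reverse, defining $\xi(\gamma') := \xi'(-\gamma')$ from a witness $\xi'$ for ``$\DDD$ knows $\neg\Gamma = 1$ over $W$''; because the assignment $\xi \mapsto \xi'$ is an involution and the probe identities are symmetric under $\Gamma \leftrightarrow \neg\Gamma$, the same three checks go through unchanged. I expect the only real subtlety --- and the step most easily gotten wrong --- to be the bookkeeping in conditions (ii) and (iii): these are asymmetric (one insists on a $u$ with $Y = 1$, the other on a $u$ with $Y = -1$), so the correspondence must send the ``$\gamma = -1$'' cell of $\xi$ to the ``$\gamma = 1$'' cell of $\xi'$ and vice versa. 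It is precisely this crossing, forced by the $-\gamma'$ in the definition of $\xi'$, that makes the two nonemptiness-and-containment conditions line up; the naive identification $\xi' = \xi$ would fail.
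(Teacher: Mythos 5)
Your proposal is correct and follows essentially the same route as the paper's proof: both define the swapped witness $\xi'(\gamma) := \xi(-\gamma)$ and verify the three conditions of Def.~\ref{def:knows}, with conditions (ii) and (iii) transferring literally under the swap and condition (i) transferring via the probe relabeling $\delta_{\gamma'}(\neg\Gamma(u)) = \delta_{-\gamma'}(\Gamma(u))$. Your write-up is in fact slightly more explicit than the paper's (which asserts condition (i) without displaying the probe identities), but the construction and structure are identical.
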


\begin{proof}
I prove the forward direction; the inverse follows the same way.
Let $\xi_{\Gamma}$ be the operator establishing that 
 ``$\DDD$ knows that ${\Gamma}$ is false".
So $\xi_{\Gamma}(-1) \cap W \subseteq Y^{-1}(1)$. Define $\xi_{\neg {\Gamma}}(\gamma) = \xi_{{\Gamma}}(-\gamma)$
for all $\gamma \in \B$ (i.e., for all $\gamma$ in the codomains of both ${\Gamma}$ and $\neg {\Gamma}$).
It follows that $\xi_{\neg {\Gamma}}(1) \cap W \subseteq Y^{-1}(1)$. 
This establishes that if the condition Def.~\ref{def:knows}(ii)
hold for ``$\DDD$ knows that ${\Gamma}$ is false (over $W$)", then it must also hold
for ``$\DDD$ knows that $\neg {\Gamma}$ is true". 
Property (iii) is established using an analogous argument.
Finally if property (i) in Def.~\ref{def:knows}
holds  for ``$\DDD$ knows that $ {\Gamma}$ is false" using $\xi_{ \Gamma}$, it must also hold
for ``$\DDD$ knows that $\neg {\Gamma}$ is true" using $\xi_{\neg {\Gamma}}$. This establishes the
(forward direction of the) claim in full.


\end{proof}

Note that applying Prop.~\ref{prop:first_impl}(i) with a new function $\Gamma' := -\Gamma$ establishes that a device
$\DDD$ knows that a function $\Gamma'$ is true iff $\DDD$ knows that $\neg \Gamma'$ is false.

In contrast to these results, if the function under consideration is refined by $W$, then the agent cannot
know contradictory things concerning the value of that function. 
We start our discussion of this kind of situation with an immediate corollary of Lemma~\ref{lemma:elem}(ii), which 
intuitively says that a device cannot physically know something if that thing is false:
\begin{corollary}
Suppose that $W$ refines $\Gamma$. Then 
if $\DDD$ knows that $\Gamma$ is true (over $W$), $\Gamma$ is true. 
\label{corr:trivial}
\end{corollary}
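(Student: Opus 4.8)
The plan is to unwind the notational conventions of this subsection and then invoke Lemma~\ref{lemma:elem}(ii) directly; no genuine computation is required, which is why the statement is flagged as an immediate corollary.

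First I would translate the hypothesis ``$\DDD$ knows that $\Gamma$ is true (over $W$)'' into the language of Def.~\ref{def:knows}. Recall that at the start of this subsection we identify each binary-valued function with the value $1$ (``true''), so that ``$\DDD$ knows that $\Gamma$ is true over $W$'' is exactly the statement that $\DDD$ physically knows $\Gamma = 1$ over $W$ in the sense of Def.~\ref{def:knows}. By that definition, this hypothesis supplies a witnessing function $\xi : \Gamma(U) \rightarrow \overline{X}$ satisfying conditions (i)--(iii), so that, in the terminology fixed just after Def.~\ref{def:knows}, ``by using $\xi$, $(X, Y)$ knows that $\Gamma = 1$ over $W$''.

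Next I would apply Lemma~\ref{lemma:elem}(ii) with $\gamma = 1$ and this $\xi$. The corollary's hypothesis is precisely that $W$ refines $\Gamma$, which is the hypothesis of Lemma~\ref{lemma:elem}(ii). The lemma then yields $\Gamma(W) = 1$, i.e., $\Gamma(u) = 1$ for all $u \in W$. Unwinding the shorthand once more, $\Gamma(u) = 1$ throughout $W$ is exactly the assertion that ``$\Gamma$ is true over $W$'', which under the running abbreviation is the desired conclusion ``$\Gamma$ is true''. This completes the argument.

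The only point requiring care --- and the closest thing to an obstacle --- is the bookkeeping of the two layered conventions: the identification of a binary-valued function with its ``true'' value $1$, and the convention that ``is true'' abbreviates ``is true over $W$''. Once these are made explicit, the result is a direct instantiation of Lemma~\ref{lemma:elem}(ii) at $\gamma = 1$; we need not re-derive anything about conditions (ii) or (iii) of Def.~\ref{def:knows}, since those are already packaged into the hypothesis that $\DDD$ knows $\Gamma = 1$ over $W$ and are used only internally to the proof of the lemma we are invoking.
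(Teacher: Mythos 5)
Your proof is correct and follows exactly the route the paper intends: the paper presents this statement as an immediate consequence of Lemma~\ref{lemma:elem}(ii), and your argument is precisely that instantiation at $\gamma = 1$, with the notational conventions (``true'' $=1$, ``is true'' $=$ ``is true over $W$'') carefully unwound.
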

(Similarly, if $W$ refines $\Gamma$ and $\DDD$ knows that $\Gamma$ is false, $\Gamma$ is false.) 
Note that Coroll.~\ref{corr:trivial} tells us that if $W$ refines ${\Gamma}$, then 
it is possible that $\DDD$ knows that ${\Gamma}$ is true, or that
$\DDD$ knows that ${\Gamma}$ is false --- but not both. So if $W$ refines $\Gamma$, 
we have the usual property that $\DDD$ cannot know two contradictory things.

Since binary-valued
functions obey the rules of propositional logic, Coroll.~\ref{corr:trivial} means that if $W$ refines $\Gamma$ as well as $\Gamma'$,
and $\DDD$ both knows that $\Gamma$ is true and that $\Gamma'$ is true,
it follows that $\Gamma \wedge \Gamma'$ is true. 
This immediately establishes many
properties of physical knowledge --- in particular if $\Gamma'$ involves the logical
$\Rightarrow$ operator defined in Eq.~\eqref{eq:6} --- including the following:

\begin{corollary}
Let $\Gamma_1$, ${\Gamma_2}$ and ${\Gamma_3}$ be any binary-valued functions over $U$, 
and $\DDD = (X, Y)$ any device over $U$.
\begin{enumerate}[i)]

\item Say that  $W$ refines $\Gamma_1$.
Then if $\DDD$ knows that $\Gamma_1$ is true, and $\Gamma_1 \Rightarrow {\Gamma_2}$ is
true, it follows that ${\Gamma_2}$ is true.

\item Say that  $W$ refines $\Gamma_1$ as well as $\Gamma_1 \Rightarrow {\Gamma_2}$.
Then if $\DDD$ knows that $\Gamma_1$ is true, and knows that $\Gamma_1 \Rightarrow {\Gamma_2}$ is
true, it follows that ${\Gamma_2}$ is true.

\item Say that  $W$ refines  $\Gamma_1 \Rightarrow {\Gamma_2}$ as well as
${\Gamma_2} \Rightarrow {\Gamma_3}$. Then if $\DDD$ both knows that 
$\Gamma_1 \Rightarrow {\Gamma_2}$ is true and knows that
${\Gamma_2} \Rightarrow {\Gamma_3}$ is true, it follows that $\Gamma_1 \Rightarrow {\Gamma_3}$ is true.


\item Say that we have a set of binary-valued functions  $\{{\Gamma}_i : i = 1, \ldots N\}$
and that  $W$ refines $\Gamma_1$ as well as $\Gamma_i \Rightarrow \Gamma_{i+1}$ for all $i \in \{1, \ldots, N-1\}$. Then if 
$\DDD$ knows that $\Gamma_1$ is true and knows that $\Gamma_i \Rightarrow \Gamma_{i+1}$ is true 
for all $i \in \{1, \ldots, N-1\}$, it follows that $\Gamma_i$ is true for all $i \in \{1, \ldots, N\}$.

\end{enumerate}
\label{prop:impl} 
\end{corollary}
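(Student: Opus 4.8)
The plan is to reduce all four parts to a single pattern: use Coroll.~\ref{corr:trivial} to convert each piece of physical knowledge of a $W$-refined function into the genuine truth of that function over $W$, and then invoke a standard tautology of propositional logic (modus ponens, conjunction introduction, or hypothetical syllogism) on functions already established to be true over $W$. Since the binary-valued functions here obey the ordinary axioms of Boolean algebra, once every relevant hypothesis has been promoted from ``known'' to ``true over $W$'', each conclusion follows purely formally. The fact I will lean on repeatedly, noted just before the statement, is that if $W$ refines both $\Gamma$ and $\Gamma'$ and $\DDD$ knows each is true, then $\Gamma \wedge \Gamma'$ is true over $W$.

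For part (i), I would first apply Coroll.~\ref{corr:trivial} to $\Gamma_1$ (legitimate because $W$ refines $\Gamma_1$ and $\DDD$ knows $\Gamma_1$ is true), obtaining that $\Gamma_1$ is true over $W$. Combining this with the hypothesis that $\Gamma_1 \Rightarrow \Gamma_2$ is true over $W$, and unpacking the definition of $\Rightarrow$ in Eq.~\eqref{eq:6} as $\neg \Gamma_1 \vee \Gamma_2$, modus ponens gives $\Gamma_2$ true over $W$. Part (ii) is identical except that the truth of $\Gamma_1 \Rightarrow \Gamma_2$ is now itself obtained by applying Coroll.~\ref{corr:trivial} (since $W$ refines $\Gamma_1 \Rightarrow \Gamma_2$ and $\DDD$ knows it); equivalently, one invokes the conjunction remark to conclude $\Gamma_1 \wedge (\Gamma_1 \Rightarrow \Gamma_2)$ is true over $W$ and then uses the tautology $\Gamma_1 \wedge (\Gamma_1 \Rightarrow \Gamma_2) \Rightarrow \Gamma_2$. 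For part (iii), I would apply Coroll.~\ref{corr:trivial} to each of $\Gamma_1 \Rightarrow \Gamma_2$ and $\Gamma_2 \Rightarrow \Gamma_3$, obtaining both as true over $W$, and then close with the hypothetical-syllogism tautology $(\Gamma_1 \Rightarrow \Gamma_2) \wedge (\Gamma_2 \Rightarrow \Gamma_3) \Rightarrow (\Gamma_1 \Rightarrow \Gamma_3)$.

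Part (iv) I would prove by induction on $i$. The base case is $\Gamma_1$, whose truth over $W$ follows from Coroll.~\ref{corr:trivial} exactly as in (i). For the inductive step, assuming $\Gamma_i$ is true over $W$, I apply Coroll.~\ref{corr:trivial} to the hypothesis that $\DDD$ knows $\Gamma_i \Rightarrow \Gamma_{i+1}$ (valid because $W$ refines this implication) to get $\Gamma_i \Rightarrow \Gamma_{i+1}$ true over $W$, and then modus ponens yields $\Gamma_{i+1}$ true over $W$. Iterating up to $i = N$ gives the claim.

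The main thing to be careful about --- and essentially the only content beyond routine Boolean manipulation --- is the bookkeeping of which refinement hypotheses are actually used where. The earlier discussion stresses that physical knowledge of a function does \emph{not} by itself force the function to be true: that implication requires $W$ to refine the function (Coroll.~\ref{corr:trivial}, which rests on Lemma~\ref{lemma:elem}(ii)). So at each use of Coroll.~\ref{corr:trivial} I must confirm the target function is among those the hypotheses declare $W$ to refine. Conversely, I must \emph{not} smuggle in a refinement assumption where none is granted: the modus-ponens and syllogism steps act on functions already known to be true over $W$ and need no refinement of their own, and in particular the induction in (iv) derives the truth of the intermediate $\Gamma_i$ without ever assuming $W$ refines them --- only $\Gamma_1$ and the implications $\Gamma_i \Rightarrow \Gamma_{i+1}$ are assumed refined. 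Keeping these two directions straight is the sole subtlety.
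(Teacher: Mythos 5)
Your proposal is correct and takes essentially the same approach as the paper: the paper establishes Coroll.~\ref{prop:impl} immediately from Coroll.~\ref{corr:trivial} (knowledge of a $W$-refined function yields its truth over $W$) combined with the fact that binary-valued functions obey the rules of propositional logic, which is exactly your pattern of promoting each refined, known hypothesis to truth over $W$ and then applying modus ponens, conjunction, or hypothetical syllogism pointwise. Your care about which functions must be refined by $W$ (and which need not be) also matches the paper's own remarks immediately following the corollary.
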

\noindent (In interpreting these results, the reader should remember to insert ``over $W$'' after every statement
about whether a function is true or false.)

In general, the properties described in  Coroll.~\ref{prop:impl} do not hold without the 
conditions that certain functions are refined by $W$. So for the most part, they need not
hold for the case where an agent knows the value of a function because they control its value.
Note though that Coroll.~\ref{prop:impl}(ii) does not require that $W$ refine ${\Gamma_2}$. 
Similarly Coroll.~\ref{prop:impl}(iii) does not require that $W$ refine either $\Gamma_1$,
$\Gamma_2$,  $\Gamma_3$, or  ${\Gamma_1} \Rightarrow {\Gamma_3}$.

%
%

We can weaken the last two claims in Coroll.~\ref{prop:impl}:
\begin{corollary}
Let ${\Gamma_1}$ and ${\Gamma_2}$ be any binary-valued functions over $U$, $\DDD$ any device over $U$,
and $W$ any (implicit) subset of $U$.
\begin{enumerate}[i)]
\item Say that $W$ refines ${\Gamma_1}$ and refines ${\Gamma_1} \Rightarrow {\Gamma_2}$.
Then if either $\DDD$ knows that ${\Gamma_1}$ is true and 
${\Gamma_1} \Rightarrow {\Gamma_2}$ is true, or  ${\Gamma_1}$ is true and 
$\DDD$ knows that
${\Gamma_1} \Rightarrow {\Gamma_2}$ is true, 
it follows that ${\Gamma_2}$ is true.
\item Say that  $W$ refines ${\Gamma_1} \Rightarrow {\Gamma_2}$ and refines
${\Gamma_2} \Rightarrow {\Gamma_3}$. Then if either $\DDD$ both knows that 
${\Gamma_1} \Rightarrow {\Gamma_2}$ is true and
${\Gamma_2} \Rightarrow {\Gamma_3}$ is true, or
${\Gamma_1} \Rightarrow {\Gamma_2}$ is true and  $\DDD$ knows that 
${\Gamma_1} \Rightarrow {\Gamma_3}$ is true
it follows that ${\Gamma_2} \Rightarrow {\Gamma_3}$ is true.
\end{enumerate}
\end{corollary}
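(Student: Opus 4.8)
The plan is to treat this corollary as a mild strengthening of Coroll.~\ref{prop:impl}(ii)--(iii): in each disjunct exactly one of the two premises is \emph{known} by $\DDD$, while the other is merely assumed true over $W$. The engine of the proof is Coroll.~\ref{corr:trivial}, which says that whenever $W$ refines a binary-valued function $\Phi$ and $\DDD$ knows that $\Phi$ is true over $W$, then $\Phi$ is in fact true over $W$. The entire role of the two refinement hypotheses in each part is to license exactly this upgrade for whichever premise is the ``known'' one, in whichever of the two disjuncts happens to hold; the merely-true premise needs no refinement, since its truth over $W$ is given directly.

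For part (i) I would split on the disjunction. The first disjunct ($\DDD$ knows $\Gamma_1$ is true, and $\Gamma_1 \Rightarrow \Gamma_2$ is true over $W$) is essentially Coroll.~\ref{prop:impl}(i): apply Coroll.~\ref{corr:trivial} with $\Phi = \Gamma_1$, using that $W$ refines $\Gamma_1$, to get $\Gamma_1$ true over $W$. The second disjunct ($\Gamma_1$ true over $W$, and $\DDD$ knows $\Gamma_1 \Rightarrow \Gamma_2$) is the new content: here I apply Coroll.~\ref{corr:trivial} instead with $\Phi = \Gamma_1 \Rightarrow \Gamma_2$, using that $W$ refines $\Gamma_1 \Rightarrow \Gamma_2$. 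In either case I reach the situation where both $\Gamma_1$ and $\Gamma_1 \Rightarrow \Gamma_2$ hold over $W$, and then $\Gamma_2$ true over $W$ follows by evaluating the tautology $\Gamma_1 \wedge (\Gamma_1 \Rightarrow \Gamma_2) \Rightarrow \Gamma_2$ pointwise at each $u \in W$ --- legitimate because, as established earlier in this subsection, the function Boolean algebra obeys the usual axioms of propositional logic.

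For part (ii) the argument is structurally identical, with the two premises now being the implications $\Gamma_1 \Rightarrow \Gamma_2$ and $\Gamma_2 \Rightarrow \Gamma_3$. Whichever of these $\DDD$ knows, I upgrade it to ``true over $W$'' via Coroll.~\ref{corr:trivial} (using the matching refinement hypothesis), leaving both implications true over $W$; the desired implication then follows from the hypothetical-syllogism tautology $(\Gamma_1 \Rightarrow \Gamma_2) \wedge (\Gamma_2 \Rightarrow \Gamma_3) \Rightarrow (\Gamma_1 \Rightarrow \Gamma_3)$ applied pointwise over $W$, exactly as in Coroll.~\ref{prop:impl}(iii).

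There is no deep difficulty here; the care required is bookkeeping. The one point I would be vigilant about is to apply Coroll.~\ref{corr:trivial} only to the \emph{known} premise in each disjunct --- the one $W$ is assumed to refine --- and never to the merely-true premise, for which no refinement is assumed and no upgrade is needed. It is worth flagging explicitly why the refinement hypotheses cannot be dropped: without $W$ refining the known function, $\DDD$ could ``know'' that function is true by \emph{controlling} its value (so that it is not constant over $W$), in which case knowledge would not entail truth over $W$ and the pointwise propositional step would fail. Thus the refinement conditions do the same essential work as in Coroll.~\ref{corr:trivial}, and matching each refinement condition to the correct disjunct is effectively the whole content of the proof.
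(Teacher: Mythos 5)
Your proof is correct and follows essentially the same route as the paper, which gives no separate argument for this corollary: it is presented, exactly as you present it, as an immediate consequence of Coroll.~\ref{corr:trivial} (if $W$ refines a binary-valued function, then the device's knowledge of that function entails its truth over $W$) combined with pointwise propositional reasoning in the function Boolean algebra, the only care being to apply the upgrade to whichever premise is the \emph{known} one in each disjunct --- precisely the bookkeeping you flag. One point worth making explicit: you have tacitly repaired what must be typos in part (ii). As printed, the second disjunct hypothesizes that $\DDD$ knows $\Gamma_1 \Rightarrow \Gamma_3$ and the conclusion reads $\Gamma_2 \Rightarrow \Gamma_3$; that literal claim fails, since $W$ is not assumed to refine $\Gamma_1 \Rightarrow \Gamma_3$ (so Coroll.~\ref{corr:trivial} cannot be applied to it), and in any case taking $\Gamma_1 = -1$, $\Gamma_2 = 1$, $\Gamma_3 = -1$ throughout $W$ makes both $\Gamma_1 \Rightarrow \Gamma_2$ and $\Gamma_1 \Rightarrow \Gamma_3$ true over $W$ while $\Gamma_2 \Rightarrow \Gamma_3$ is false there. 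Your reading --- the known premise is $\Gamma_2 \Rightarrow \Gamma_3$ and the conclusion is $\Gamma_1 \Rightarrow \Gamma_3$ --- is the unique one under which the corollary is, as the surrounding text announces, a weakening of Coroll.~\ref{prop:impl}(iii), and your proof of that statement is sound.
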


\subsection{Impossibility results concerning physical knowledge}

There are major restrictions on physical knowledge. The first such restriction
follows from the first demon theorem.

\begin{corollary}
For any device $\DDD$, there exists a function $\Gamma$ over $U$ 
such that for no $W \subseteq U, \gamma \in \Gamma(U)$ does $\DDD$ know
$\Gamma = \gamma$ over $W$.
\label{coroll:5}
\end{corollary}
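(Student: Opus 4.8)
The plan is to reduce this impossibility result to the first demon theorem, Prop.~\ref{prop:prop1}, by exploiting the fact (noted immediately after Def.~\ref{def:knows}) that physical knowledge is strictly stronger than weak inference. The key observation is that condition Def.~\ref{def:knows}(i) is exactly the clause of weak inference: it supplies, for every $\gamma' \in \Gamma(U)$, a setup region $\xi(\gamma') \in \overline{X}$ on which $Y$ agrees with $\delta_{\gamma'}(\Gamma)$. So any instance of physical knowledge automatically witnesses weak inference of the \emph{same} function $\Gamma$, regardless of the auxiliary set $W$ or the particular value $\gamma$.

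Concretely, I would first invoke Prop.~\ref{prop:prop1} to fix a function $\Gamma$ over $U$ that $\DDD$ does not weakly infer; one may take $\Gamma = Y$, the conclusion function of $\DDD$, exactly as in the proof of that proposition. I would then argue by contraposition. Suppose, for contradiction, that there were some $W \subseteq U$ and some $\gamma \in \Gamma(U)$ such that $\DDD$ knows $\Gamma = \gamma$ over $W$, say by using the function $\xi : \Gamma(U) \rightarrow \overline{X}$. Then Def.~\ref{def:knows}(i) guarantees that for every $\gamma' \in \Gamma(U)$ and every $u \in \xi(\gamma')$ we have $\delta_{\gamma'}(\Gamma(u)) = Y(u)$. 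Picking, for each $\gamma'$, any setup value $x$ whose fiber $X^{-1}(x)$ equals the partition element $\xi(\gamma')$, this is precisely the requirement of Def.~\ref{def:weak_inf} that $X(u) = x \Rightarrow Y(u) = \delta_{\gamma'}(\Gamma(u))$. Hence $\DDD > \Gamma$, contradicting the choice of $\Gamma$.

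Therefore no such $W$ and $\gamma$ can exist, and the corollary follows. I do not expect any genuine obstacle here: the entire content of the argument is the already-recorded implication ``physical knowledge $\Rightarrow$ weak inference,'' which lets the non-inferable function produced by Prop.~\ref{prop:prop1} serve directly as the non-knowable function. The only point requiring a word of care is that the quantifiers in the conclusion range over \emph{all} choices of $W$ and $\gamma$ simultaneously; but since the derived weak inference of $\Gamma$ holds for \emph{any} putative knowing instance and is already impossible, a single fixed $\Gamma$ suffices to defeat every such choice at once.
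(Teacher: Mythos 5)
Your proposal is correct and follows exactly the paper's intended route: the corollary is stated as a consequence of the first demon theorem (Prop.~\ref{prop:prop1}), combined with the observation, recorded immediately after Def.~\ref{def:knows}, that condition (i) of physical knowledge already implies weak inference of $\Gamma$ (since each $\xi(\gamma')$ is a fiber $X^{-1}(x)$). Your contrapositive argument, including the point that a single non-inferable $\Gamma$ defeats all choices of $W$ and $\gamma$ at once, is precisely the paper's (implicit) proof.
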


The second major restriction follows from the second  demon theorem.
\begin{corollary}
Let $(X_1, Y_1)$ and $(X_{-1}, Y_{-1})$ be two distinguishable devices. Then for at least one
of the two devices $i \in \{-1, 1\}$, there is no pair $(W \subseteq U, y_{-i} \in Y_{-i}(U))$ such that
$(X_i, Y_i)$ knows that $Y_{-i} = y_{-i}$ over $W$.
\label{coroll:6}
\end{corollary}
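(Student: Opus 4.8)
The plan is to reduce the claim to the second demon theorem (Prop.~\ref{prop:2}), exploiting the fact that physical knowledge entails weak inference.

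I would argue by contradiction. Suppose the corollary fails, so that for \emph{both} $i \in \{-1, 1\}$ there is a pair $(W_i \subseteq U,\, y_{-i} \in Y_{-i}(U))$ such that $(X_i, Y_i)$ knows that $Y_{-i} = y_{-i}$ over $W_i$. Using the remark immediately following Def.~\ref{def:knows} --- that physical knowledge of $\Gamma = \gamma$ over $W$ implies weak inference of $\Gamma$ --- I would conclude that $(X_i, Y_i)$ weakly infers the function $Y_{-i}$. Since $Y_{-i}$ is precisely the conclusion function of the device $(X_{-i}, Y_{-i})$, this is exactly the statement that $(X_i, Y_i)$ weakly infers the device $(X_{-i}, Y_{-i})$, in the sense defined just before Prop.~\ref{prop:2}. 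Applying this for both $i = 1$ and $i = -1$ shows that each of the two devices weakly infers the other.

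Because the two devices are distinguishable by hypothesis, this mutual weak inference contradicts Prop.~\ref{prop:2}, which asserts that no two distinguishable devices can weakly infer each other. Hence the supposition is untenable: for at least one $i \in \{-1, 1\}$ no such pair $(W, y_{-i})$ can exist, which is the assertion of the corollary.

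The single step requiring care --- and the main (if minor) obstacle --- is the reduction from knowledge to weak inference. One must check that physical knowledge of a \emph{single} value $y_{-i}$ over a \emph{restricted} subset $W_i$ nonetheless yields weak inference of the \emph{entire} binary function $Y_{-i}$ over all of $U$. This is immediate from Def.~\ref{def:knows}(i), which imposes $\delta_{\gamma'}(\Gamma(u)) = Y(u)$ for every $\gamma' \in \Gamma(U)$ and every $u \in \xi(\gamma')$ with no restriction to $W_i$; the inference therefore holds globally on $U$, so the distinguishability obstruction of Prop.~\ref{prop:2} applies without modification.
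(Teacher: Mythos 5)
Your proof is correct and is precisely the argument the paper intends: the paper simply states that the corollary "follows from the second demon theorem," relying on the observation (made immediately after Def.~\ref{def:knows}) that physical knowledge implies weak inference, so mutual knowledge would give mutual weak inference of two distinguishable devices, contradicting Prop.~\ref{prop:2}. Your closing remark — that Def.~\ref{def:knows}(i) yields weak inference of the full conclusion function over all of $U$, not merely over $W$ — is exactly the point that makes the reduction valid.
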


Similarly, Coroll.~\ref{coroll:1} provides another restriction on
physical knowledge:
\begin{corollary}
Consider a pair of devices $\DDD = (X, Y)$ and $\DDD ' = (X', Y')$ that are both distinguishable
from one another and whose conclusion functions are inequivalent. Say that there
is a $W \subseteq  U$ that refines $Y$
such that $\DDD'$ knows that $Y = Y(W)$ over $W$. Then there are at least three 
inequivalent surjective binary functions $\Gamma$ such that  there is no $W'$ with
the following two properties: $W'$ refines $\Gamma$, 
and $\DDD$ know that $\Gamma = \Gamma(W')$ when $W'$.
\end{corollary}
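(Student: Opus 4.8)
The plan is to reduce the claim to Coroll.~\ref{coroll:1} by exploiting the one-way bridge between physical knowledge and weak inference recorded just after Def.~\ref{def:knows}: if a device physically knows $\Gamma = \gamma$ over some $W$, then it weakly infers $\Gamma$. I would use this bridge in its forward form on the hypothesis about $\DDD'$, and in its contrapositive form on the conclusion about $\DDD$.

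First I would translate the hypothesis. Since $W$ refines $Y$, the value $Y(W)$ is a single element of $\B$, so ``$\DDD'$ knows that $Y = Y(W)$ over $W$'' is exactly the statement that $\DDD'$ physically knows that the binary-valued function $Y$ takes that value over $W$. By Def.~\ref{def:knows}(i) this implies $\DDD' > Y$; and by the convention that one device weakly infers another iff it weakly infers the latter's conclusion function, this is precisely $\DDD' > \DDD$. Hence the three standing hypotheses of Coroll.~\ref{coroll:1} are all met: $\DDD$ and $\DDD'$ are distinguishable, their conclusion functions are inequivalent, and $\DDD'$ weakly infers $\DDD$.

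Applying Coroll.~\ref{coroll:1} then yields three inequivalent surjective binary functions $\Gamma$ that $\DDD$ does not weakly infer. For each such $\Gamma$ I would invoke the contrapositive of the same bridge: since $\DDD \not > \Gamma$, there is no pair $(W', \gamma)$ for which $\DDD$ physically knows $\Gamma = \gamma$ over $W'$, because any such knowledge would force $\DDD > \Gamma$. A fortiori there is no $W'$ that both refines $\Gamma$ and satisfies ``$\DDD$ knows $\Gamma = \Gamma(W')$ over $W'$,'' which is the desired conclusion, holding for all three of the functions. There is no genuine obstacle here; the mathematical content is entirely carried by Coroll.~\ref{coroll:1}, and the only care required is matching the directions of the bridge — using ``knowledge $\Rightarrow$ inference'' forward on the assumption and its contrapositive on the conclusion. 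It is worth flagging that the conclusion obtained is in fact stronger than the one stated, since it forbids knowledge of $\Gamma$ over \emph{any} $W'$ rather than only those refining $\Gamma$; thus the refinement qualifier in the statement costs nothing.
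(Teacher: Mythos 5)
Your proof is correct and follows essentially the same route the paper intends: translate the knowledge hypothesis into weak inference via Def.~\ref{def:knows}(i) (so that $\DDD' > \DDD$), apply Coroll.~\ref{coroll:1}, and then use the contrapositive of ``knowledge $\Rightarrow$ weak inference'' to rule out any $W'$. Your closing observation that the refinement qualifier on $W'$ is superfluous also matches the paper's own remark that the three functions are ones $\DDD$ never knows ``no matter what the subset of $U$ we are in.''
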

In other words, if $\DDD '$ knows the value of $\DDD$'s conclusion function over $W$,
then there are at least three separate functions that $\DDD$ never knows, no
matter what the subset of $U$ we are in.

\subsection{Physical knowledge and the first three rules of \emph{S5}}

\emph{S5} is a set of five rules obeyed by many epistemic logics, including Kripke structures.
The ``knowledge axiom'' of \emph{S5} says that if an agent knows a Boolean proposition $\phi$, then $\phi$ must be true.
We can use the map from propositions to binary functions (discussed just before Eq.~\eqref{eq:AND})
to formulate a physical knowledge version of this axiom. 
Coroll.~\ref{corr:trivial} confirms that the physical knowledge analog of the knowledge axiom holds (assuming 
we only consider sets $W$ that refine $\Gamma_\phi$).

In addition to the knowledge axiom,
\emph{S5} also includes the ``knowledge generalization rule'', which says that if proposition $\phi$ is true
in all possible states of the world (i.e., if $\phi$ is necessarily true rather than contingently true), then the agent knows $\phi$. 
An analogous rule in terms of physical knowledge might be that if $W$ refines ${\Gamma_1}$, and ${\Gamma_1}$ is true over $W$,
then the agent physically knows ${\Gamma_1}$ is true over $W$. However this rule need not hold; an agent $(X, Y)$ may not
be able to weakly infer ${\Gamma_1}$, whether or not $\Gamma_1$ is true over $W$.
(And even if they can infer ${\Gamma_1}$, it may be that $Y^{-1}(1) \cap W = \varnothing$.)

The ``distribution axiom" of \emph{S5} says that if an agent both knows proposition $\phi_1$ and knows $\phi_1 \Rightarrow \phi_2$,
then $\phi_2$ is true \emph{and they know this}. 
In contrast, Coroll.~\ref{prop:impl}(ii) only establishes that if an agent both physically knows that ${\Gamma_1}$
is true over $W$ and knows that ${\Gamma_1} \Rightarrow {\Gamma_2}$ is true over $W$, then ${\Gamma_2}$
is also true over $W$. However it is easy to construct examples where the
conditions for Coroll.~\ref{prop:impl}(ii) hold but the agent does not know
$\Gamma_2$ is true. Ultimately, the reason
for this difference between physical knowledge and Kripke structures
is due to the requirement of physical knowledge of $\Gamma$ that the
device weakly infer $\Gamma$ --- a requirement that involves considering counterfactual scenarios,
something not done in conventional epistemic logics.

For what are ultimately the same reasons, the conditions
for Coroll.~\ref{prop:impl}(i) may hold even if the agent does not physically know that $\Gamma_2$ is true. 
This is illustrated in the following example:

\begin{example}
For purposes of this example, fix some particular location and time.
Suppose that the (binary-valued) function $\Gamma_1(u)$ is
defined by whether the temperature in $u$ is / isn't ten degrees celsius at that location and time.
Have $\Gamma_2(u)$ be whether the temperature in 
$u$ is / isn't above freezing. Presume $U$ is large enough that there are $u \in U$ that satisfy each of the three possible values
of $(\Gamma_1(u), \Gamma_2(u))$.
Finally, have $W$ be the set of all $u$ at which the temperature
is ten degrees. Note that $(\Gamma_1 \Rightarrow \Gamma_2)(u)$ is always true. (This means it is
a ``valid'' statement, in the language of epistemic logic.) So in particular it is true for all $u \in W$.

To ground thinking suppose that the ID is a thermometer that outputs a 1 or -1,
depending on the temperature. $x$ is the value of the temperature that the thermometer
is checking. By Def.~\ref{def:knows}(i), if the agent physically knows $\Gamma_1 = 1$ over $W$, then there
is some setup function $\xi$ they can use to configure their thermometer to correctly give a 1 if 
the temperature is ten degrees, and to correctly give a -1 otherwise. In other words, such
physical knowledge requires that when answering the question,
``Is the temperature ten degrees?'' (i.e., does $\Gamma_1(u) = 1$), the agent will
set $X$ to be in the state $\xi(1)$, and they will be guaranteed that the associated conclusion $Y(u)$
will equal $\delta_1(u)$ for any $u \in \xi(1)$ \underline{whether or not $u \in W$}.

In general, depending on whether that $\xi$ obeys
$\xi(1) \subseteq W$, this phenomenon may mean that the ID gives the correct answer for some $u$ in which the temperature is
not ten degrees. This is key: the ID
is set up with the same $x$ value regardless of whether $u \in W$. \underline{After} the thermometer is set up this way,
\underline{then} the agent learns whether the temperature equals ten degrees. If (after having been set up 
with the $x$ value corresponding to ten degrees) the ID tells the agent that the temperature
is indeed ten degrees (since $u \in W$), {at that point the} agent has physical knowledge that the temperature
is ten degrees. But not before.

In particular, consider the situation where $\xi(1)$ is big enough to contain both a $u$ where the temperature is five degrees, and 
one where it is negative five degrees (in addition to containing one where it is ten degrees). 
Since we require that $Y(u) = \delta_1(\Gamma_1(u))$ for all $u \in \xi(1)$, this means that $Y(u) = -1$ 
for both of those $u$'s. Note though that $\Gamma_2(u)$ has different values for those two temperatures. 
This means that the agent cannot use this same $\xi$ that allows them to 
weakly infer $\Gamma_1$ to also weakly infer $\Gamma_2$. (This is how the intuitive
notion of regular implication would work too; if all I know is that the temperature is not 10 degrees, then I don't know whether it is -5 or 5.)
Moreover, in general, there may not be any alternative to this $\xi$ that the 
agent can use with that thermometer to weakly infer $\Gamma_2$, i.e., weakly 
infer whether the temperature is above freezing or not. In this
situation, the agent cannot weakly infer $\Gamma_2$. (Recall again the key point that in 
the definition of physical knowledge we require that weak inference holds for all $u \in U$,
not just the $u \in W$.)

So in this situation, the agent does not physically know that $\Gamma_2 = 1$ for $u \in W$.
Loosely speaking, there are thermometers that can be used to \underline{always} tell us
correctly whether the temperature is (not) ten degrees  -- both when it is and when
it isn't ten degrees. However some such thermometers cannot be used to \underline{always} tell us
correctly whether the temperature is (not) above freezing (both when it is and when it isn't above freezing). 

Given such a thermometer, for the particular situation where the thermometer is set up to detect whether the temperature is
ten degrees, and in addition it answers 'yes', you and I can use our reasoning ability to
realize that the temperature must above above zero. But the ability to
use such reasoning to come to a conclusion is not the same thing as physical knowledge of that conclusion.

In the rest of this example I establish this argument in a fully formal manner,
making the simplifying assumption that $W$ refines $\Gamma_1$, as in Coroll.~\ref{prop:impl}(i).
First, note that by Lemma~\ref{lemma:elem}(ii), since the ID physically know that $\Gamma_1 = 1$
when $W$, $\Gamma_1(u)$ is true throughout $W$. This in turn means that $\Gamma_2$ is true throughout
$W$ (since $\Gamma_1 \Rightarrow \Gamma_2$). 

Let $\xi$ be the function that establishes that condition Def.~\ref{def:knows}(ii)
holds for $\Gamma_1$. We can use that same function to establish that condition Def.~\ref{def:knows}(ii)
holds for $\Gamma_1 \Rightarrow \Gamma_2$ and for $\Gamma_2$. 
Similar arguments hold for condition Def.~\ref{def:knows}(iii). So $\DDD$ 
meets conditions (ii) and (iii) for having physical knowledge of
both $\Gamma_1 \Rightarrow \Gamma_2$ and $\Gamma_2$. So to complete our analysis
of whether $\DDD$ has physical knowledge that $\Gamma_2 = 1$,
we must consider whether condition Def.~\ref{def:knows}(i) for $\Gamma_2$. We do this by considering two cases, one
in which $\DDD$ does physically know $\Gamma_2 = 1$ when $W$, and one where it does not:

\begin{enumerate}
\item First, assume $\xi(\gamma) \subseteq W$ both
for $\gamma = 1$ and $\gamma = -1$. Since Def.~\ref{def:knows}(i) holds for $\Gamma_1$ for that
$\xi$, and since $\Gamma_1(u) = \Gamma_2(u)$ throughout $W$, it is immediate that Def.~\ref{def:knows}(i)
also holds for that $\xi$. So $\DDD$ knows that $\Gamma_2$ is true over $W$, under our assumption. 

Next, plug 
the fact that $\Gamma_1(u) = \Gamma_2(u) = 1$ for all $u \in W$ into the definition of $\Gamma_1 \Rightarrow \Gamma_2$
to see that $\delta_1((\Gamma_1 \Rightarrow \Gamma_2)(u)) = 1$ for all $u \in \xi(1) \subset W$. So
$\delta_1((\Gamma_1 \Rightarrow \Gamma_2)(u)) = Y(u)$ for all $u \in \xi(1)$. This establishes that 
Def.~\ref{def:knows}(i) holds for the function $\Gamma_1 \Rightarrow \Gamma_2$ for the case of $\gamma' = 1$.
For the remaining case of $\gamma' = -1$, note that for all $u \in \xi(-1)$, again 
$\Gamma_1(u) = \Gamma_2(u) = 1$. So $\delta_{-1}((\Gamma_1 \Rightarrow \Gamma_2)(u)) = -1$. Since
$Y(u) = -1$ throughout $\xi(-1)$, this establishes that Def.~\ref{def:knows}(i) 
also holds for the function $\Gamma_1 \Rightarrow \Gamma_2$ for the case of $\gamma' = -1$. Accordingly,
under our assumption that the support of both $\xi(1)$ and of $\xi(-1)$ is restricted to $W$, $\DDD$ knows that
$\Gamma_1 \Rightarrow \Gamma_2$ over $W$.

\item In many situations however, even though $W$ refines $\Gamma_1$, it will \emph{not} be the case
that the associated function $\xi$ that establishes that Def.~\ref{def:knows}(i) and Def.~\ref{def:knows}(ii) both hold for $\Gamma_1$ 
always produce
sets that are confined to $W$. Very often either $\xi(1) \not \subseteq W$ and / or $\xi(-1) \not \subseteq W$.
An example of this is given just above, in the discussion involving thermometers. As mentioned in
that discussion, for such a $\xi$,
it may be that (for example) $\xi(1)$ contains points $u' \not \in W$ such that $\Gamma_1(u') = -1$
but $\Gamma_2(u') = 1$. Now for any such $u'$, it must be that $Y(u') = -1$ (since $\DDD$ weakly infers $\Gamma_1$).
On the other hand, for any $u \in W \cap \xi(1)$, $Y(u) = 1$. Since the value of $\Gamma_2(u)$ does not change
across $\xi(1)$, this means that $Y$ and $\Gamma_2$ cannot have the same value across all
of $\xi(1)$. 

This means that that function $\xi$ could not be used to establish that $\DDD$ weakly infers $\Gamma_2$ --- and
therefore all bets are off concerning whether $\DDD$ can physically know $\Gamma_2$ over $W$.
This reflects the fact that while under the conditions of Coroll.~\ref{prop:impl}(i) $\Gamma_1$ and $\Gamma_2$ must be identical 
for all $u \in W$, they will in general differ outside of $W$, and so a device that can say both whether $\Gamma_1$ is
true or not may not be able to tell us whether $\Gamma_2$ is true or not.
\end{enumerate}

\label{ex:no_log_omni}
\end{example}

Recall from the introduction that in many epistemic logics, 
if an agent knows a proposition $\phi$ is true (more generally, that a set of propositions are true), and $\phi \Rightarrow \phi'$, 
then not only is $\phi'$ true --- but the agent knows that it is. (In particular, as discussed
above, this is true of Kripke structures.)

This property of
``(full) logical omniscience'' is a major problem with these logics, 
since logical omniscience implies for example
that if someone knows the axioms of numbers theory, then they know all
the theorems of number theory that are implied by those axioms. 
However as illustrated in Ex.~\ref{ex:no_log_omni},
physical knowledge need not obey logical omniscience.{\footnote{It is known
that so long as both the distribution axiom and the knowledge generalization rule 
hold --- which is the case in all so-called ``normal modal logics'' --- then so does (full) logical omniscience. However
neither of those need hold with physical knowledge.}}

A closely related point is that the definition of physical knowledge
does not fully agree with the colloquial meaning of the term ``knowledge''. 
It should really be viewed more of a strengthened form of inference, capturing
more of the common structure of real-world prediction, observation, memory and control,
rather than an attempt to provide an accurate entry in an English language dictionary.

For example, it is possible that a device knows that 
$\Gamma_1 \wedge \Gamma_2 = 1$ over $W$, but does not
know that $\Gamma_1 = 1$ over $W$. In particular,
physical knowledge by a device that $\Gamma_1 \wedge \Gamma_2 = 1$ 
over $W$ provides no guarantees that the device weakly infers $\Gamma_1$;
loosely speaking, the ID may not be able to correctly answer questions concerning
the value of $\Gamma_1(u)$ for $u \not \in W$, even though they can answers
concerning the value of $\Gamma_1(u) \wedge \Gamma_2(u)$ for such
$u$.{\footnote{As an example, suppose that some of the
subsets $\bar{x}$ that are images of $\xi$ extend beyond $W$, and in particular include
points $u$ at which $\Gamma_1(u) \wedge \Gamma_2(u) = -1$ while $\Gamma_1(u) = 1$.
$Y(u)$ must equal -1 for such a $u$, since the device uses $\xi$ to weakly infer $\Gamma_1 \wedge \Gamma_2$.
But this means that $\xi$ does not weakly infer $\Gamma_1$, and so does not know $\Gamma_1 = 1$
over $W$.}$^,${\footnote{This should not be surprising; if logical omniscience held, then
knowledge that $\Gamma_1 \wedge \Gamma_2 = 1$ over $W$ \emph{would}
imply knowledge that $\Gamma_1 = 1$ over $W$.}}

Nonetheless, it is worth noting that the definition of physical knowledge could
be weakened to agree with this aspect of the colloquial meaning of ``knowledge''.
One way to do that would be drop the requirement that 
the ID infer $\Gamma$ in full, including for $u \not \in W$. Under this modified definition of what it 
means for the ID to know that $\Gamma = \gamma$ over $W$,
we would still require that for all $u \in \xi(\gamma)$, if $Y(u) = 1$, then
$\Gamma(u) = \gamma$ (whether or not $u \in W$). So no matter what
$u$ is, we would require that if the device is answering the question, ``does $\Gamma(u) = \gamma$?''
and it answers `yes', then it is correct. However for all $\gamma' \ne \gamma$,
we only require that for all $u \in W \cap \xi(\gamma')$, if $Y(u) = -1$, then
$\delta_\gamma(\Gamma(u)) = Y(u)$. Under this modification, we would allow there to be $u$ outside
of $W$, and $\gamma' \ne \gamma$, where the device is answering the question, ``does $\Gamma(u) = \gamma'$?'' 
and incorrectly answers `no'.
%

\subsection{Physical knowledge that you have physical knoweldge}
\label{sec:common_knoweldge}

%

The final two rules of \emph{S5} are known as the \emph{positive introspection rule}
and the \emph{negative introspection rule}. Intuitively, they stipulate that when an agent knows something, they
know that they know it, and when they don't know something, they know that they don't know it (resp.).

Perhaps the simplest formalization of these rules occurs
in the event-based framework based on Aumann structures~\cite{fagin2004reasoning,zalta2003stanford,aumann1976agreeing,auma99,aubr95,futi91,bibr88}.
As discussed in the introduction, in this framework, in this framework events are defined as subsets of $U$. 
%
%
%
We say ``Alice knows event $E$'' if $A(u) \subseteq E$, where $A$ is Alice's
knowledge operator. So the event, ``Alice knows $E$" is just the union
of all $u$ such that  Alice knows $E$ for $U = u$, i.e., the union of all $u \in U$ such that $A(u) \subseteq E$ . (Note that
even if $u \in E$, $A(u)$ may include points $u' \not \in E$ --- no elements of such a set $A(u)$ are
contained in the event ``Alice knows $E$''.) It is immediate that if Alice knows event $E$, then Alice
knows \{Alice knows $E$\}. This is the (event-based approach version of the) positive introspection
rule. 

Physical knowledge is formulated in terms of functions and subsets $W$, not in terms of events,
so we need to extend it to consider the introspection rules. We say that ``$\DDD$ \textbf{(physically) knows} event $E \subseteq U$''
if $\DDD$ knows $\mathcal{X}_{_E} = 1$ over $E$ for some $\xi$. Next, we must define
what subset of $U$ is represented by ``the event that \{$\DDD$ knows event $E$\}'', i.e., by
 ``the event that \{$\DDD$ knows $\mathcal{X}_{_E} = 1$ over $E$ for some $\xi$\}''.
We adopt the interpretation that this set is the union of all sets $\bar{x}$ that might arise in the image of some
$\xi$ such that $\DDD$ knows $\mathcal{X}_{_E} = 1$ over $E$ for $\xi$. We write this set as
\ba
K(\DDD \; knows \; E) \;\;\; &:=& \bigcup_{\xi \; : \; \DDD \; knows \; \mathcal{X}_{_E} = 1 \; over \; E \; for \; \xi} \xi(1) \cup \xi(-1)
		\nonumber \\
\ea
(Note that in general, $K(\DDD \; knows \; E)$ can include points $u$ that lie outside of $E$.)
This allows us to translate from the event-based framework to the physical knowledge
framework: we say that ``$\DDD$ obeys positive introspection''
if for every event that $\DDD$ knows, $\DDD$ also knows the event $K(\DDD \; knows \; E)$.

\begin{corollary}
For every event that a device $\DDD$ knows, $\DDD$ also knows the event $K(\DDD \; knows \; E)$
\end{corollary}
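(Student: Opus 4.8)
The goal is to show that $\DDD$ knows the event $E' := K(\DDD \; knows \; E)$, i.e.\ (unpacking the definition of ``physically knows an event'') that $\DDD$ knows $\mathcal{X}_{E'} = 1$ over $E'$. By Def.~\ref{def:knows} this amounts to producing a map $\xi' : \{-1,1\} \to \overline{X}$ satisfying conditions (i)--(iii) for $\Gamma = \mathcal{X}_{E'}$, $\gamma = 1$, and $W = E'$. The plan is to build $\xi'$ out of a witness $\xi$ for the hypothesis that $\DDD$ knows $E$ (so $\xi$ satisfies (i)--(iii) for $\mathcal{X}_E = 1$ over $E$), mirroring the way positive introspection is established for Aumann structures, where the event ``$\DDD$ knows $E$'' is automatically a union of information cells.

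First I would record the structural fact that makes the Aumann argument go through and verify that its analogue holds here. By its very definition, $E'$ is a union of sets of the form $\xi(1)$, $\xi(-1)$, each of which is a single cell of the partition $\overline{X}$. Hence $E'$ is a union of $\overline{X}$-cells, so every cell of $\overline{X}$ is either contained in $E'$ or disjoint from it, and $\mathcal{X}_{E'}$ is constant on each cell. In particular, on any cell $\bar{x} \subseteq E'$ we have $\mathcal{X}_{E'} \equiv 1$, so $\delta_1(\mathcal{X}_{E'}) \equiv 1$ and $\delta_{-1}(\mathcal{X}_{E'}) \equiv -1$ throughout $\bar{x}$. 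This pins down what $\xi'$ must look like: since Def.~\ref{def:knows}(ii) forces $\xi'(1)$ to meet $E'$, and any cell meeting $E'$ lies inside $E'$, condition (i) for $\gamma'=1$ forces $Y \equiv 1$ on $\xi'(1)$; symmetrically $\xi'(-1)$ must be a cell contained in $E'$ on which $Y \equiv -1$. Given two such ``uniform-answer'' cells, conditions (ii) and (iii) are then discharged immediately, because $\xi'(\pm 1) \cap E' = \xi'(\pm 1)$ is nonempty and contained in $Y^{-1}(\pm 1)$.

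The main obstacle is therefore producing a cell inside $E'$ on which $Y \equiv 1$ and a cell inside $E'$ on which $Y \equiv -1$. The natural source is the witness $\xi$ itself: its cells $\xi(1)$, $\xi(-1)$ lie in $E'$ by construction, and by Def.~\ref{def:knows}(ii)--(iii) the device answers $Y=1$ on $\xi(1)\cap E$ and $Y=-1$ on $\xi(-1)\cap E$. The difficulty is that these are only the intersections with $E$, not the full cells: Def.~\ref{def:knows}(i) forces $Y$ to track $\mathcal{X}_E$ across each witness cell, so a witness cell can contain points outside $E$ at which $Y$ carries the ``wrong'' sign for the $E'$-inference, and then that cell is not uniform. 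The crux of the proof is thus to exhibit cells of $\overline{X}$ lying inside $E'$ on which $Y$ is genuinely constant --- equivalently, to rule out the possibility that every cell meeting $E'$ carries both $Y$-values (this includes the degenerate subcase $E' = U$, where $\mathcal{X}_{E'}$ is the constant TRUE function and one must fall back on the weakened notion of inference flagged earlier in the paper). I expect this step to be where essentially all the work lives; the argument should use the full freedom in the choice of witness $\xi$ entering the union defining $E'$, and I would look to leverage the requirement that within $W=E$ the device can both be set up to answer ``yes'' and to answer ``no'' in order to extract the two uniform cells. Once their existence is secured, verifying Def.~\ref{def:knows}(i)--(iii) for $\xi'$ is mechanical.
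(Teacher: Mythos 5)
Your reduction is correct, and it is in fact sharper than the paper's own treatment: you rightly observe that $K(\DDD \; knows \; E)$ is a union of cells of $\overline{X}$, so that any witness $\xi'$ for knowledge of that event must consist of one cell inside $K(\DDD \; knows \; E)$ on which $Y \equiv 1$ and one cell inside it on which $Y \equiv -1$. The genuine gap is that you never produce those two uniform cells --- you explicitly defer this step, hoping to extract them from the freedom in choosing the witness $\xi$ --- and in fact they need not exist, so the deferred step cannot be completed. By Def.~\ref{def:knows}(i) a witness cell is only required to satisfy $Y = \delta_{\pm 1}(\mathcal{X}_E)$ on it, so any witness cell containing points outside $E$ carries both $Y$-values, and nothing prevents \emph{every} witness from being of that kind. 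Concretely: take $U = \{a,b,c,d,e\}$ with cells $\overline{X} = \{\{a,b\},\{c,d\},\{e\}\}$, let $Y$ take the values $(1,-1,-1,1,-1)$ on $(a,b,c,d,e)$, and let $E = \{a,c\}$. The unique witness for ``$\DDD$ knows $E$'' is $\xi(1) = \{a,b\}$, $\xi(-1) = \{c,d\}$ (every other assignment violates either condition (i) or the nonemptiness requirements in (ii)--(iii)), so $K(\DDD \; knows \; E) = \{a,b,c,d\}$; but no cell contained in that set has constant $Y$, so there is no admissible $\xi'(1)$, and $\DDD$ does \emph{not} know the event $K(\DDD \; knows \; E)$. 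The missing step in your proposal is therefore not merely unproven; it is false in general.

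The comparison with the paper's proof is instructive. The paper reuses the hypothesis witness $\xi$ itself as $\xi'$, justified by two assertions: that $\mathcal{X}_E(u) = \mathcal{X}_{K(\DDD \; knows \; E)}(u)$ for all $u \in \xi(\pm 1)$, and that $\xi(1) \cap K(\DDD \; knows \; E) = \xi(1) \cap E$ (similarly for $\xi(-1)$). Since $\xi(\pm 1) \subseteq K(\DDD \; knows \; E)$ by construction, both assertions are equivalent to $\xi(\pm 1) \subseteq E$, i.e., to exactly the uniformity property you isolated as the crux; the definition of physical knowledge does not supply that property (the paper itself notes that $K(\DDD \; knows \; E)$ may contain points outside $E$), and it fails for the cells $\{a,b\}$ and $\{c,d\}$ in the example above. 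In short, the paper's proof silently assumes away the very obstacle that your proposal honestly identifies and then gets stuck on. Under the additional hypothesis that every witness satisfies $\xi(\pm 1) \subseteq E$ (equivalently, $K(\DDD \; knows \; E) \subseteq E$), the witness cells are uniform, your $\xi'$ exists, and both your plan and the paper's argument go through; without that hypothesis, the corollary as stated fails.
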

\begin{proof}
Plugging in, ``$\DDD$ knows the event $K(\DDD \; knows \; E)$'' will be established if we can show that
\ba
\DDD \; knows \; \mathcal{X}_{_{K(\DDD \; knows \; E)}}  = 1 \; over \; K(\DDD \; knows \; E)   \nonumber
\ea
for some $\xi$.

Now by hypothesis $\DDD$ {knows event} $E$. So 
there is at least one function $\xi$ such that $\DDD$ knows $\mathcal{X}_{_E} = 1$ over $E$ for $\xi$. 
By Def.~\ref{def:knows}(i), this means that $\DDD$ weakly infers $\mathcal{X}_E$ using $\xi$.
Therefore for both $\gamma \in \B$, $u \in \xi(\gamma) \Rightarrow \delta_\gamma(\mathcal{X}_E(u)) = Y(u)$.
Moreover  for both $\gamma \in \B$, $\mathcal{X}_E(u) = \mathcal{X}_{K(\DDD \; knows \; E)}(u)$
for all $u \in \xi(\gamma)$. Therefore 
$\DDD$ weakly infers $\mathcal{X}_{_{K(\DDD \; knows \; E)}}$ using that same function $\xi$.

Next, note that $\xi(1) \cap K(\DDD \; knows \; E)$ equals $\xi(1) \cap E$. Moreover,
since $\DDD$ knows $\mathcal{X}_{_E} = 1$ over $E$ for $\xi$, 
by Def.~\ref{def:knows}(ii), $\varnothing \;\ne\; \xi(1) \cap E \;\subseteq\; Y^{-1}(1)$. 
So $\varnothing \;\ne\; \xi(1) \cap K(\DDD \; knows \; E) \;\subseteq\; Y^{-1}(1)$. Therefore the condition in
Def.~\ref{def:knows}(ii) holds for knowledge that ${\mathcal{X}}_{K(\DDD \; knows \; E)} = 1$ over $K(\DDD \; knows \; E)$ by using $\xi$ .

Similarly, $\varnothing \;\ne\; \xi(-1) \cap K(\DDD \; knows \; E) \;\subseteq\; Y^{-1}(-1)$.
So all three criteria in Def.~\ref{def:knows} are met for physical knowledge that $\mathcal{X}_{_{K(\DDD \; knows \; E)}} = 1$
over $K(\DDD \; knows \; E)$ by using $\xi$.  
\end{proof}

%
%

%
In this sense, the positive introspection rule of \emph{S5} holds for physical knowledge.

The negative introspection rule cannot hold for the event-based framework. This is because the event
``Alice does not know $E$'' cannot contain any $u$ obeying $u \in A(u)$, and so Alice
can never know that she does not know $E$. (As a result,
investigations in the event-based approach focus on negative introspection of belief rather than negative introspection 
of knowledge.)
Not surprisingly then, it is not clear how to formalize a physical knowledge version of the negative
introspection rule, since that requires defining a function over $U$ that captures
the case that the device does \emph{not} know that $u \in E$.
(N.b., that is not the same as having the device know that $u \not \in E$.)

\section{Future work}

Much more work remains to complete our understanding of inference.
Perhaps most obviously, a lot remains to be investigated concerning the relationship between 
structures like inference complexity (the ID version of Kolmogorov
complexity) and all the results in algorithmic information theory, from Chaitin's
``incompleteness theorem'' to the Halting theorem to computational complexity theory.

There is also a lot of future work to be done concerning physical knowledge. To begin, it
might be useful to extend the 
analysis of physical knowledge to include all the concepts introduced in the
analysis of inference, e.g., strong inference, covariance accuracy, and inference complexity.
Other future work on physical knowledge would be to develop Prop.\ref{prop:first_impl}, Coroll.~\ref{corr:trivial} and 
Coroll.~\ref{prop:impl} into a complete
axiomatization of physical knowledge, i.e., a set of axioms that are logically equivalent
to the definition of physical knowledge. The goal would be to parallel the kind of axiomatization
which has been done for Kripke structures (See Chap. 3 of \cite{fagin2004reasoning}.)
As a final example, it might be illuminating to construct and then investigate physical knowledge 
versions of common knowledge, of distributed knowledge, and associated results in conventional epistemic logic,
e.g., Aumann's famous proof that ``no two Bayesians can disagree''~\cite{aumann1976agreeing}.

There are also many ways to extend the concept of physical knowledge to capture
attributes of our physical world, like space and time. As an example,
%
suppose we are given a \textbf{distance function} $D(\gamma, \gamma')
: \Gamma(U) \times \Gamma(U) \rightarrow {R}^+$. We want to use that $D(., .)$ to define the distance
between what a given ID ``says'' that $\Gamma(u)$ is, and what $\Gamma(u)$ really is.
One way to do this builds on the physical knowledge formalism. For simplicity, assume $W$ refines $\Gamma$. 
We say that $\DDD$ \textbf{claims} $\gamma$ over $W$ if $\exists\;  \xi : \Gamma(U) \rightarrow \bar{X}$
such that for all $u \in \xi(\gamma) \cap W$, $Y(u) = 1$, and for all $\gamma' \ne \gamma, u \in \xi(\gamma') \cap
W$, $Y(u) = -1$. (Note that there may be more than one $\gamma$ that $\DDD$ claims over $W$.)
We define the \textbf{error} of $\DDD$ over $W$ (for $\gamma$) as the smallest $\epsilon \in \R$ such that
$\DDD$ claims $\gamma$ over $W$ and $D(\Gamma(W), \gamma) = \epsilon$.

By supposing a probability distribution over $U$ as well as a distance function, we can analyze
concepts like the expected error of the claim of a device, the variance of what a device claims, etc.
In particular, it may be possible to use an error function to extend
the analysis that led to Prop.~\ref{prop:prop6}, to investigate the possible relationship
between inference and Heisenberg's uncertainty principle. 
(As part of such an investigation,
it may be helpful to focus on the specific case where $U$ is a Hilbert space.)

Other future work is to investigate the use of inference devices in general, 
and physical knowledge in particular,
as a formalization of ``semantic information'', a concept that has been extensively
debated by people ranging from the founders of information theory and cybernetics~\cite{shannon1948mathematical}
to philosophers~\cite{stanford_encyl_semantic_info,mac1969information} to people working in
statistical physics~\cite{wolpert_banff_fqxi_semantic_info_2016,rovelli2016meaning,crutchfield1992semantics,crutchfield1992knowledge}.

Finally, despite the relation of physical knowledge with epistemic
logic, physical knowledge is designed only to capture properties of knowledge
concerning physical reality. It is not designed to capture properties of knowledge
concerning mathematical systems, e.g., predicate logic. However it may be worth investigating
its application to such systems. For example, one could take
each ``history'' in a reality to be a (perhaps infinite) string over some fixed
alphabet. $U$ might then be defined as the set of all strings
that are ``true'' under some encoding that translates a string into
axioms and associated logical implications. Then an inference device
would be a (perhaps fallible) theorem-proving algorithm, embodied
within $U$ itself. The results presented above would then concern
the relation among such theorem-proving algorithms.

\section*{Acknowledgements}

I would like to thank Philippe Binder, Artemy Kolchinsky, Brendan Tracey, Alexi Parizeau
and especially Walt Read for many useful discussions. In particular,
Prop.~\ref{prop:cov_lb}, Ex.~\ref{ex:cov_lb}, and Prop.~\ref{prop:inf_comp_entropy} are due to Walt.
This work was supported by the Santa Fe Institute, Grant No. FQXi-RFP-1622 from the FQXi foundation, 
Grant No. FQXi-RFP3-1349 from the FQXi foundation, and the Silicon Valley Foundation.

\bibliographystyle{amsplain}

\begin{thebibliography}{10}

\bibitem{stanford_encyl_semantic_info}
\emph{Stanford encyclopedia of philosophy entry on semantic information},
  $https://plato.stanford.edu/entries/information-semantic/$, 2017.

\bibitem{wiki_boolean_algebra}
\emph{Wikipedia 2017 entry on boolean algebra}, 2017,
  $https://en.wikipedia.org/wiki/Boolean\_algebra$.

\bibitem{aaronson2013philosophers}
Scott Aaronson, \emph{Why philosophers should care about computational
  complexity}, Computability: Turing, G{\"o}del, Church, and Beyond (2013),
  261--327.

\bibitem{agte05}
A.~Aguirre and M.~Tegmark, \emph{Multiple universes, cosmic coincidences, and
  other dark matters}, hep-th/0409072, 2005.

\bibitem{auma99}
R.~J. Aumann, \emph{Interactive epistemology ii: Probability}, Int. J. Game
  Theory \textbf{28} (1999), 301--314.

\bibitem{aubr95}
R.~J. Aumann and A.~Brandenburger, \emph{Epistemic conditions for nash
  equilibrium}, Econometrica \textbf{63} (1995), no.~5, 1161--1180.

\bibitem{aumann1976agreeing}
Robert~J Aumann, \emph{Agreeing to disagree}, The annals of statistics
  \textbf{4} (1976), no.~6, 1236--1239.

\bibitem{barrow2011godel}
John~D Barrow, \emph{Godel and physics}, Kurt G{\"o}del and the Foundations of
  Mathematics: Horizons of Truth (2011), 255.

\bibitem{bind08}
P.~Binder, \emph{Theories of almost everything}, Nature \textbf{455} (2008),
  884--885.

\bibitem{bibr88}
K.~Binmore and A.~Brandenburger, \emph{Common knowledge and game theory},
  ST/ICERD Discussion Paper 88/167, London School of Economics, 1988.

\bibitem{carr07}
B.~Carr (ed.), \emph{Universe or multiverse?}, Cambridge University Press,
  2007.

\bibitem{caticha2011entropic}
Ariel Caticha, \emph{Entropic dynamics, time and quantum theory}, Journal of
  Physics A: Mathematical and Theoretical \textbf{44} (2011), no.~22, 225303.

\bibitem{chaitin2004algorithmic}
Gregory~J Chaitin, \emph{Algorithmic information theory}, vol.~1, Cambridge
  University Press, 2004.

\bibitem{crutchfield1992knowledge}
James~P Crutchfield, \emph{Knowledge and meaning... chaos and complexity},
  Modeling complex phenomena (1992), 66.

\bibitem{crutchfield1992semantics}
\bysame, \emph{Semantics and thermodynamics}, Santa Fe Institute studies in the
  sciences of complexity - proceedings volume, vol.~12, Addison Wesley, 1992,
  pp.~317--317.

\bibitem{fagin2004reasoning}
Ronald Fagin, Joseph~Y Halpern, Yoram Moses, and Moshe Vardi, \emph{Reasoning
  about knowledge}, MIT press, 2004.

\bibitem{frieden2004science}
B~Roy Frieden, \emph{Science from fisher information: A unification}, Cambridge
  University Press, 2004.

\bibitem{futi91}
D.~Fudenberg and J.~Tirole, \emph{Game theory}, MIT Press, Cambridge, MA, 1991.

\bibitem{goyal2010information}
Philip Goyal, \emph{From information geometry to quantum theory}, New Journal
  of Physics \textbf{12} (2010), no.~2, 023012.

\bibitem{goyal2012information}
\bysame, \emph{Information physics---towards a new conception of physical
  reality}, Information \textbf{3} (2012), no.~4, 567--594.

\bibitem{hartle2005physics}
James~B Hartle, \emph{The physics of now}, American journal of physics
  \textbf{73} (2005), 101.

\bibitem{hopcroft2000jd}
John~E Hopcroft, Rajeev Motwani, and Ullman Rotwani, \emph{Jd: Introduction to
  automata theory, languages and computability}, 2000.

\bibitem{livi08}
M.~Li and Vitanyi P., \emph{An introduction to kolmogorov complexity and its
  applications}, Springer, 2008.

\bibitem{lloyd1990valuable}
Seth Lloyd, \emph{Valuable information}, Complexity, entropy and the physics of
  information (Wojciech~H Zurek, ed.), 1990, pp.~193--197.

\bibitem{lloyd2000ultimate}
\bysame, \emph{Ultimate physical limits to computation}, Nature \textbf{406}
  (2000), no.~6799, 1047--1054.

\bibitem{lloyd2006programming}
\bysame, \emph{Programming the universe: a quantum computer scientist takes on
  the cosmos}, Vintage, 2006.

\bibitem{lloyd2012turing}
\bysame, \emph{A turing test for free will}, Phil. Trans. R. Soc. A
  \textbf{370} (2012), no.~1971, 3597--3610.

\bibitem{mack60}
Donald~M. MacKay, \emph{On the logical indeterminacy of a free choice}, Mind,
  New Series \textbf{69} (1960), no.~273, 31--40.

\bibitem{mac1969information}
\bysame, \emph{Information, mechanism and meaning},  (1969).

\bibitem{parikh1987knowledge}
Rohit Parikh, \emph{Knowledge and the problem of logical omniscience.}, ISMIS,
  vol.~87, 1987, pp.~432--439.

\bibitem{popp88}
K.~Popper, \emph{The impossibility of self-prediction}, The Open Universe: From
  the Postscript to the Logic of Scientific Discovery, Routledge, 1988, p.~68.

\bibitem{rovelli2016meaning}
Carlo Rovelli, \emph{Meaning= information+ evolution}, arXiv preprint
  arXiv:1611.02420 (2016).

\bibitem{schmidhuber2000algorithmic}
J{\"u}rgen Schmidhuber, \emph{Algorithmic theories of everything}, arXiv
  preprint quant-ph/0011122 (2000).

\bibitem{shannon1948mathematical}
Claude~Elwood Shannon and Warren Weaver, \emph{A mathematical theory of
  communication}, 1948.

\bibitem{smol02}
L.~Smolin, \emph{The life of the cosmos}, Weidenfeld and Nicolson, 2002.

\bibitem{tegmark2008mathematical}
Max Tegmark, \emph{The mathematical universe}, Foundations of Physics
  \textbf{38} (2008), no.~2, 101--150.

\bibitem{whee90}
J.H. Wheeler, \emph{Information, physics, quantum: The search for links},
  Complexity, Entropy, and the Physics of Information (W.~H. Zurek, ed.),
  Addison-Wesley, 1990.

\bibitem{wolp01}
D.~H. Wolpert, \emph{Computational capabilities of physical systems}, Physical
  Review E \textbf{65} (2001), 016128.

\bibitem{wolp08b}
D.~H. Wolpert, \emph{Physical limits of inference}, Physica D \textbf{237}
  (2008), 1257--1281, More recent version at http://arxiv.org/abs/0708.1362.

\bibitem{wolp10}
\bysame, \emph{Inference concerning physical systems}, CiE Proceedings on
  Programs, proofs, and processes, 2010, pp.~438--447.

\bibitem{wolpert_banff_fqxi_semantic_info_2016}
David~H. Wolpert, \emph{Presentation on statistical physics definition of
  semantic information},
  $http://fqxi.org/data/documents/conferences/2016-talks/Wolpert.pdf$, 2016.

\bibitem{zalta2003stanford}
Edward~N Zalta et~al., \emph{Stanford encyclopedia of philosophy}, 2003.

\bibitem{zenil2012computable}
Hector Zenil, \emph{A computable universe: Understanding and exploring nature
  as computation}, World Scientific Publishing Co., Inc., 2012.

\bibitem{zure89b}
W.~H. Zurek, \emph{Algorithmic randomness and physical entropy}, Phys. Rev. A
  \textbf{40} (1989), 4731--4751.

\bibitem{zure89a}
\bysame, \emph{Thermodynamic cost of computation, algorithmic complexity and
  the information metric}, Nature \textbf{341} (1989), 119--124.

\bibitem{zurek1990complexity}
Wojciech~H Zurek (ed.), \emph{Complexity, entropy and the physics of
  information}, Addison-Wesley, 1990.

\bibitem{zuse1969rechnender}
Konrad Zuse, \emph{Rechnender raum (calculat1ing space)},  (1969).

\end{thebibliography}

\providecommand{\bysame}{\leavevmode\hbox to3em{\hrulefill}\thinspace}
\providecommand{\MR}{\relax\ifhmode\unskip\space\fi MR }
\providecommand{\MRhref}[2]{%
  \href{http://www.ams.org/mathscinet-getitem?mr=#1}{#2}
}
\providecommand{\href}[2]{#2}

\end{document}